\documentclass[a4paper]{amsproc}
\usepackage{amssymb}
\usepackage{amscd}
\usepackage{amsfonts}
\usepackage{amssymb}
\usepackage{cite}
\usepackage{amsmath}
\usepackage{epsfig}
\usepackage{graphicx, subfigure}
\usepackage{color, graphics}
\usepackage{pdfsync}
\usepackage{hyperref}
\usepackage[all,cmtip]{xy}
\usepackage{setspace}

\theoremstyle{plain}
 \newtheorem{thm}{Theorem}[section]
 \newtheorem{prop}{Proposition}[section]

\theoremstyle{definition}
 
 \newtheorem{dfn}{Definition}[section]
\theoremstyle{remark}
 \newtheorem{rem}{Remark}[section]
 \numberwithin{equation}{section}

\renewcommand{\le}{\leqslant}
\renewcommand{\ge}{\geqslant}\renewcommand{\geq}{\geqslant}

\textwidth 5.5 truein \oddsidemargin .5 truein \evensidemargin .5 truein \topmargin 0 truein \textheight 8.8 in
\allowdisplaybreaks

\newcommand{\R}{\mathbb{ R}}

\newcommand{\N}{\mathcal{ N}}

\newcommand{\A}{\mathbb{A}}

\DeclareMathOperator{\diag}{diag}

\DeclareMathOperator{\Ad}{Ad}
\DeclareMathOperator{\tr}{tr}


\def\diag{\mathrm{diag}}

\def\rank{\mathrm{rank}}
\def \i{{\rm i}}

\def\px1{p_{x_1}}
\def\px2{p_{x_2}}
\def\pu1{p_{u_1}}

\title[Gyroscopic Chaplygin systems]{Gyroscopic Chaplygin systems and integrable magnetic flows on spheres}

\subjclass[2010]{37J60, 37J35, 33E05, 70E45, 70E05, 70F25, 53Z05}

\keywords{Nonholonimic dynamics; rolling without sliding and twisting;
 Voronec equations; Chaplygin systems; Hamiltonization; invariant measure; magnetic geodesic flows on a sphere; gyroscopic forces; gyroscope; integrability; elliptic functions.}

\author[Dragovi\'c, Gaji\'c, Jovanovi\'c]{\bfseries Vladimir Dragovi\'c, Borislav Gaji\'c, Bo\v zidar Jovanovi\'c}

\address{
Department of Mathematical Sciences, University of Texas at Dallas,  USA
\\
Mathematical Institute, Serbian Academy of Sciences and Arts, Belgrade, Serbia}
\email{Vladimir.Dragovic@utdallas.edu}

\address{
Mathematical Institute, Serbian Academy of Sciences and Arts, Belgrade, Serbia}
\email{gajab@mi.sanu.ac.rs}

\address{
Mathematical Institute, Serbian Academy of Sciences and Arts, Belgrade, Serbia}
\email{bozaj@mi.sanu.ac.rs}

\begin{document}

\begin{abstract}  We introduce and study the Chaplygin systems with gyroscopic forces. This natural class of nonholonomic systems has not been treated before. We put a special emphasis on the important subclass of such systems with magnetic forces.
The existence of an invariant measure and the problem of Hamiltonization are studied,  both within the Lagrangian and the almost-Hamiltonian framework. In addition, we introduce problems of rolling of a ball with the gyroscope without slipping and twisting over a plane and over a sphere in $\R^n$ as examples of gyroscopic $SO(n)$--Chaplygin systems. We describe an invariant measure and provide  examples of $SO(n-2)$--symmetric systems (ball with gyroscope) that
allow the Chaplygin Hamiltonization.  In the case of additional $SO(2)$--symmetry we prove that the obtained magnetic geodesic flows on the sphere $S^{n-1}$ are integrable.
In particular, we introduce the generalized Demchenko case in $\R^n$,
where the inertia operator of the system is proportional to the identity operator.
The reduced systems are automatically Hamiltonian and represent the magnetic geodesic flows on the spheres $S^{n-1}$ endowed with the round-sphere metric,  under the influence of a homogeneous magnetic field. The magnetic geodesic flow problem on the two-dimensional sphere is well known, but for $n>3$ was not studied  before. We perform explicit integrations in elliptic functions of the systems for $n=3$ and $n=4$, and provide the case study of the solutions in both situations.
\end{abstract}
\maketitle

\tableofcontents

\section{Introduction}

\subsection{Nonholonomic Lagrangian systems with gyroscopic forces}
The main aim of this paper is to introduce and study a general setting for
Chaplygin systems with gyroscopic forces, with a special  emphasis on the important subclass of the Chaplygin systems with magnetic forces. This class of nonholonomic systems, although quite natural, has not been treated before.

In his first PhD thesis, Vasilije Demchenko \cite{Dem1924, DGJ}, studied the rolling of a ball with a gyroscope without slipping over a sphere in  $\R^3$, by using the Voronec equations \cite{Vor1902, Vor1911, Vor1912}.
Inspired by this thesis, we consider the rolling of a ball with a gyroscope without slipping and twisting over a sphere in  $\R^n$.
This will provide us with examples of gyroscopic $SO(n)$--Chaplygin systems that reduce to integrable magnetic geodesic flows on a sphere $S^{n-1}$.

Let $(Q,\mathbf G)$ be a  Riemannian manifold.
 Consider a Lagrangian nonholonomic system $(Q,L_1,\mathcal D)$, where the constraints define a nonintegrable
distribution $\mathcal D$ on $Q$. The constraints are homogeneous and do not depend on time.
The Lagrangian, along with the difference of the kinetic and potential energy, contains an additional term, which is linear in velocities:
\[
L_1(q,\dot q)=\frac12(\mathbf G(\dot q),\dot q)+(\mathbf A,\dot q)-V(q).
\]
Here and throughout the text, $(\cdot,\cdot)$ denotes the parring between appropriate dual spaces,
while $\mathbf A$ is a one-form on $Q$. The metric $\mathbf G$ is also considered as a mapping $TQ\to T^*Q$.

A smooth path $q(t)\in Q,\; t\in\Delta$ is called
{\it admissible}  if  the velocity
$\dot q(t)$ belongs to ${\mathcal D}_{q(t)}$ for all $t\in\Delta$.
An admissible path $q(t)$ is a {\it
motion of the natural mechanical nonholonomic system} $(Q,L_1,\mathcal D)$ if it satisfies the Lagrange-d'Alembert equations
\begin{equation}\label{L1}
\delta L_1=\big(\frac{\partial L_1}{\partial q}-\frac{d}{dt}\frac{\partial L_1}{\partial \dot q},\delta q\big)=0, \quad \text{for all} \quad \delta q\in\mathcal D_q.
\end{equation}

The equations \eqref{L1} are equivalent to the equations
\begin{equation}\label{L}
\delta L=\big(\frac{\partial L}{\partial q}-\frac{d}{dt}\frac{\partial L}{\partial \dot q},\delta q\big)=\mathbf F(\dot q,\delta q), \quad \text{for all} \quad \delta q\in\mathcal D_q,
\end{equation}
where $L$ is the part of the Lagrangian $L_1$ which does not contain the term linear in velocities:
\begin{equation*}\
L(q,\dot q)=\frac12(\mathbf G(\dot q),\dot q)-V(q).
\end{equation*}
Here the additional force $\mathbf F(\dot q,\delta q)$  is defined as the exact two-form
\[
\mathbf F=d\mathbf A,
\]
where $\mathbf A$ is the one-form from the linear in velocities term of the Lagrangian $L_1$. We will subsequently consider a more general
class of systems where an additional force is given as a two-form which is neither exact nor even closed.

Systems with an additional force  defined by a closed  two-form $\mathbf F$ and without nonholonomic constraints are very well studied.
The corresponding Hamiltonian flows are usually called magnetic flows or twisted flows. For the problem of integrability
 of magnetic flows, see e.g.
 \cite{BK2017, BJ2008, T2016, MSY2008, S2002}.
Following tradition, we introduce

\begin{dfn}
Let $\mathbf F$ be a 2-form on $Q$. We refer to a system $(Q,L, \mathbf F, \mathcal D)$ as a \emph{natural mechanical nonholonimic system with gyroscopic forces}.
 The
additional gyroscopic force $\mathbf F(\dot q,\delta q)$ is called \emph{magnetic} if the form $\mathbf F$ is closed,
\[
d\mathbf F=0,
\]
and in this case we say that the system $(Q,L, \mathbf F, \mathcal D)$ is a \emph{natural mechanical nonholonomic system with a magnetic force}.
\end{dfn}

The equations of motion of a natural mechanical  nonholonomic system with a gyroscopic force $(Q,L, \mathbf F, \mathcal D)$ are given in \eqref{L}.

Starting from the notion of $G$--Chaplygin systems for nonholonomic systems without gyroscopic forces (see \cite{Baksa, St, Koi1992, BKMM, CCL2002, Naranjo2020}),
we introduce the following

\begin{dfn}
Assume that $Q$ is a principal bundle over $S$ with respect to a free action of a Lie group $G$, $\pi\colon Q\to S=Q/G$, and that $L$ and $\mathbf F$ are $G$--invariant.
Suppose that $\mathcal D$ is a principal connection, that is,  $\mathcal D$ is $G$--invariant, transverse to the orbits of the $G$--action,  and
$\rank\,\mathcal D=\dim S$.
Then we refer to $(Q,L,\mathcal D,G, \mathbf  F)$ as a \emph{gyroscopic $G$--Chaplygin system}.
\end{dfn}

Obviously, a gyroscopic $G$--Chaplygin system $(Q,L,\mathcal D,G, \mathbf  F)$ is $G$--invariant and reduces to the tangent space of the base-manifold $S=Q/G$.

\subsection{Outline and results of the paper}
In Section \ref{sec:gyroscopicfibredNOVA} we consider gyroscopic nonholonomic systems on fiber spaces. In Section
\ref{sec3} we employ them to describe a reduction procedure for the gyroscopic $G$--Chaplygin systems
(Theorem \ref{redukcijaSistema}).
The Chaplygin systems have a natural geometrical framework as connections on principal bundles (see \cite{Koi1992}). On the other hand,  nonholonomic systems were incorporated into the geometrical framework
of the Ehresmann connections on fiber spaces in \cite{BKMM}. In this paper, we combine the approach of \cite{BKMM} with the Voronec nonholonomic equations, see  \cite{Vor1902}.

In Section \ref{sec4} we derive the equations of motion of the reduced gyroscopic $G$--Chaplygin systems in an almost-Hamiltonian form and study the existence of an invariant measure
(Theorem \ref{novamera}).  A closely related problem is
the Hamiltonization of nonholonomic systems
(see \cite{Chaplygin1911, St, BM, BM2008, BN, BBM, BMT, BBM2013, EKMR2005, CCL2002, FJ2004, Jov2019, Jov2018b}).
In Section \ref{sec5} we consider the Chaplygin reducing multiplier and the time reparametrization  of magnetic Chaplygin systems, both within the Lagrangian and the Hamiltonian framework (see Theorem \ref{opsta}).

In Section \ref{sec6} we briefly review the results about integrable nonholonomic problems of rolling of a ball with the
gyroscope, without slipping and twisting, over a plane and over a sphere in the three-dimensional space. In particular, we present the Demchenko integrable case \cite{Dem1924} and the Zhukovskiy condition for the system  \cite{Zhuk1893}.

In Section \ref{7} we introduce the problems of rolling of a ball with a
gyroscope, without slipping and twisting, over a plane and over a sphere in $\R^n$.
We describe the reduction (Propositions \ref{redukcija} and \ref{redukcija2}) and an invariant measure (Proposition \ref{invarijantnaMera})  of these new systems.
The obtained systems are examples of gyroscopic $SO(n)$--Chaplygin systems that reduce to magnetic flows.

In Section \ref{sec8} we provide  examples of  $SO(n-2)$--symmetric systems (ball with gyroscope) that
allow the Chaplygin Hamiltonization (Theorem \ref{hamiltonizacija}).
We also prove the integrability of the obtained magnetic geodesic flows on a sphere in $\R^n, \, n\ge3$ in the case of $SO(2)\times SO(n-2)$--symmetry (Theorem \ref{integrabilni}).
Note that the phase space of a nonholonomic system that is integrable after the Chaplygin Hamiltonization is foliated by $d$-dimensional invariant tori,
where the system is subject to a non-uniform
quasi-periodic motion of the form
\begin{equation}
\dot\varphi_1=\omega_1/\Phi(\varphi_1,\dots,\varphi_d),\dots,
\dot\varphi_d=\omega_d/\Phi(\varphi_1,\dots,\varphi_d), \qquad \Phi>0,
\label{Jacobi}
\end{equation}
with some $d, \, d\le n$.
In Theorem \ref{integrabilni} we present two examples of such systems, one with  $d=2$ and $n=3$ and another one with $d=3$ and any $n>3$.

Finally, in Section \ref{sec9} we consider  the case when the inertia operator for systems is $SO(n)$--invariant, i.e. it satisfies the Zhukovskiy condition in $\R^n$ with an additional non-twisting condition. We will refer to such  systems as {\it the generalized Demchenko case without twisting in $\R^n$}. The reduced systems are automatically Hamiltonian. They represent the magnetic geodesic flow on a sphere $S^{n-1}$ endowed with the round-sphere metric,  under a influence of the homogeneous magnetic field placed in the ambient space $\R^n$.
The magnetic geodesic flow problem on a two-dimensional sphere is well known (see  \cite{S2002}).
However, the magnetic geodesic flow problems for $n>3$ have not been studied before.
We prove the complete integrability of the system on the three-dimensional sphere (Theorem \ref{3sfera}). We conclude the paper with a detailed analysis of the motion of the generalized Demchenko systems without twisting for $n=3$ and $n=4$ in terms of elliptic functions.

\section{Nonholonomic systems with gyroscopic forces on fibred spaces}\label{sec:gyroscopicfibredNOVA}

\subsection{The Voronec equations}\label{VorPr}

Following Demchenko\footnote{Demchenko's PhD advisor, Anton Bilimovi\'c (1879-1970), was a distinguished student of Peter Vasilievich
Voronec (1871-1923) and one of the founders of Belgrade's Mathematical Institute. We note that some recent results (see \cite{BT2020, BMT2020}) are
inspired by Bilimovi\' c's work in nonholonomic mechanics \cite{Bilimovic1913a, Bilimovic1913b, Bilimovic1914, Bilimovic1915, Bilim}.}
\cite{Dem1924, DGJ}, we  recall the Voronec equations for nonholonomic systems \cite{Vor1902}.
We will then employ them to formulate the reduced equations  of gyroscopic Chaplygin systems.
Here we assume that the constraints may be time-dependent and nonhomogeneous.

Let $q=(q_1,\dots,q_{n+k})$ be local coordinates of the configuration space $Q$.
Consider a nonholonomic system with kinetic energy $T=T(t,q,\dot q)$, generalized forces $Q_s=Q_s(t,q,\dot q)$  that correspond to
coordinates $q_s$, and time-dependent nonhomogeneous nonholonomic constraints
\begin{equation}\label{v1}
\dot q_{n+\nu}=\sum_{i=1}^n a_{\nu i} (q,t) \dot q_i + a_\nu (q,t),  \qquad \nu=1,2,\dots,k.
\end{equation}

Let $T_c$ be the kinetic energy $T$ after imposing the constraints \eqref{v1}. Let $K_{\nu}$ be the partial derivatives of the kinetic energy $T$
with respect to $\dot q_\nu$, $\nu=1,2,\dots,k$, restricted to the constrained subspace. We assume that the constraints \eqref{v1} are imposed after the differentiation and get:
\begin{align*}
& T_c(t,q_1,\dots,q_{n+k},\dot q_1,\dots,\dot q_n)=T(t,q,\dot q)\vert_{\dot q_{n+\nu}=\sum_{i=1}^n a_{\nu i} (q,t) \dot q_i + a_\nu (q,t)},  \\
& K_\nu(t,q_1,\dots,q_{n+k},\dot q_1,\dots,\dot q_n)=\frac{\partial T}{\partial \dot q_{n+\nu}}(t,q,\dot q)\vert_{\dot q_{n+\nu}=\sum_{i=1}^n a_{\nu i} (q,t) \dot q_i + a_\nu (q,t)}.
\end{align*}

The equations of motion of the given noholonomic system can be presented in a  form which does not use
the Lagrange multipliers:
\begin{align}
& \frac{d}{dt}\frac{\partial T_c}{\partial \dot q_i}=\frac{\partial T_c}{\partial q_i} + Q_i+\sum_{\nu=1}^k a_{\nu i}\big(\frac{\partial T_c}{\partial q_{n+\nu}}+Q_{n+\nu}\big)+ \sum_{\nu=1}^k K_\nu\big(\sum_{j=1}^n A_{ij}^{(\nu)} \dot q_j+A_j^{(\nu)}\big). \label{v4}
\end{align}
The derivation of these equations is based on the Lagrange-d'Alembert principle and follows Voronec \cite{Vor1902}.
Here $i=1,\dots,n$. The components $A_{ij}^{(\nu)}$ and $A_i^{(\nu)}$ are functions of the time $t$ and the coordinates $q_1,\dots,q_{n+k}$ given by
\begin{align*}
& A_{ij}^{(\nu)}=  \big(\frac{\partial a_{\nu i}}{\partial q_j}+\sum_{\mu=1}^k a_{\mu j}\frac{\partial a_{\nu i}}{\partial q_{n+\mu}}\big)
                   -\big(\frac{\partial a_{\nu j}}{\partial q_i}+\sum_{\mu=1}^k a_{\mu i}\frac{\partial a_{\nu j}}{\partial q_{n+\mu}}\big), \\
&  A_i^{(\nu)}= \big(\frac{\partial a_{\nu i}}{\partial t}+\sum_{\mu=1}^k a_{\mu }\frac{\partial a_{\nu i}}{\partial q_{n+\mu}}\big)
                   -\big(\frac{\partial a_{\nu}}{\partial q_i}+\sum_{\mu=1}^k a_{\mu i}\frac{\partial a_{\nu}}{\partial q_{n+\mu}}\big).
\end{align*}

When all considered objects do not depend on the variables $q_{n+\nu}, \, \nu=1,2,\dots,k$, we have  a \emph{Chaplygin system}. Then the equations \eqref{v4}
are called the  \emph{Chaplygin equations}. The Voronec and the Chaplygin equations, along with the
equations of  nonholonomic systems written in terms of quasi-velocities, known as \emph{the Euler-Poincar\'e-Chetayev-Hamel} equations,
 form  core tools in the study of nonholonomic mechanics (see \cite{NeFu, BKMM, deLeon, EKMR2005, EK2019, Zenkov2016}).

\subsection{The Ehresmann connections and systems with gyroscopic forces}\label{sec:gyroscopicfibred}
Consider a natural mechanical nonholonimic system with a gyroscopic  force $(Q,L, \mathbf F, \mathcal D)$.
After Bloch, Krishnaprasad, Marsden, and Murray \cite{BKMM},
we assume that $Q$ has a structure of a fiber bundle $\pi\colon Q\to S$ over a base manifold $S$ and that the distribution $\mathcal D$ is transverse to the fibers
of $\pi$:
\[
T_q Q=\mathcal D_q \oplus \mathcal V_q, \qquad \mathcal V_q=\ker d\pi(q).
\]
The space $\mathcal V_q$ is called the  the \emph{vertical space} at $q$.
The distribution $\mathcal D$ can be seen as the kernel of
a vector-valued one-form $A$ on $Q$, which defines the \emph{Ehresmann connection}, that satisfies

\begin{itemize}

\item[(i)] $A_q\colon T_q Q \to\mathcal V_q$ is a linear mapping, $q\in Q$;

\item[(ii)] $A$ is a projection: $A(X_q)=X_q$, for all $X_q\in\mathcal V_q$.

\end{itemize}

The distribution $\mathcal D$ is called \emph{the horizontal space} of the Ehresmann connection $A$.
By $X^h$ and $X^v$ we denote the horizontal and the vertical component of the vector field $X\in\mathfrak{X}(Q)$.
The \emph{curvature} $B$ of the connection $A$ is a vertical vector-valued two-form defined by
\[
B(X,Y)=-A([X^h,Y^h]).
\]

Let $\dim Q=n+k$ and $\dim S=n$.
There exist local ``adapted" coordinates $q=(q_1,\dots,q_{n+k})$ on $Q$, such that the projection $\pi\colon Q\to S$ and the constraints defining $\mathcal D$ are given by
\begin{align*}
& \pi\colon (q_1,\dots,q_n,q_{n+1},\dots,q_{n+k}) \longmapsto (q_1,\dots,q_n),\\
& \dot q_{n+\nu}=\sum_{i=1}^n a_{\nu i} (q) \dot q_i ,  \qquad\qquad \nu=1,\dots,k.
\end{align*}
Here $(q_1,\dots,q_n)$ are the local coordinates on $S$.
Then, locally, we also have
\begin{align*}
& A=\sum_{\nu=1}^k \omega^\nu \frac{\partial}{\partial q_{n+\nu}}, \quad \omega^\nu=dq_{n+\nu}-\sum_{i=1}^n a_{\nu i} dq_i, \\
& X^h=\big(\sum_{l=1}^{n+k}X_l\frac{\partial}{\partial q_{l}}\big)^h=\sum_{i=1}^{n}X_i\frac{\partial}{\partial q_{i}}
+\sum_{\nu=1}^k\sum_{i=1}^{n}a_{\nu i}X_i\frac{\partial}{\partial q_{n+\nu}},\\
& X^v=\big(\sum_{l=1}^{n+k}X_l\frac{\partial}{\partial q_{l}}\big)^v=
\sum_{\nu=1}^k\big(X_{n+\nu}-\sum_{i=1}^{n}a_{\nu i}X_i\big)\frac{\partial}{\partial q_{n+\nu}},\\
& B(\frac{\partial}{\partial q_{i}},\frac{\partial}{\partial q_{j}})=\sum_{\nu=1}^k B^\nu_{ij}\frac{\partial}{\partial q_{n+\nu}},\\
& \mathbf F=\sum_{1\le s<l \le n+k} F_{sl} dq_s\wedge dq_l.
\end{align*}
Here $B_{ij}^\nu(q)=A^{(\nu)}_{ij}(q)$, where  $A^{(\nu)}_{ij}(q)$ come
from the Voronec equations \eqref{v4} with homogeneous constraints, which do not depend on time.
The generalized forces  $Q_s=Q_s(q,\dot q)$,  $s=1,\dots,n+k$ are the sums of the potential and the gyroscopic forces
\[
Q_s=Q_s^V+Q_s^\mathbf F, \quad Q_s^V=-{\partial V}/{\partial q_s}, \quad Q_s^\mathbf F=\sum_{l=1}^{n+k} F_{sl} \dot q_l.
\]
The Voronec equations \eqref{v4} take the form:
\begin{equation}
\frac{d}{dt}\frac{\partial L_c}{\partial \dot q_i}=\frac{\partial L_c}{\partial q_i} +\sum_{\nu=1}^k a_{\nu i}\frac{\partial L_c}{\partial q_{n+\nu}}
+\sum_{\nu=1}^k\sum_{j=1}^n\frac{\partial L}{\partial\dot q_{n+\nu}} B_{ij}^{\nu} \dot q_j+Q_i^\mathbf F+\sum_{\nu=1}^k a_{\nu i}Q_{n+\nu}^\mathbf F,  \label{v4*}
\end{equation}
$(i=1,\dots,n)$, where $L_c$ is the constrained Lagrangian
$L_c=L(q,\dot q^h)=T_c-V$.
In a compact form, the equations can be expressed as:\footnote{One can compare the form of equations \eqref{v5*} with the compact
form of the  Voronec equations
 obtained from the Voronec principle, see e.g. \cite{DGJ}.}
\begin{equation}\label{v5*}
\delta L_c=\mathbb FL(q,\dot q)( B(\dot q,\delta q))+\mathbf F(\dot q,\delta q)
\end{equation}
for all virtual displacements
\[
\delta q=\sum_{s=1}^{n+k} \delta q_s\frac{\partial}{\partial q_s}\in \mathcal D_q.
\]
Here $\delta L_c$ is the variational derivative of the constrained Lagrangian along the variation $\delta q$ and $\mathbb FL$ is the fiber derivative of $L$:
\begin{align*}
&\delta L_c=\big(\frac{\partial L_c}{\partial q}-\frac{d}{dt}\frac{\partial L_c}{\partial \dot q},\delta q\big)=
\sum_{s=1}^{n+k}\big(\frac{\partial L_c}{\partial q_s}-\frac{d}{dt}\frac{\partial L_c}{\partial \dot q_s}\big)\delta q_s,\\
&\mathbb FL(q,X)(Y)=\frac{d}{ds}\vert_{s=0}L(q,X+sY), \quad X,Y \in T_q Q,    \\
&\mathbb FL(q,\dot q)(B(\dot q,\delta q))=\sum_{\nu=1}^k \frac{\partial L}{\partial \dot q_{n+\nu}}(q,\dot q)B^\nu(\dot q,\delta q).
\end{align*}
See \cite{BKMM} for the case without gyroscopic two-form $\mathbf F$.

Note that,
even in the case when the two form $\mathbf F$ is exact $\mathbf F=d\mathbf A$,
it is convenient to use the Lagrangian $L$ and the form of the equations \eqref{v5*}, rather then the Lagrangian $L_1$
with the term linear in velocities.

\begin{rem}
In the case when the constraints are nonhomogeneous and time dependent \eqref{v1}, the coefficients $A^{(\nu)}_{ij}$, $A^{(\nu)}_i$ can be also
interpreted as the components of the curvature of the Ehresmann connection of the fiber bundle $\pi: Q\times \mathbb R\to S\times \mathbb R$
(see Bak\v sa \cite{Baksa2012}).
\end{rem}

\section{The Gyroscopic Chaplygin systems}\label{sec3}

In addition to the assumptions from  Subsection \ref{sec:gyroscopicfibred}, we now assume that the fibration $\pi: Q\to S$ is determined by a free action of a $k$--dimensional
Lie group $G$ on $Q$, so that $S=Q/G$ and that the constraint distribution $\mathcal D$,
the gyroscopic two-form $\mathbf F$ and
the Lagrangian $L=T-V$ are $G$--invariant. Then $A$ is a principal connection and
the nonholonomic system \eqref{v5*} is $G$--invariant and reduces to the tangent bundle of the base manifold $S$
by the identification $TS=\mathcal D/G$. More precisely,  we use the following definition.

\begin{dfn}
Let $\mathbf G$, $V$, and $\mathbf  F$ be a $G$--invariant metric, a potential and a two-form on $Q$. The \emph{reduced metric} $\mathbf g$, the \emph{reduced potential} $v$,
and the \emph{reduced two-form} $\mathbf  f$ on $S$ are defined by:
\[
\mathbf g(X,Y)\vert_x=\mathbf G(X^h,Y^h)\vert_q,  \qquad v(x)=V(q), \qquad \mathbf f(X,Y)\vert_x=\mathbf F(X^h,Y^h)\vert_q.
\]
Here $X^h,Y^h$ are \emph{the horizontal lifts of $X,Y$ at a point $q\in\pi^{-1}(x)$} defined by
\[
d\pi\vert_q(X^h)=X, \qquad d\pi\vert_q(Y^h)=Y,  \qquad X^h,Y^h\in\mathcal D_q.
\]
\end{dfn}

Note that we do not impose any additional assumptions on $\mathbf F$. In particular, $\mathbf F$ does not need to be of the form $\mathbf F=\pi^*\mathbf w$, where $\mathbf w$ is a 2-form on the base manifold $S$.

The equations \eqref{v5*} are $G$--invariant and they reduce to $TS$
\begin{equation}
\delta l=\big(\frac{\partial l}{\partial x}-\frac{d}{dt}\frac{\partial l}{\partial
\dot x},\delta x\big)= \mathbf{JK}(\dot x,\delta x)+\mathbf  f(\dot x,\delta x)\quad
\text{for all} \quad \delta x\in T_x S,
\label{ChaplyginRed}
\end{equation}
where
\[
l=\frac12(\mathbf g(\dot x),\dot x)-v(x)
\]
is the reduced Lagrangian
and the term\footnote{Let us note that in \cite{EKMR2005}, the term ``JK" is used for the associated semi-basic two-form $\sigma$ on $T^*S$ given below.
} $\mathbf{JK}(\cdot,\cdot)$ depends on the metric and the curvature of the connection, induced by $\mathbb FL( B(\cdot,\cdot))$.
The term $\mathbf{JK}(\cdot,\cdot)$ can be described as follows. Consider the (0,3)-tensor field $\Sigma$ on $S$ defined by
\begin{equation}\label{eq:Sigma}
\Sigma(X,Y,Z)\vert_x=\mathbb FL(q,X^h)(B(Y^h,Z^h))\vert_q, \qquad q\in\pi^{-1}(x),
\end{equation}
where $X^h,Y^h,Z^h$ are the horizontal lifts of vector fields $X,Y,Z$  on $S$.
Then $\Sigma$ is skew-symmetric with respect to the
second and the third argument, and
\begin{equation}\label{JKsigma}
\mathbf{JK}(X,Y)\vert_{(x,\dot x)}=\Sigma(\dot x,X,Y).
\end{equation}

\begin{rem}\label{JKclan}
Let us explain the notation for the {\bf JK}-term. It is obtained from the
natural paring of the momentum mapping of the $G$-action $J: TQ\to\mathfrak g^*$ and the curvature $K: TQ\times TQ\to \mathfrak g$
of the principal connection $A$, where
$\mathfrak g$ is the Lie algebra of the Lie group $G$. Namely,
we have a canonical identification of the vertical space $\mathcal V_q$ with the Lie algebra $\mathfrak g$.
Then the curvature of the Ehresmann connection $B$ is $\mathfrak g$--valued and coincides
with the curvature $K$ of the principal connection. Also, within this identification, the fiber derivative $\mathbb FL(q,\dot q)$ in the
direction of the vertical vector $\xi\in\mathfrak g \cong\mathcal V_q$  becomes the value of the momentum mapping $J$ of the $G$-action
evaluated at $\xi$. In this way the expression \eqref{eq:Sigma},
as the natural paring of the tangent bundle momentum mapping $J$ and the curvature two-form $K$, defines a (0,3)-tensor field $\Sigma$ on $S$.
On the other hand, the {\bf JK}-term defined by \eqref{JKsigma} is a semi-basic 2-form on $TS$.
\end{rem}

\begin{dfn}
We refer to $(S,l,\mathbf{JK},\mathbf f)$ as \emph{a reduced gyroscopic  $G$--Chaplygin system}.  In the case when $\mathbf  f$ is a closed form, we call it
\emph{a reduced magnetic $G$--Chaplygin system}.
\end{dfn}
The equations of motion of the reduced gyroscopic  $G$--Chaplygin system $(S,l,\mathbf{JK},\mathbf f)$ are described in \eqref{ChaplyginRed}.

We summarize the above considerations in the following statement.

\begin{thm}\label{redukcijaSistema}
The solutions of  the gyroscopic $G$--Chaplygin system $(Q,L,\mathcal D,G, \mathbf  F)$ project to  solutions of the reduced gyroscopic $G$--Chaplygin system
$(S,l,\mathbf{JK},\mathbf f)$. Let $x(t)$ be a solution of the reduced system \eqref{ChaplyginRed} with the initial conditions $x(0)=x_0$, $\dot x(0)=X_0\in T_{x_0} S$
and let $q_0\in \pi^{-1}(x_0)$. Then the horizontal lift $q(t)$ of $x(t)$ through $q_0$ is the solution of the original system \eqref{L}, i.e., \eqref{v5*},  with the initial conditions $q(0)=x_0$, $\dot q(0)=X^h_0\in \mathcal D_{q_0}$.
\end{thm}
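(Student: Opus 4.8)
The plan is to prove both directions—projection of solutions downward and horizontal lifting upward—at once, by establishing a term-by-term identification between the unreduced Voronec form \eqref{v5*} and the reduced form \eqref{ChaplyginRed}, working in the adapted coordinates $q=(q_1,\dots,q_{n+k})$ of Subsection \ref{sec:gyroscopicfibred}. The starting observation is that, for a Chaplygin system, $G$-invariance forces $L$, the connection coefficients $a_{\nu i}$, the curvature components $B^\nu_{ij}$, and $\mathbf F$ to be independent of the group coordinates $q_{n+\nu}$; consequently the constrained Lagrangian $L_c=L(q,\dot q^h)$ descends to a function on $\mathcal D/G=TS$. Since an admissible path has $\dot q\in\mathcal D_q$, its velocity is horizontal, so $\dot q=(\dot x)^h$ with $x=\pi(q)$ and $\dot x=d\pi(\dot q)$; by the definitions of the reduced metric and potential this gives $L_c(q,\dot q)=l(x,\dot x)$. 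The core of the argument is then to show that each of the three terms of \eqref{v5*}, restricted to horizontal variations $\delta q=(\delta x)^h\in\mathcal D_q$, coincides with the corresponding term of \eqref{ChaplyginRed}.

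For the variational-derivative term, $G$-invariance of $L_c$ means $\partial L_c/\partial q_{n+\nu}=0$, so the Voronec term $\sum_\nu a_{\nu i}\,\partial L_c/\partial q_{n+\nu}$ in \eqref{v4*} drops out and $\partial L_c/\partial q_i=\partial l/\partial x_i$, $\partial L_c/\partial\dot q_i=\partial l/\partial\dot x_i$; because $L_c$ does not depend on $q_{n+\nu}$, the total time derivatives along an admissible path agree as well, giving $\delta L_c=\delta l$. For the gyroscopic term, horizontality of $\dot q,\delta q$ together with the definition $\mathbf f(X,Y)\vert_x=\mathbf F(X^h,Y^h)\vert_q$ yields immediately $\mathbf F(\dot q,\delta q)=\mathbf F((\dot x)^h,(\delta x)^h)=\mathbf f(\dot x,\delta x)$. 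For the curvature term, the definition \eqref{eq:Sigma} of $\Sigma$ and the relation \eqref{JKsigma} give $\mathbb FL(q,\dot q)(B(\dot q,\delta q))=\Sigma(\dot x,\dot x,\delta x)=\mathbf{JK}(\dot x,\delta x)$, the identification being exactly the momentum-map/curvature pairing explained in Remark \ref{JKclan}. Hence \eqref{v5*} evaluated along an admissible path at a horizontal variation is literally \eqref{ChaplyginRed}.

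It remains to convert this equality of equations into the stated correspondence of solutions. The map $\dot x\mapsto(\dot x)^h$ is a linear isomorphism $T_xS\to\mathcal D_q$ for each $q\in\pi^{-1}(x)$, so as $\delta q$ ranges over $\mathcal D_q$ the projection $\delta x=d\pi(\delta q)$ ranges over all of $T_xS$ and conversely. Therefore \eqref{v5*} holds along $q(t)$ for every $\delta q\in\mathcal D_q$ if and only if \eqref{ChaplyginRed} holds along $x(t)=\pi(q(t))$ for every $\delta x\in T_xS$. If $q(t)$ solves the original system, it is admissible and its projection $x(t)$ solves the reduced system. Conversely, given a reduced solution $x(t)$ with $x(0)=x_0$, $\dot x(0)=X_0$, its horizontal lift $q(t)$ through $q_0\in\pi^{-1}(x_0)$ (which exists and is unique for the Ehresmann connection $A$) satisfies $\dot q(t)=(\dot x(t))^h\in\mathcal D_{q(t)}$, hence is admissible with $q(0)=q_0$, $\dot q(0)=X_0^h$, and by the equivalence above solves \eqref{v5*}, i.e.\ \eqref{L}. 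The main point requiring care is the reduction of the variational-derivative term: one must check that, along an admissible path, the \emph{total} time derivative $\frac{d}{dt}\frac{\partial L_c}{\partial\dot q_i}$ produces no contribution from the evolving group coordinates $q_{n+\nu}(t)$; this is guaranteed precisely by the $G$-invariance (Chaplygin) hypothesis, which is exactly what makes the reduced equations close on $TS$ without reference to the fiber variables.
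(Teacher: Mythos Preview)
Your proof is correct and follows the same route the paper takes: the paper does not give a separate proof after the theorem but instead states it as a summary of the preceding discussion, which sets up the reduced objects $l$, $\mathbf{JK}$, $\mathbf f$ and asserts that the $G$--invariant equations \eqref{v5*} reduce to \eqref{ChaplyginRed}. Your term-by-term identification in adapted coordinates (variational derivative via the Chaplygin hypothesis $\partial L_c/\partial q_{n+\nu}=0$, curvature term via \eqref{eq:Sigma}--\eqref{JKsigma}, gyroscopic term via the definition of $\mathbf f$) is exactly a careful unpacking of that discussion, and the lifting/projection argument at the end is the standard way to pass from equivalence of equations to correspondence of solutions.
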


\begin{rem}\label{primedba}
If $\mathbf f$ is an exact magnetic form, e.g. $\mathbf f=d\mathbf a$,
 then the equations \eqref{ChaplyginRed} are equivalent to
\begin{equation}
\delta l_1=\big(\frac{\partial l_1}{\partial x}-\frac{d}{dt}\frac{\partial l_1}{\partial
\dot x}, \delta x\big)= \mathbf{JK}(\dot x,\delta x)\qquad
\text{for all} \qquad \delta x\in T_x S,
\label{ChaplyginRed2}
\end{equation}
where the Lagrangian $l_1$, given by
\[
l_1=\frac12(\mathbf g(\dot x),\dot x)+(\mathbf a,\dot x)-v(x),
\]
has the linear term $(\mathbf a,\dot x)$.
\end{rem}

\begin{rem}\label{giroskopskiTenzor}
Within the affine connection approach to the Chaplygin reduction, it is convenient to introduce (1,2)-tensor fields $\mathbf B$ and $\mathbf C$ defined
by (see Koiller \cite{Koi1992} and Cantrijn Cantrijn, Cortes, de Leon, and  Martin de Diego  \cite{CCL2002})
\[
\Sigma(X,Y,Z)=\mathbf g(\mathbf B(X,Y),Z)=\mathbf g(X,\mathbf C(Y,Z)).
\]
In \cite{GajJov2019}, the tensor field $\mathbf B$ was used,
while here we work with the skew-symmetric tensor $\mathbf C$.
Note that $\mathbf C$ is equal to the negative gyroscopic tensor $\mathcal T$ defined by Garcia-Naranjo \cite{Naranjo2019a, Naranjo2019b}.
\end{rem}

Note that if $\mathbf F$ is magnetic, then $\mathbf f$ is not necessarily magnetic. Indeed, we have

\begin{prop}
Assume that the form $\mathbf F$ is closed. Then the reduced form $\mathbf f$ is closed if and only if
\begin{equation}\label{uslovDinv}
\mathbf F([X^h,Y^h]-[X,Y]^h,Z^h)+\mathbf F([Z^h,X^h]-[Z,X]^h,Y^h)+\mathbf F([Y^h,Z^h]-[Y,Z]^h,X^h)=0,
\end{equation}
for all vector fields $X$, $Y$, $Z$ on $S$.
In the adapted coordinates $q=(q_1,\dots,q_{n+k})$ on $Q$ described in Subsection \ref{sec:gyroscopicfibred}, the condition \eqref{uslovDinv} is equivalent to the equations
\begin{equation}\label{uslovD}
\sum_{\nu=1}^k \big( B^\nu_{ij}F_{p,n+\nu}+B^\nu_{pi}F_{j,n+\nu}+B^\nu_{jp}F_{i,n+\nu}\big)=0, \qquad 1\le i,j,p \le n.
\end{equation}
In particular, if the curvature $B$ of the Ehresmann connection vanishes (equivalently, the curvature $K$ of the principal connection vanishes),
then $\mathbf f$ is closed.
\end{prop}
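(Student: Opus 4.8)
The plan is to prove both the coordinate-free criterion \eqref{uslovDinv} and its adapted-coordinate form \eqref{uslovD} by directly computing $d\mathbf f$ and comparing it with the pullback $\pi^*$-type relation between $\mathbf f$ and $\mathbf F$. The starting point is the definition $\mathbf f(X,Y)|_x = \mathbf F(X^h,Y^h)|_q$, so I would apply the intrinsic Cartan formula for the exterior derivative of a $2$-form,
\[
d\mathbf f(X,Y,Z)=X\mathbf f(Y,Z)-Y\mathbf f(X,Z)+Z\mathbf f(X,Y)-\mathbf f([X,Y],Z)+\mathbf f([X,Z],Y)-\mathbf f([Y,Z],X),
\]
and likewise write $d\mathbf F(X^h,Y^h,Z^h)$ with the same formula but using horizontal lifts. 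The key observation is that the derivative terms $X^h\mathbf F(Y^h,Z^h)$ equal $X\mathbf f(Y,Z)$ (after projection), since $X^h$ is $\pi$-related to $X$ and $\mathbf F(Y^h,Z^h)$ is a basic function. Hence the two exterior derivatives differ only in their bracket terms, and the discrepancy is precisely governed by how $[X,Y]^h$ fails to agree with $[X^h,Y^h]$.

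First I would establish the identity
\[
d\mathbf f(X,Y,Z)\big|_x = d\mathbf F(X^h,Y^h,Z^h)\big|_q + \mathbf F\!\big([X^h,Y^h]-[X,Y]^h,\,Z^h\big) + (\text{cyclic}),
\]
where the cyclic sum runs over the three arguments. The mechanism is that $\mathbf f([X,Y],Z)|_x = \mathbf F([X,Y]^h,Z^h)|_q$ by definition, whereas the Cartan formula for $d\mathbf F$ naturally produces $\mathbf F([X^h,Y^h],Z^h)$; the difference of these two is exactly one term in the correction. Since the derivative pieces match and $d\mathbf F=0$ by hypothesis, we obtain $d\mathbf f(X,Y,Z)=0$ for all $X,Y,Z$ if and only if the cyclic correction sum vanishes, which is precisely condition \eqref{uslovDinv}.

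To pass to the coordinate form \eqref{uslovD}, I would take $X=\partial/\partial q_i$, $Y=\partial/\partial q_j$, $Z=\partial/\partial q_p$ as the coordinate vector fields on $S$, whose brackets $[X,Y]$ etc.\ vanish, so $[X,Y]^h=0$. The remaining ingredient is the horizontal lift formula from Subsection~\ref{sec:gyroscopicfibred}: the lift of $\partial/\partial q_i$ is $\partial/\partial q_i+\sum_\nu a_{\nu i}\,\partial/\partial q_{n+\nu}$, and the vertical part of a bracket of two such lifts is governed by the curvature, namely $[X^h,Y^h]$ has vertical component $\sum_\nu B^\nu_{ij}\,\partial/\partial q_{n+\nu}$ (this is the identity $B(X,Y)=-A([X^h,Y^h])$ combined with the local expression $B(\partial_{q_i},\partial_{q_j})=\sum_\nu B^\nu_{ij}\partial_{q_{n+\nu}}$). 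Since $\mathbf F$ is evaluated on $[X^h,Y^h]-[X,Y]^h$, only the vertical component survives in the pairing against $Z^h$ modulo horizontal contributions, and $\mathbf F(\partial/\partial q_{n+\nu},\,\partial/\partial q_p^{\,h})$ picks out the coefficient $F_{p,n+\nu}$ (up to sign conventions on the ordering $s<l$). Summing the three cyclic terms then yields exactly $\sum_\nu(B^\nu_{ij}F_{p,n+\nu}+B^\nu_{pi}F_{j,n+\nu}+B^\nu_{jp}F_{i,n+\nu})$, giving \eqref{uslovD}.

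The main obstacle I anticipate is bookkeeping the horizontal-lift corrections cleanly in the coordinate computation: the lift $(\partial/\partial q_p)^h$ carries both the base part and a vertical tail $\sum_\nu a_{\nu p}\partial/\partial q_{n+\nu}$, so when evaluating $\mathbf F([X^h,Y^h]-[X,Y]^h, Z^h)$ one must check that the horizontal tails of $[X^h,Y^h]$ pair to zero (or cancel in the cyclic sum) and that only the curvature-induced vertical component contributes, matching the $F_{p,n+\nu}$ entries with the correct sign. Once the identity relating $d\mathbf f$ to $d\mathbf F$ and the curvature correction is set up invariantly, the final statement about vanishing curvature is immediate: if $B\equiv 0$ then every $B^\nu_{ij}=0$, so \eqref{uslovD} holds trivially, equivalently the correction terms in \eqref{uslovDinv} vanish because $[X^h,Y^h]=[X,Y]^h$ for a flat connection, and hence $\mathbf f$ is closed.
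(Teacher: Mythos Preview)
Your proposal is correct and follows essentially the same route as the paper: both apply the Cartan formula to $d\mathbf f(X,Y,Z)$, identify the derivative terms $X\mathbf f(Y,Z)$ with $X^h\mathbf F(Y^h,Z^h)$, use $d\mathbf F=0$ to trade these for the bracket terms $\mathbf F([X^h,Y^h],Z^h)$, and compare with $\mathbf f([X,Y],Z)=\mathbf F([X,Y]^h,Z^h)$ to isolate the cyclic correction \eqref{uslovDinv}; the coordinate form then comes from taking $X,Y,Z$ to be coordinate fields on $S$. One small clarification on your ``obstacle'' paragraph: the difference $[X^h,Y^h]-[X,Y]^h$ is \emph{always} purely vertical (both terms project to $[X,Y]$), so there are no horizontal tails of $[X^h,Y^h]$ to worry about; with the paper's convention $B(X,Y)=-A([X^h,Y^h])$ one gets $[X^h,Y^h]-[X,Y]^h=-\sum_\nu B^\nu_{ij}\partial_{q_{n+\nu}}$, and pairing this against $Z^h$ via $\mathbf F$ and using $F_{n+\nu,p}=-F_{p,n+\nu}$ produces exactly the terms in \eqref{uslovD}.
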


\begin{proof}
Since $\mathbf F$ is magnetic, we have
\begin{align*}
d\mathbf  F(X',Y',Z')
=& X' \mathbf F(Y',Z')+ Y' \mathbf F(Z',X')+Z'\mathbf F(X',Y')\\
&-\mathbf F([X',Y'],Z')-\mathbf F([Z',X'],Y')-\mathbf F([Y',Z'],X')=0,
\end{align*}
for arbitrary vector fields $X',Y',Z'$ on $Q$. On the other hand, by using the above relation and the definition of $\mathbf f$
that depends on the horizontal distribution $\mathcal D$, we get
\begin{align*}
d\mathbf  f(X,Y,Z)\vert_x
=& \big(X \mathbf f(Y,Z)+ Y \mathbf f(Z,X)+Z\mathbf f(X,Y)\\
&-\mathbf   f([X,Y],Z)-\mathbf f([Z,X],Y)-\mathbf f([Y,Z],X)\big)\vert_x\\
=& \big(X^h \mathbf F(Y^h,Z^h)+ Y^h \mathbf F(Z^h,X^h)+Z^h\mathbf F(X^h,Y^h)\\
&-\mathbf F([X,Y]^h,Z^h)-\mathbf F([Z,X]^h,Y^h)-\mathbf F([Y,Z]^h,X^h)\big)\vert_q \\
=& \big(\mathbf F([X^h,Y^h],Z^h)+\mathbf F([Z^h,X^h],Y^h)+\mathbf F([Y^h,Z^h],X^h)\\
&-\mathbf F([X,Y]^h,Z^h)-\mathbf F([Z,X]^h,Y^h)-\mathbf F([Y,Z]^h,X^h)\big)\vert_q,
\end{align*}
where $X^h,Y^h,Z^h$ are the horizontal lifts of the vector fields $X,Y,Z$ on $S$,
$q\in\pi^{-1}(x)$ is arbitrary.
Thus, $d\mathbf f=0$ if and only if \eqref{uslovDinv} is satisfied.
Consider the adapted coordinates $q=(q_1,\dots,q_{n+k})$ on $Q$ described in Subsection \ref{sec:gyroscopicfibred} and take
\[
X=\frac{\partial}{\partial q_i}, \qquad Y=\frac{\partial}{\partial q_j}, \qquad Z=\frac{\partial}{\partial q_p}, \qquad 1\le i,j,p \le n.
\]
Then the equation $d\mathbf f(X,Y,Z)=0$ takes the form \eqref{uslovD}.\end{proof}

\begin{rem}
In the special case, when $\mathbf F=\pi^* \mathbf w$, where $\mathbf w$ is a two-form on the base manifold $S$,
the equations \eqref{uslovD} are automatically satisfied ($F_{i,n+\nu}=0$, $1\le i \le n$, $1\le \nu \le k$). In this special case $\mathbf f=\mathbf w$,
and
$d\mathbf F=0$ if and only if $d\mathbf f=0$.
\end{rem}

\section{Almost Hamiltonian description and an invariant measure}\label{sec4}

\subsection{ Almost symplectic manifolds}
Recall that an \emph{almost symplectic structure} is a pair $(M,\omega)$ of a manifold $M$ and a nondegenerate 2-form $\omega$ (see \cite{LM}).
Here we do not assume that the form $\omega$ is closed, in contrast to the symplectic case.
As in the symplectic case, since $\omega$ is nondegenerate, to a given function $H$ one can associate the \emph{almost Hamiltonian vector field} $X_H$ by the identity
\[
i_{X_H}\omega(\cdot)=\omega(X_H,\cdot)=-dH(\cdot).
\]

The almost symplectic structure $(M,\omega)$ is \emph{locally conformally symplectic}, if in a neighborhood of each point $x$ on $M$, there exists a function $f$ different from zero such that
$f\omega$ is closed. If $f$ is defined globally, then  $(M,\omega)$ is \emph{conformally symplectic} \cite{LM}.

\subsection{Reduced flows on cotangent bundles}

Let $(x_1,\dots,x_n)$ be local coordinates on $S$ in which the metric $\mathbf g$ is given by the quadratic form $\sum_{ij} g_{ij} dx_i\otimes dx_j$
and the components of the (1,2)-tensor $\mathbf C$ are $C^k_{ij}$ (see Remark \ref{giroskopskiTenzor}).
Then the Lagrangian, the gyroscopic two-form and the {\bf JK}-term read as follows
\begin{align*}
&l(x,\dot x)= \frac{1}{2}\sum  g_{ij} \dot x_i\dot x_j-v(x), \\
&\mathbf f=\sum_{i<j} f_{ij} dx_i\wedge dx_j,\\
&\mathbf{JK}(X,Y)\vert_{(x,\dot x)}=\mathbf g(\dot x,\mathbf C(X,Y))=\sum_{k,l,i,j} g_{kl}C^k_{ij}X_iY_j\dot x_l.
\end{align*}

We also introduce the Hamiltonian function
\[
h(x,p)=\frac12 (p, \mathbf g^{-1}(p))+v(x)=\frac12 \sum g^{ij} p_ip_j+v(x),
\]
as the usual Legendre transformation of $l$. Here $(p_1,\dots,p_n,x_1,\dots,x_n)$ are
the canonical coordinates  of the cotangent bundle $T^*S$,
\[
p_i=\partial l/\partial \dot x_i=\sum_j g_{ij}\dot x_j,
\]
and $\{g^{ij}\}$ is the inverse of the metric matrix $\{g_{ij}\}$.
For simplicity,  the same symbol denotes a function on the base manifold $f\colon S\to \R$ and its lift to the cotangent bundle
$\rho^*f=f\circ \rho\colon T^*S\to \R$, where $\rho:T^*S\to S$ is the canonical projection.

In canonical coordinates the equations \eqref{ChaplyginRed} take the form
\begin{align}
\label{ham1}& \dot x_i=\frac{\partial h}{\partial p_i}=\sum_{j=1}^n g^{ij} p_j,\\
\label{ham2}& \dot p_i=-\frac{\partial h}{\partial x_i}+\Pi_i(x,p)+\sum_{j=1}^n f_{ij}(x) \frac{\partial h}{\partial p_j}.
\end{align}
Here, the {\bf JK}-term is given in the form
\begin{equation}\label{hamPi}
\Pi_i(x,p)=\mathbf{JK}\big(\frac{\partial}{\partial x_i},\dot x\big)\vert_{\dot x=\mathbf g^{-1}(p)}=\sum_{k,l,j=1}^n g_{kl}\dot x_l C^k_{ij}(x)\dot x_j\vert_{\dot x=\mathbf g^{-1}(p)}=\sum_{j,k=1}^n C^k_{ij}(x)p_k\frac{\partial h}{\partial p_j}.
\end{equation}

Let $z=(x,p)$. The reduced equations \eqref{ham1}, \eqref{ham2} on the cotangent bundle
$T^*S$ can be written in the almost Hamiltonian form
\begin{equation}\label{red:eq}
\dot z=X_{red}, \qquad i_{X_{red}}(\Omega+\sigma+\rho^* \mathbf  f)=-dh,
\end{equation}
where $\Omega$ is the canonical symplectic form on $T^*S$,
$\sigma$ is a semi-basic form defined by the {\bf JK} term (see \cite{CCL2002, St}):
\begin{align}\label{eq:Omega}
& \Omega=dp_1\wedge dx_1+\dots+dp_n\wedge dx_n, \\
& \sigma=\sum_{1\le i<j\le n} \sum_{k=1}^n C^k_{ij}(x) p_k dx_i\wedge dx_j.
\end{align}

\subsection{Invariant measure}
The existence of an invariant measure for nonholomic problems is well studied  (see
\cite{Fe1988, VeVe2, Kozlov,  FK1995, ZenkovBloch, FNM, Jov2015, FGM2015}).
We will consider smooth measures  of the form $\mu=\nu\,\Omega^n$, where $\Omega^n$ (see \eqref{eq:Omega}) is the standard measure on the cotangent bundle $T^*S$ and $\nu$ is a nonvanishing smooth function, called {\it the density} of the measure $\mu$.

In absence of potential and gyroscopic forces, it was proved in \cite{CCL2002} that the equations \eqref{ham1}, \eqref{ham2}
 have an invariant measure if and only if  its density is basic, i.e, $\nu=\nu(x)$.
 Then the system with a potential force $v(x)$ also preserves the same measure (see \cite{St, CCL2002})).

For $\mathbf f=0$, the existence of the basic density $\nu=\nu(x)$ is equivalent to the condition that the one-form
\begin{equation}\label{Theta}
\Theta=\sum_{i,j}C^j_{ij}(x)dx_i, \quad \text{i.e.,} \quad \Theta(X)\vert_x=\tr\mathbf C(X,\cdot)\vert_x, \quad X\in T_x S,
\end{equation}
is exact: there exists a function $\lambda$ such that $\Theta=d\lambda$. Then the function $\nu(x)=\exp(\lambda(x))$ is the density of an invariant measure (see \cite{CCL2002, Naranjo2020}).
The statement formulated in terms of the
tensor field $\mathbf B$ is given in \cite{CCL2002}, while in
\cite{Naranjo2020} it is formulated in terms of the gyroscopic tensor $\mathcal T=-\mathbf C$. An example of a system with a potential force and with an invariant non-basic measure is also given in
\cite{Naranjo2020}.

In the presence of the gyroscopic form $\mathbf f$ we have a similar situation.

\begin{thm}\label{novamera}
The reduced gyroscopic Chaplygin equations \eqref{ham1}, \eqref{ham2} have an invariant measure $\mu=\nu\,\Omega^n$ with a basic density $\nu(x)$
if and only if the one-form \eqref{Theta} is exact $\Theta=d\lambda$. Then the function $\nu=\exp(\lambda(x))$ is the density of the invariant measure.
\end{thm}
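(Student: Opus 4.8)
The plan is to use the classical Liouville criterion for invariance of a smooth measure. Writing $X_{red}$ for the vector field of \eqref{ham1}--\eqref{ham2} in the canonical coordinates $(x,p)$, the measure $\mu=\nu\,\Omega^n$ is invariant if and only if $\operatorname{div}(\nu X_{red})=0$, where the divergence is taken with respect to the standard volume $\Omega^n$. Expanding, this reads
\[
X_{red}(\nu)+\nu\,\operatorname{div}X_{red}=0,
\]
so everything reduces to computing $\operatorname{div}X_{red}=\sum_i\partial_{x_i}\dot x_i+\sum_i\partial_{p_i}\dot p_i$ and then imposing that $\nu$ be basic.

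Next I would evaluate this divergence term by term. The purely canonical part $(\partial h/\partial p_i,-\partial h/\partial x_i)$ of the field is divergence free, since $\sum_i\partial_{x_i}\partial_{p_i}h-\sum_i\partial_{p_i}\partial_{x_i}h=0$ by equality of mixed partials; in particular the potential $v(x)$ contributes nothing. The magnetic term $\sum_j f_{ij}\,\partial h/\partial p_j$ contributes $\sum_{i,j}f_{ij}g^{ji}$ to the divergence, which vanishes because $f_{ij}$ is skew-symmetric while $g^{ij}$ is symmetric. Hence the only surviving contribution comes from the {\bf JK}-term $\Pi_i$, and the heart of the matter is the evaluation of $\sum_i\partial_{p_i}\Pi_i$ starting from $\Pi_i=\sum_{j,k}C^k_{ij}\,p_k\,\partial h/\partial p_j$. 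Differentiating and summing, one contraction pairs the upper index of $C^k_{ij}$ with the first lower index against the symmetric $g^{ij}$ and vanishes by the skew-symmetry $C^k_{ij}=-C^k_{ji}$ (which follows from the skew-symmetry of $\Sigma$ in its last two arguments, Remark \ref{giroskopskiTenzor}); the remaining contraction produces precisely the trace $\Theta$. The expected clean outcome is
\[
\operatorname{div}X_{red}=-\sum_i\Theta_i\,\dot x_i=-\sum_{i,j}\Theta_i\,g^{ij}p_j,\qquad \Theta_i=\sum_k C^k_{ik}.
\]

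Finally I would insert a basic density $\nu=\nu(x)$, so that $X_{red}(\nu)=\sum_i\dot x_i\,\partial_{x_i}\nu$, and the invariance equation becomes
\[
\sum_i\dot x_i\big(\partial_{x_i}\nu-\nu\,\Theta_i\big)=0,
\]
required to hold identically on $T^*S$. Since $\mathbf g$ is nondegenerate, the functions $\dot x_i=\sum_j g^{ij}p_j$ are linearly independent in $p$, so each coefficient must vanish: $\partial_{x_i}\nu=\nu\,\Theta_i$ for all $i$, i.e. $d\log\nu=\Theta$. A positive $\nu$ with this property exists if and only if $\Theta$ is exact, $\Theta=d\lambda$, in which case $\nu=\exp(\lambda)$; conversely any invariant basic density yields $\lambda=\log\nu$ with $d\lambda=\Theta$. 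This gives the claimed equivalence.

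The step I expect to demand the most care is the evaluation of $\sum_i\partial_{p_i}\Pi_i$: the index bookkeeping must be arranged so that the symmetry of $g^{ij}$ annihilates the ``wrong'' contraction while the skew-symmetry of $\mathbf C$ turns the surviving contraction into $-\Theta$. The conceptually important point is that the gyroscopic form $\mathbf f$ drops out of the divergence entirely, so that the criterion for an invariant basic measure coincides exactly with the one in the case $\mathbf f=0$ treated in \cite{CCL2002, Naranjo2020}; this is precisely the ``similar situation'' announced before the statement.
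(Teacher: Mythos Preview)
Your proposal is correct and follows essentially the same route as the paper's proof: compute the divergence of $\nu X_{red}$ with respect to the canonical volume, observe that the Hamiltonian part and the gyroscopic part $\sum_j f_{ij}\partial h/\partial p_j$ both contribute zero (the latter by the pairing of skew $f_{ij}$ with symmetric $g^{ij}$), and reduce everything to the contraction of the {\bf JK}-term, which yields exactly $-\sum_i\Theta_i\,\dot x_i$; imposing that the result vanish for all $p$ gives $d\log\nu=\Theta$. The only cosmetic difference is that the paper computes $\operatorname{div}(\nu X_{red})$ in one stroke, whereas you split it as $X_{red}(\nu)+\nu\operatorname{div}X_{red}$, but the index manipulations and the conclusion are identical.
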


In other words, according to \cite{CCL2002, Naranjo2020}, a Chaplygin system with a gyroscopic term possesses
a basic invariant measure if and only if the same Chaplygin system without gyroscopic
term preserves the same basic invariant measure.

\begin{proof}
The Lie derivative $\mathcal L_{X_{red}}(\mu)$ vanishes if and only if the divergence of the vector field $\nu X_{red}$ with respect to the canonical measure equals to zero.
 By using the identities
\[
\frac{\partial}{\partial p_i}\frac{\partial h}{\partial p_j}=g^{ji}, \quad \sum_{i,j}f_{ij}g^{ji}=0, \quad \sum_{ij} C^k_{ij} g^{ji}=0,
\]
we get:
\begin{align*}
\mathrm{div}(\nu X_{red})=&\sum_{i=1}^n\frac{\partial}{\partial x_i}\big(\nu\frac{\partial h}{\partial p_i}\big)+\sum_i\frac{\partial}{\partial p_i}\big(\nu\big(-\frac{\partial h}{\partial x_i}+\sum_{k,j=1}^n C^k_{ij}(x) p_k\frac{\partial h}{\partial p_j}+\sum_{j=1}^n f_{ij}(x) \frac{\partial h}{\partial p_j}\big)\big)\\
=&\sum_{i=1}^n\big(\frac{\partial\nu}{\partial x_i}-\nu\sum_{j=1}^n C^j_{ij}(x)\big)\frac{\partial h}{\partial p_i}.
\end{align*}

Since $\dot x_i=\frac{\partial h}{\partial p_i}$ is arbitrary  for each fixed $x$,
 the vector filed $X_{red}$ preserves the measure $\nu\,\Omega^n$  if and only if
\[
\nu^{-1}\frac{\partial\nu}{\partial x_i}=\sum_{j=1}^n C^j_{ij}(x), \qquad i=1,\dots,n,
\]
that is, if and only if
\[
d\,{\ln\nu}=\sum_{i=1}^n \nu^{-1}\frac{\partial\nu}{\partial x_i} dx_i=\sum_{i,j=1}^n C^j_{ij}(x) dx_i=\Theta.
\]
Note that, although the proof is derived in local coordinates, all considered objects
are global and the identity $d\,{\ln\nu}=\Theta$ holds globally.\end{proof}

\section{Chaplygin Hamiltonization for systems with magnetic forces}\label{sec5}

\subsection{Chaplygin multipliers in the Lagrangian framework}\label{CMLF}

We consider the reduced Chaplygin systems   \eqref{ChaplyginRed}  and study the question of their transformation into a Lagrangian system after a time reparametrization.

Let us consider a time substitution $d\tau={\mathcal N}(x)dt$, where
${\mathcal N}(x)$ is a differentiable nonvanishing function on $S$.
Denote $x^{\prime}={dx}/{d\tau}=\N^{-1}\dot x$.

We first treat the exact case: $\mathbf f=d\mathbf a$ (see Remark \ref{primedba}). Locally, the one-form $\mathbf a$ is given by
$\mathbf a=\sum_i a_i(x)dx_i$ and
\begin{equation*}
l_1(x,\dot x)= \frac{1}{2}\sum  g_{ij} \dot x_i\dot x_j+\sum_i a_i\dot x_i-v(x).
\end{equation*}

The Lagrangians $l$ and $l_1$
in the coordinates
$(x,x^{\prime})$ are denoted $l^*$ and $l_1^*$ respectively and
take the form
\begin{align}\label{eq:l*}
&l^*(x,x^{\prime})= \frac{1}{2}\sum \mathcal{\ N}^2 g_{ij} x_i^{\prime}x_j^{\prime}-v(x), \\
&l_{1}^*(x,x^{\prime})= \frac{1}{2}\sum \mathcal{\ N}^2 g_{ij} x_i^{\prime}x_j^{\prime}+\sum_i \N a_i(x)x_i^{\prime}-v(x).
\end{align}

Following  Chaplygin \cite{Chaplygin1911}, we are looking for a nowhere vanishing function
$\N(x)$, called \emph{a Chaplygin reducing multiplier} such that the reduced Chaplygin system \eqref{ChaplyginRed2}
\begin{equation}\label{orEQ1}
\frac{d}{dt}\frac{\partial l_1}{\partial \dot x_i}=\frac{\partial l_1}{\partial x_i}+\sum_{k,l,j=1}^n C^k_{ij}(x) g_{kl} \dot x_l \dot x_j
\end{equation}
after a time reparametrization $d\tau={\mathcal N}(x)dt$ becomes the Lagrangian system
\begin{equation}\label{modEQ1}
\frac{d}{d\tau}\frac{\partial l^*_1}{\partial x'_i}=\frac{\partial l^*_1}{\partial x_i}, \quad i=1,\dots,n.
\end{equation}

Equivalently, we can use the Lagrangians $l$ and $l^*$. Let
\begin{align*}
&\mathbf f=d\mathbf a=\sum_{i<j} f_{ij} dx_i\wedge dx_j, \quad f_{ij}=\frac{\partial a_j}{\partial x_i}-\frac{\partial a_i}{\partial x_j},\\
&\mathbf f^*=d(\N\mathbf a)= \sum_{i<j} f^*_{ij} dx_i\wedge dx_j, \quad f^*_{ij}=\N f_{ij}+a_j\frac{\partial \N}{\partial x_i}-a_i\frac{\partial \N}{\partial x_j}.
\end{align*}

Then, we are looking for a nowhere vanishing function $\N(x)$, such that the reduced Chaplygin system
\begin{equation}\label{orEQ}
\frac{d}{dt}\frac{\partial l}{\partial \dot x_i}=\frac{\partial l}{\partial x_i}+\sum_{k,l,j=1}^n C^k_{ij}(x) g_{kl} \dot x_l \dot x_j+\sum_{j=1}^n f_{ij}(x) \dot x_j
\end{equation}
after a time reparametrization $d\tau={\mathcal N}(x)dt$ becomes the Lagrangian system with magnetic forces
\begin{equation}\label{modEQ}
\frac{d}{d\tau}\frac{\partial l^*}{\partial x'_i}=\frac{\partial l^*}{\partial x_i}+  \sum_{j=1}^n  f^*_{ij} x_j', \quad i=1,\dots,n.
\end{equation}

\begin{prop} \label{L1prop} Suppose that $\mathbf f$ is exact: $\mathbf f=d\mathbf a$.
The reduced equations of the Chaplygin system with a linear term in velocities \eqref{orEQ1}
after a time reparametrization $d\tau={\mathcal N}(x)dt$ becomes the Lagrangian system
\eqref{modEQ1} if and only if the
corresponding system without the linear term allows the Chaplygin multiplier $\N(x)$ and
$d\N\wedge \mathbf a=0$, that is, if
\begin{equation}\label{condA}
a_j\frac{\partial \N}{\partial x_i}=a_i\frac{\partial \N}{\partial x_j}.
\end{equation}
\end{prop}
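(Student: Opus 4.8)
The plan is to exploit the linearity of the Euler--Lagrange operator in the Lagrangian, so as to peel off the linear-in-velocities term and reduce the assertion to a statement about the pure kinetic/$\mathbf{JK}$ part plus a purely magnetic remainder.

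First I would record, following Remark \ref{primedba}, that because $\mathbf f=d\mathbf a$ the Euler--Lagrange operator applied to the term $(\mathbf a,\dot x)$ of $l_1$ equals $-\sum_j f_{ij}\dot x_j$; hence \eqref{orEQ1} (written with $l_1$) is literally the same equation as \eqref{orEQ} (written with $l$ and the magnetic force $\sum_j f_{ij}\dot x_j$). The same computation in the $\tau$--time, applied to the term $(\N\mathbf a,x')$ of $l_1^*$, identifies \eqref{modEQ1} with \eqref{modEQ}, now carrying the magnetic force $\sum_j f^*_{ij}x_j'$. It therefore suffices to decide when the substitution $d\tau=\N\,dt$ transforms \eqref{orEQ} into \eqref{modEQ}.

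Next I would compute how the reparametrization acts on the Euler--Lagrange operator of the kinetic Lagrangian. With $\dot x=\N x'$, $\frac{d}{dt}=\N\frac{d}{d\tau}$ and $\frac{\partial l^*}{\partial x_i'}=\N\frac{\partial l}{\partial\dot x_i}$, a direct differentiation gives the identity
\[
\frac{d}{d\tau}\frac{\partial l^*}{\partial x_i'}-\frac{\partial l^*}{\partial x_i}=\Big(\frac{d}{dt}\frac{\partial l}{\partial\dot x_i}-\frac{\partial l}{\partial x_i}\Big)+R_i^0,
\]
where the remainder $R_i^0=\big(\sum_k\frac{\partial\N}{\partial x_k}x_k'\big)\frac{\partial l}{\partial\dot x_i}-\frac{1}{\N}\big(\sum_{k,l}g_{kl}\dot x_k\dot x_l\big)\frac{\partial\N}{\partial x_i}$ gathers the terms produced by differentiating the factor $\N^2$ (the potential $v$ drops out). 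Inserting \eqref{orEQ} on the right and requiring the left to equal \eqref{modEQ}, the common magnetic contribution $\N\sum_j f_{ij}x_j'$ cancels and one is left with the single requirement
\[
\sum_{k,l,j}C^k_{ij}g_{kl}\dot x_l\dot x_j+R_i^0=\sum_j\Big(a_j\frac{\partial\N}{\partial x_i}-a_i\frac{\partial\N}{\partial x_j}\Big)x_j',\qquad i=1,\dots,n,
\]
which has to hold identically in the velocities $x'$.

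The crux, and the source of the equivalence, is then a separation by homogeneity degree in $x'$. The left-hand side is homogeneous of degree two (both the $\mathbf{JK}$ term and $R_i^0$ are quadratic in $x'$), while the right-hand side is homogeneous of degree one; consequently the identity holds for all velocities if and only if each side vanishes separately. The vanishing of the quadratic part, $\sum_{k,l,j}C^k_{ij}g_{kl}\dot x_l\dot x_j+R_i^0=0$, is exactly the Chaplygin reducing-multiplier condition for the system without the linear term, since it is the specialization of the same operator identity to the case $\mathbf a=0$, for which the target equation is $\frac{d}{d\tau}\frac{\partial l^*}{\partial x_i'}=\frac{\partial l^*}{\partial x_i}$. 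The vanishing of the linear part forces $a_j\frac{\partial\N}{\partial x_i}=a_i\frac{\partial\N}{\partial x_j}$ for all $i,j$, which is precisely \eqref{condA}, i.e. $d\N\wedge\mathbf a=0$; conversely, when both conditions hold both sides vanish and the reparametrized system is \eqref{modEQ1}. I expect the main obstacle to be the careful derivation of the operator identity and the bookkeeping of the $\N$--derivative terms in $R_i^0$; once this is in place, the degree decomposition decouples the multiplier condition from the magnetic condition automatically and yields the ``if and only if''.
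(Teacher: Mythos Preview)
Your argument is correct and takes a genuinely different route from the paper. The paper does not prove Proposition~\ref{L1prop} directly in the Lagrangian setting; instead it passes through the Hamiltonian framework of Subsection~\ref{LEG}: via the Legendre transformation and the commutative diagram~\eqref{dijagram}, the question is translated into whether the almost-Hamiltonian equations~\eqref{ham1},~\eqref{ham2} become, after $d\tau=\N dt$ and the non-canonical change $(x,p)\mapsto(x,\N p)$, the twisted canonical equations~\eqref{newEq}. The computation is carried out in momenta (equations~\eqref{x10}--\eqref{HAMver}), and the equivalence with the Lagrangian statements is then asserted via Remark~\ref{dokazLAG}. Your approach stays entirely on the Lagrangian side: you write down the operator identity relating the Euler--Lagrange expressions of $l$ and $l^*$, feed in \eqref{orEQ}, and then decouple the resulting identity by homogeneity degree in the velocities---the quadratic part reproduces exactly the multiplier condition for the system with $\mathbf a=0$, while the linear part isolates $d\N\wedge\mathbf a=0$. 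This is more elementary and makes the ``if and only if'' transparent without any detour through $T^*S$; the paper's route, on the other hand, has the advantage of simultaneously establishing the Hamiltonian counterpart (items (b) and (d) of Theorem~\ref{opsta}) and the relation to the conformal factor of $\Omega+\sigma+\rho^*\mathbf f$.
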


Note that conditions \eqref{condA} imply that
\[
\mathbf f^*=d(\N\mathbf a)=\N d\mathbf a+d\N\wedge \mathbf a=\N\mathbf f
\]
 and
\begin{equation}\label{df*}
d(\N\mathbf f)=d\N\wedge \mathbf f=0.
\end{equation}

Let us now turn to the non-exact case. Thus, we assume now $\mathbf f$ is not exact. In this case we set
\begin{equation}\label{fzvezda}
\mathbf f^*=\N\mathbf f.
\end{equation}

\begin{prop} \label{LAGprop} Suppose that $\mathbf f$ is not exact.
The equations of motion of the reduced gyroscopic  Chaplygin system \eqref{orEQ} after a time reparametrization $d\tau=\N dt$ become the Lagrangian equations
with gyroscopic forces \eqref{modEQ}, where $f^*$ is given by \eqref{fzvezda}
if and only if the corresponding system without gyroscopic forces allows the Chaplygin multiplier $\N(x)$.
\end{prop}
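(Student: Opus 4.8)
The plan is to substitute $d\tau=\N\,dt$ directly into the Euler–Lagrange operator of $l^*$ and to read off how the transformed equation compares with \eqref{orEQ}. Setting $x'=\N^{-1}\dot x$, I would first record the two elementary identities that drive everything,
\[
\frac{\partial l^*}{\partial x'_i}=\N\,\frac{\partial l}{\partial \dot x_i}=\N\sum_j g_{ij}\dot x_j,
\qquad
\frac{d}{d\tau}=\N^{-1}\frac{d}{dt},
\]
both immediate from \eqref{eq:l*}. Combining them gives
\[
\frac{d}{d\tau}\frac{\partial l^*}{\partial x'_i}
=\frac{d}{dt}\frac{\partial l}{\partial \dot x_i}
+\Big(\sum_k\frac{\partial\ln\N}{\partial x_k}\dot x_k\Big)\sum_j g_{ij}\dot x_j,
\]
while a direct differentiation of $l^*$ in $x_i$ yields
\[
\frac{\partial l^*}{\partial x_i}=\frac{\partial l}{\partial x_i}+\frac{\partial\ln\N}{\partial x_i}\sum_{k,l}g_{kl}\dot x_k\dot x_l.
\]

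The decisive point, which I would isolate first, is that the gyroscopic force is left \emph{invariant} by the reparametrization once $f^*=\N f$ is chosen as in \eqref{fzvezda}: since $x'_j=\N^{-1}\dot x_j$,
\[
\sum_j f^*_{ij}x'_j=\sum_j \N f_{ij}\,\N^{-1}\dot x_j=\sum_j f_{ij}\dot x_j .
\]
Thus the gyroscopic term of \eqref{modEQ} reproduces, with no condition whatsoever on $\N$, the gyroscopic term of \eqref{orEQ}. This is exactly what detaches the gyroscopic part from the reducibility problem, and it explains why---unlike the exact case of Proposition \ref{L1prop}, where the constraint \eqref{condA} was needed to guarantee $\mathbf f^*=\N\mathbf f$---no such condition is required here.

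With these pieces in hand, substituting into \eqref{modEQ} and cancelling the two identical gyroscopic terms leaves
\[
\frac{d}{dt}\frac{\partial l}{\partial \dot x_i}
=\frac{\partial l}{\partial x_i}+\tilde Q_i+\sum_j f_{ij}\dot x_j,
\]
with the purely metric, velocity-quadratic remainder
\[
\tilde Q_i=\frac{\partial\ln\N}{\partial x_i}\sum_{k,l}g_{kl}\dot x_k\dot x_l-\Big(\sum_k\frac{\partial\ln\N}{\partial x_k}\dot x_k\Big)\sum_j g_{ij}\dot x_j.
\]
Hence \eqref{modEQ} is equivalent to \eqref{orEQ} precisely when $\tilde Q_i$ equals the JK term $\sum_{k,l,j}C^k_{ij}g_{kl}\dot x_l\dot x_j$ for every $i$. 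Since both sides are quadratic in the velocities, this is a pointwise tensorial identity, independent of $\dot x$. Running the identical computation with $\mathbf f=0$ shows that this very identity is what makes $\N$ turn the gyroscopic-free reduced system into the free Lagrangian system for $l^*$; that is, it is exactly the condition that $\N$ be a Chaplygin multiplier for the system without gyroscopic forces, which delivers both implications at once.

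I do not expect a genuine analytic difficulty: the computation is direct. The one place to argue carefully is the final equivalence, where one must use that the gyroscopic contributions on the two sides are literally equal and therefore cannot absorb any discrepancy in the quadratic remainder---so the matching $\tilde Q_i=\sum_{k,l,j}C^k_{ij}g_{kl}\dot x_l\dot x_j$ is \emph{forced} for the ``only if'' direction and is plainly sufficient for the ``if'' direction. In short, the main (and essentially only) task is to extract $\tilde Q_i$ cleanly and to recognize it as the same object that governs the ordinary Chaplygin reducibility criterion.
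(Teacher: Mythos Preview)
Your argument is correct. The direct Lagrangian computation of how the Euler--Lagrange operator of $l^*$ relates to that of $l$ is clean, and the key observation---that with $\mathbf f^*=\N\mathbf f$ the gyroscopic term passes through the time substitution unchanged, $\sum_j f^*_{ij}x'_j=\sum_j f_{ij}\dot x_j$---immediately reduces the question to the gyroscopic-free criterion $\tilde Q_i=\sum_{k,l,j}C^k_{ij}g_{kl}\dot x_l\dot x_j$, which is precisely what ``$\N$ is a Chaplygin multiplier for the system without gyroscopic forces'' means.

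The paper takes a different route: rather than computing on $TS$, it passes to $T^*S$ via the Legendre transformations in the commutative diagram \eqref{dijagram}, writes the reparametrized system in the new momenta $\tilde p=\N p$, and arrives at the Hamiltonian condition \eqref{simple*} on the $\mathbf{JK}$ force term; Remark \ref{dokazLAG} then observes that the derivation \eqref{newEq}--\eqref{HAMver} is insensitive to whether $\N\mathbf f$ is closed, and the diagram transports the conclusion back to the Lagrangian side. Your approach is more elementary and self-contained for this particular proposition, while the paper's detour through the cotangent bundle is what simultaneously yields the almost-symplectic items (d), (e) of Theorem \ref{opsta} and ties the Lagrangian and Hamiltonian Hamiltonizations together in one computation.
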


Propositions \ref{L1prop} and \ref{LAGprop} follow from the derivation given below for the Hamiltonian setting as indicated in Remark \ref{dokazLAG}.

Note that the gyroscopic system \eqref{modEQ} is magnetic if the form \eqref{fzvezda} is closed. In particular,
if $\mathbf f$ is closed, but not exact, then the Lagrangian system \eqref{modEQ} is magnetic only if the condition \eqref{df*} holds.
The condition \eqref{df*} is always satisfied when $n=2$. This is a rather strong condition for $n\ge 3$. When $n=3$, condition \eqref{df*} reduces to the partial differential equation
\[
f_{23}\frac{\partial\N}{\partial x_1}+f_{31}\frac{\partial\N}{\partial x_2} + f_{12}\frac{\partial\N}{\partial x_3}=0.
\]

Finally, it is important to note that even if we consider the exact case $\mathbf f=d\mathbf a$ and the Lagrangians that are linear in velocities, instead of
the equations \eqref{orEQ1} and \eqref{modEQ1} and the gyroscopic form defined by $\mathbf f^*=d(\N\mathbf a)$ it is more natural to consider
the equations \eqref{orEQ} and \eqref{modEQ} with  $\mathbf f^*$ defined as $\mathbf f^*=\N\mathbf f=\N d\mathbf a$. In the latter case, for $n=2$, the form $\mathbf f^*$ is magnetic regardless of
\eqref{condA}.

\subsection{Confomally symplectic structures}
The existence of an invariant measure is closely related to the Hamiltonization problem for magnetic $G$--Chaplygin systems.
We first consider $G$-Chaplygin systems without the gyroscopic term, see \cite{CCL2002, St, EKMR2005}.
For $\mathbf f\equiv 0$, the reduced system \eqref{red:eq} takes the form
\begin{equation}\label{red:eq0}
\dot z=X^0_{red}, \qquad i_{X^0_{red}}(\Omega+\sigma)=-dh.
\end{equation}
 Suppose that the form $\Omega+\sigma$ is conformally symplectic, i.e. there exists a nonvanishing  function $\N$, such that
$d(\mathcal N(\Omega+\sigma))=0$. Since $d\Omega=0$, the last relation can be rewritten as:
\begin{equation}\label{faktor1}
d\N\wedge\Omega+d\N\wedge\sigma+\N d\sigma=0.
\end{equation}

After the time rescaling $d\tau=\N dt$, the equation \eqref{red:eq0} reads
\[
z'=\N^{-1}\dot z=\N^{-1} X^0_{red}=:\tilde X^0_{red}.
\]
The last relation introduces the rescaled vector field $\tilde X^0_{red}$, which is Hamiltonian:
\[
i_{\tilde X^0_{red}}\N(\Omega+\sigma)=-dh.
\]

Therefore, the system in the new time becomes the Hamiltonian system
with respect to the symplectic form $\N(\Omega+\sigma)$.
Then,  according to the Liouville theorem \cite{Ar},  the Hamiltonian vector field $\tilde X^0_{red}$ preserves the standard
measure $\N^{n}(\Omega+\sigma)^n=\N^n\Omega^n$,
\[
 \mathcal L_{\tilde X^0_{red}}(\N^n\Omega^n)=d(i_{\tilde X^0_{red}}(\N^n \Omega^n))=0.
\]
Thus,  for the almost Hamiltonian vector field $X^0_{red}=\N\tilde X^0_{red}$ we have
\[
\mathcal L_{X^0_{red}}(\N^{n-1}\Omega^n)=d(i_{X^0_{red}}(\N^{n-1} \Omega^n))=d(i_{\tilde X^0_{red}}(\N^n \Omega^n))=0,
\]
and the flow of $X^0_{red}$  preserves the measure $\N^{n-1}\Omega^n$.

Now, we consider $G$--Chaplygin systems with a gyroscopic term.

\begin{prop}\label{BazniMnozilac}
The function $\N=\N(x)$ is a conformal factor for the almost symplectic form $\Omega+\sigma+\rho^*\mathbf f$ if and only if it is a conformal factor for
the almost symplectic form $\Omega+\sigma$ and the form $\mathbf f^*=\N \mathbf f$ is magnetic.
\end{prop}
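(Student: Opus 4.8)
The plan is to compute $d\big(\N(\Omega+\sigma+\rho^*\mathbf f)\big)$ directly and to split it into pieces that cannot interfere with one another. Writing
\[
d\big(\N(\Omega+\sigma+\rho^*\mathbf f)\big)=d\big(\N(\Omega+\sigma)\big)+d\big(\N\rho^*\mathbf f\big),
\]
I first simplify the second summand: since $\N=\N(x)$ is basic, $\N\rho^*\mathbf f=\rho^*(\N\mathbf f)=\rho^*\mathbf f^*$, so that term equals $\rho^*(d\mathbf f^*)$, the pullback to $T^*S$ of a $3$-form on the base $S$. The whole problem is then to argue that the two summands must vanish separately.

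The key structural observation is that the two summands are homogeneous of different degrees in the fibre coordinate $p$, hence lie in complementary subspaces. Indeed, $\Omega=\sum_i dp_i\wedge dx_i$ and $\sigma=\sum_{i<j}\sum_k C^k_{ij}(x)p_k\,dx_i\wedge dx_j$ are both homogeneous of degree $1$ under the fibre dilation $m_\lambda\colon (x,p)\mapsto(x,\lambda p)$; since $\N$ depends on $x$ only, $\N(\Omega+\sigma)$ is again homogeneous of degree $1$, and because $d$ commutes with pullback ($m_\lambda^*\circ d=d\circ m_\lambda^*$) it preserves the fibre-homogeneity degree, so $d(\N(\Omega+\sigma))$ is homogeneous of degree $1$ as well. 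On the other hand $\rho^*(d\mathbf f^*)$, being pulled back from $S$, is homogeneous of degree $0$. A sum of homogeneous forms of distinct degrees vanishes if and only if each summand does.

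Consequently $d(\N(\Omega+\sigma+\rho^*\mathbf f))=0$ if and only if simultaneously $d(\N(\Omega+\sigma))=0$ and $\rho^*(d\mathbf f^*)=0$. The former is by definition the statement that $\N$ is a conformal factor for $\Omega+\sigma$. For the latter, $\rho\colon T^*S\to S$ is a surjective submersion, so $\rho^*$ is injective on differential forms, whence $\rho^*(d\mathbf f^*)=0$ is equivalent to $d\mathbf f^*=0$, i.e. to $\mathbf f^*=\N\mathbf f$ being magnetic. This yields the proposition.

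I expect the only genuine subtlety to be the grading step: one must be certain that the degree-$1$ contribution from $d(\N(\Omega+\sigma))$ and the degree-$0$ form $\rho^*(d\mathbf f^*)$ cannot cancel. This is exactly what the homogeneity argument secures. For a reader who prefers an explicit verification, the same conclusion follows by bidegree in the adapted coordinates: $d(\N\rho^*\mathbf f)$ is the unique purely horizontal $(dx\wedge dx\wedge dx)$ component with coefficients independent of $p$, whereas the horizontal component of $d(\N\sigma)$ has coefficients linear in $p$ (the $p_k\,\partial(\N C^k_{ij})/\partial x_l$ terms) and $d(\N\Omega)$ contributes only terms containing a $dp$ factor; so no cancellation against $\rho^*(d\mathbf f^*)$ is possible.
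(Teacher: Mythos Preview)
Your proof is correct and follows essentially the same approach as the paper: both expand $d(\N(\Omega+\sigma+\rho^*\mathbf f))$ and argue that the contribution coming from $\rho^*\mathbf f$ must vanish separately from the rest. The paper phrases the separation as ``only the last two terms are basic'', while you make the underlying reason explicit via fibre-homogeneity (degree $0$ versus degree $1$ in $p$), which is precisely what justifies the paper's informal ``basic versus non-basic'' split.
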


\begin{proof}
The form  $\Omega+\sigma+\rho^*\mathbf f$ is conformally symplectic
with a  conformal factor $\N$ if and only if
\begin{equation}\label{faktor2}
d\N\wedge\Omega+d\N\wedge \sigma+\N d\sigma+ d\N\wedge\rho^*\mathbf f+\N\rho^*d\mathbf f=0.
\end{equation}

Assume that $\N=\N(x)$ is basic.
Since only two last terms are basic, equation \eqref{faktor2} is satisfied if and only if $\N(x)$ satisfies
\eqref{faktor1} and  $\mathbf f^*=\N \mathbf f$ is closed.
\end{proof}

Consider the reduced gyroscopic Chaplygin system \eqref{red:eq}.
If $\N=\N(x)$ is a conformal factor for $\Omega+\sigma+\rho^*\mathbf f$, as above we have
that  the rescaled vector field $\tilde X_{red}=\N^{-1}X_{red}$ is Hamiltonian and preserves the
measure $\N^{n}(\Omega+\sigma+\rho^* \mathbf f)^n=\N^n\Omega^n$. Thus, the reduced gyroscopic Chaplygin system $\dot z=X_{red}$
preserves the same measure as in the case of the absence of gyroscopic forces. This is in accordance with Theorem \ref{novamera}.

The existence of a basic conformal factor, as we will see in Subsection \ref{LEG}, is equivalent to the condition that
$\N$ is the classical Chaplygin multiplier in the Lagrangian framework described above.

\subsection{Chaplygin multipliers: from the Lagrangian to the Hamiltonian framework}\label{LEG}

In the study of  nonholonomic rigid body systems in $\R^n$ (see \cite{FJ2004, Jo4, Jov2018b, Jov2019})
the Chaplygin time reparametrization of Lagrangian systems was transported into the Hamiltonian framework via the Legandre transformation.
Similarly, consider the time substitution $d\tau={\mathcal N} (x)dt$ and the Lagrangian function $l^*(x,x')$ given in \eqref{eq:l*}.
Then the conjugate momenta are
\[
\tilde p_i=\partial l^*/\partial x'_i=\N^2\sum_j g_{ij}x'_j,
\]
and the corresponding Hamiltonian is
\[
\quad h^*(x,\tilde p)=\frac12\sum \frac{1}
{\mathcal{N}^2} g^{ij} \tilde p_i \tilde p_j+v(x).
\]

The following diagram commutes:
\begin{equation}\label{dijagram}
\begin{CD}
 TS\{x,\dot x\} @ > x^{\prime}=\N^{-1}\dot x >> TS\{x,x^{\prime}\} \\
 @ V p=\mathbf g (\dot x) VV   @ VV \tilde p={\mathcal N}^2 \mathbf g (x^{\prime}) V \\
 T^*S\{x,p\}  @ >  \tilde p=\mathcal{\ N} p >>  T^*S\{x,\tilde p\} .
\end{CD}
\end{equation}

Let $\tilde\Omega$ be the canonical symplectic form on $T^*S$ with respect to the coordinates $(x,\tilde p)$. Then
\begin{equation}\label{symCL1}
\tilde\Omega=\sum_i d\tilde p_i\wedge dx_i=\N\Omega+d\N\wedge \theta, \quad \theta=p_1dx_1+\dots p_n dx_n, \quad \Omega=d\theta.
\end{equation}

Thus, $h$ and $h^*$ represent the same Hamiltonian function on $T^*S$ written in two coordinate systems. These coordinate systems are  related
by the non-canonical change of variables
\begin{equation}\label{nonCAN}
(x,p)\longmapsto (x,\tilde p)=(x,\N p).
\end{equation}

Assume that the two-form $\mathbf f^*=\N \mathbf f$ is closed on $S$.

By using the commutative diagram \eqref{dijagram}, we get that
the function $\N$ is a Chaplygin reducing multiplier for the reduced gyroscopic Chaplygin system \eqref{orEQ} (see Subsection \ref{CMLF}) if
and only if the almost Hamiltonian  equations \eqref{ham1}, \eqref{ham2}, after the time reparametrisation
$d\tau={\mathcal N}(x)dt$ and the coordinate transformation  \eqref{nonCAN}
become the Hamiltonian equations
\begin{equation}\label{newEq}
x'_i=\frac{\partial h^*}{\partial \tilde p_i}(x,\tilde p),\qquad
 \tilde p'_i=-\frac{\partial h^*}{\partial x_i}(x,\tilde p)+\N\sum_j f_{ij}(x)\frac{\partial h^*}{\partial \tilde p_j}(x,\tilde p)
\end{equation}
with respect to the twisted symplectic form
\begin{equation}\label{symCL}
\tilde\Omega+\rho^* \mathbf  f^*=\sum_i d\tilde p_i\wedge dx_i+\N\sum_{i<j}f _{ij} dx_i\wedge dx_j.
\end{equation}

Let $\N$ be a nonvanishing function and consider the time reparametrisation
$d\tau={\mathcal N} (x)dt$.  The equations \eqref{newEq} in the original time $t$, after the coordinate transformation  \eqref{nonCAN} take the form
\begin{align}
\label{x11} \dot x_i &= \N \frac{\partial h^*}{\partial \tilde p_i}(x,\tilde p)=\N\N^{-2}\sum_j g^{ij}\tilde p_j=\sum_j g^{ij}p_j ,\\
\label{x10} \dot{\tilde p}_i &= - \N\frac{\partial h^*}{\partial x_i}(x,\tilde p)+\N^2\sum_j f_{ij}(x)\frac{\partial h^*}{\partial \tilde p_j}(x,\tilde p)\\
\nonumber &=-\N\big(\frac{1}{2\N^2}\sum_{j,k}\frac{\partial g^{jk}}{\partial x_i}\tilde p_j\tilde p_k-\frac{1}{\N^{3}}\frac{\partial\N}{\partial x_i}\sum_{j,k} g^{jk} \tilde p_j\tilde p_k+\frac{\partial v}{\partial x_i}-\sum_{j,k} f_{ij} g^{jk}p_k\big)\\
\nonumber &=-\N\big(   \frac{1}{2}\sum_{j,k}\frac{\partial g^{jk}}{\partial x_i}p_j p_k-\frac{1}{\N}\frac{\partial\N}{\partial x_i}\sum_{j,k} g^{jk} p_j p_k +\frac{\partial v}{\partial x_i}    -\sum_{j,k} f_{ij} g^{jk}p_k     \big)
\end{align}

The equations \eqref{ham1} and \eqref{x11} coincides.
From $\tilde p_i=\N p_i$, we get
$\dot{\tilde p}_i=\N\dot p_i+\dot\N p_i$, that is $\dot p_i=\N^{-1}(\dot{\tilde p}_i-\dot\N p_i)$.
Therefore, using equation \eqref{x10}, we obtain
\begin{align}
 \dot p_i & =-\frac{1}{2}\sum_{j,k}\frac{\partial g^{jk}}{\partial x_i}p_j p_k+\frac{1}{\N}\frac{\partial\N}{\partial x_i}\sum_{j,k} g^{jk} p_j p_k
              -\frac{1}{\N}\sum_j \frac{\partial\N}{\partial x_j}\dot x_j p_i-\frac{\partial v}{\partial x_i}+\sum_{j,k} f_{ij} g^{jk}p_k \label{HAMver}  \\
         & =  -\frac{\partial h}{\partial x_i}(x,p)+\frac{1}{\N}\frac{\partial\N}{\partial x_i}\sum_{j,k} g^{jk} p_j p_k
              -\frac{1}{\N}\sum_{j,k}\frac{\partial\N}{\partial x_j}g^{jk}p_k p_i +\sum_{j,k} f_{ij} g^{jk}p_k \nonumber  \\
 &= -\frac{\partial h}{\partial x_i}(x,\tilde p)+\sum_{j,k,l=1}^n \N^{-1}\big( \delta^k_j \frac{\partial\N}{\partial x_i}-\delta^k_i \frac{\partial\N}{\partial x_j}   \big)g^{jl} p_kp_l+\sum_{j=1}^n f_{ij}(x) \frac{\partial h}{\partial p_j}, \nonumber
 \end{align}

The equations \eqref{ham2}, \eqref{hamPi}, and \eqref{HAMver} imply that
the reduced gyroscopic Chaplygin system \eqref{ham1}, \eqref{ham2} after the time reparametrization $d\tau=\N(x)dt$ and the change
of variables \eqref{nonCAN} takes the twisted canonical form \eqref{newEq} if and only if we have the equality of the quadratic forms in momenta:
\begin{equation}\label{simple*}
\sum_{j,k,l=1}^n C^k_{ij}(x)g^{jl}p_kp_l=\sum_{j,k,l=1}^n \N^{-1}\big( \delta^k_j \frac{\partial\N}{\partial x_i}-\delta^k_i \frac{\partial\N}{\partial x_j}   \big)g^{jl} p_kp_l,\quad i=1,\dots,n.
\end{equation}
In the invariant form, \eqref{simple*} can be written as the condition on {\bf JK} force term \eqref{hamPi}:
\begin{equation}
\Pi(x,p)=\N^{-1}(p,\mathbf g^{-1}(p))d\N-\N^{-1}(d\N,\mathbf g^{-1}(p)) p,\label{x5}
\end{equation}

\begin{rem}\label{dokazLAG}
Note that the equations \eqref{newEq}--\eqref{HAMver} are valid without assumption that the form $\mathbf f^*=\mathcal N\mathbf f$ is closed, i.e., when
$\tilde\Omega+\rho^* \mathbf  f^*$ (see \eqref{symCL}) is an almost symplectic form as well. In this way, according to the commutative diagram \eqref{dijagram},
they imply Propositions \ref{L1prop} and \ref{LAGprop}.
\end{rem}

It is clear that the sufficient conditions for the identities \eqref{simple*} are:
\begin{equation}\label{simple1}
C^k_{ij}(x)=\N^{-1}\big( \delta^k_j \frac{\partial\N}{\partial x_i}-\delta^k_i \frac{\partial\N}{\partial x_j}   \big), \qquad i,j,k=1,\dots,n.
\end{equation}

Thus,
if the (1,2)-tensor field $\mathbf C$ defined in Remark \ref{giroskopskiTenzor} satisfies \eqref{simple1}, $\N$ is a Chaplygin reducing multiplier
for the reduced gyroscopic $G$--Chaplygin system \eqref{ham1}, \eqref{ham2}, e.g., \eqref{orEQ}.
Then the $(1,2)$-tensor $\mathbf C$ and the two-form $\sigma$ in the invariant form can be written as
\begin{align}
&\mathbf C(X,Y)=\N^{-1}X(\N)Y-\N^{-1}Y(\N)X,\label{simple}\\
&\sigma=\N^{-1} d\N \wedge \theta.\label{sigmaN}
\end{align}

Moreover, from \eqref{symCL1}, \eqref{symCL},  and \eqref{sigmaN}, we obtain that  the form $\Omega+\sigma+\rho^* \mathbf  f$ is conformally symplectic with $\mathcal N$ a conformal factor being a Chaplygin reducing multiplier:
\[
\tilde\Omega+\rho^*\mathbf f^*=\N(\Omega+\sigma+\rho^* \mathbf  f).
\]

In the terminology of \cite{Naranjo2019a, Naranjo2019b}, the equations \eqref{simple1} and \eqref{simple} mean that
the gyroscopic tensor $\mathcal T=-\mathbf C$ is $\phi$--simple, where $\phi=\ln\N$. Following Garcia-Naranjo, we say that a (1,2)-tensor $\mathbf C$ is
$\ln\N$-\emph{simple} if \eqref{simple} holds.

In \cite{Naranjo2020} the  following inverse   statement is proved: if a two-form $\Omega+\sigma$ is conformally symplectic with a basic
conformal factor $\N(x)$, then the gyroscopic tensor $\mathcal T$ is $\ln\N$-simple.
Now, based on the above considerations, we can reformulate  and extend  Theorem 3.21 from \cite{Naranjo2020}  on $\phi$-simple Chaplygin systems
as follows:

\begin{thm}\label{opsta}
{\rm (i)} Assume that two-form $\mathbf f^*=\N \mathbf f$ is closed on $S$.  The conditions (a), (b), and (c) listed below are equivalent. The conditions (d) and (e) are equivalent, while (e) implies (c):
\begin{itemize}
\item[(a)] the reduced gyroscopic Chaplygin system \eqref{orEQ} after the time reparametrization $d\tau=\N(x)dt$ takes the form of the magnetic Lagrangian system  \eqref{modEQ};
\item[(b)]  the reduced gyroscopic Chaplygin system \eqref{ham1}, \eqref{ham2} after the time reparametrization $d\tau=\N(x)dt$ and the change
of variables \eqref{nonCAN} takes the twisted canonical form \eqref{newEq};
\item[(c)]  the {\bf JK} force term \eqref{hamPi} on $T^*S$ has the form \eqref{x5};
\item[(d)] the almost symplectic form $\Omega+\sigma+\rho^* \mathbf  f$ is conformally
symplectic with the base conformal factor $\N(x)$  and $\sigma$ is given by \eqref{sigmaN};
\item[(e)] the (1,2)-tensor $\mathbf C$ is $\ln\N$-simple,  that is, it is given by \eqref{simple}.
\end{itemize}

{\rm (ii)} If $\N(x)$ is a Chaplygin multiplier, then the reduced equations of motion \eqref{ham1}, \eqref{ham2} possess the
base invariant measure
\begin{equation}\label{invMES}
\N^{n-1}\Omega^n.
\end{equation}

{\rm (iii)} If $n = 2$, then the statement {\rm (ii)} can be inverted: if the reduced equations of motion \eqref{ham1}, \eqref{ham2} possess the base
invariant measure
\[
\N(x)dp_1\wedge dp_2 \wedge dx_1\wedge dx_2,
\]
then, after the time reparametrization $d\tau=\N(x)dt$ the reduced equations become the usual
Hamiltonian equations on $T^*S$ with respect to the twisted symplectic form \eqref{symCL}.
For $n=2$, all items (a), (b), (c), (d), (e) are equivalent.
\end{thm}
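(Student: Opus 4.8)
The plan is to treat the three parts by assembling the computations already carried out in Subsection \ref{LEG}, reserving genuinely new work for the $n=2$ inversion in part (iii).

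For part (i), I would first record that the chain of explicit identities \eqref{newEq}--\eqref{HAMver} was derived \emph{without} assuming $\mathbf f^*=\N\mathbf f$ closed (Remark \ref{dokazLAG}); under the standing hypothesis that $\mathbf f^*$ is closed, $\tilde\Omega+\rho^*\mathbf f^*$ is a bona fide twisted symplectic form and these identities show that the time-rescaled, \eqref{nonCAN}-transformed system is the twisted canonical system \eqref{newEq} precisely when the quadratic-form equality \eqref{simple*}, i.e. its invariant form \eqref{x5}, holds. This yields (b)$\Leftrightarrow$(c) at once. Passing through the commutative diagram \eqref{dijagram} identifies the Hamiltonian system \eqref{newEq} with the Lagrangian magnetic system \eqref{modEQ}, giving (a)$\Leftrightarrow$(b). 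For (d)$\Leftrightarrow$(e), I would substitute the component form of $\ln\N$-simplicity \eqref{simple1} directly into the expression for $\sigma$ and check that it collapses to $\N^{-1}d\N\wedge\theta$, which is \eqref{sigmaN}; conversely \eqref{sigmaN} reads off \eqref{simple1} componentwise, so (e) is the same statement as the ``$\sigma$ is \eqref{sigmaN}'' clause of (d). The conformal-symplectic clause of (d) is then automatic, since $d(\N(\Omega+\sigma))=d\N\wedge\Omega+d(d\N\wedge\theta)=d\N\wedge\Omega-d\N\wedge\Omega=0$, and Proposition \ref{BazniMnozilac} supplies the $\rho^*\mathbf f$ contribution using the closedness of $\mathbf f^*$. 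Finally (e)$\Rightarrow$(c) is immediate, since \eqref{simple1} is a pointwise-sufficient condition for the quadratic identity \eqref{simple*}.

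For part (ii), I would reproduce the Liouville argument used in the gyroscopic-free case, but with the twisted form. Assuming $\N$ is a Chaplygin multiplier, the rescaled field $\tilde X_{red}=\N^{-1}X_{red}$ is Hamiltonian for $\tilde\Omega+\rho^*\mathbf f^*$, so by the Liouville theorem it preserves the top power $(\tilde\Omega+\rho^*\mathbf f^*)^n$. Because $\rho^*\mathbf f^*$ is basic, every term of the binomial expansion containing a factor of it carries too few $d\tilde p$'s (equivalently, more than $n$ distinct $dx$'s) to survive, so $(\tilde\Omega+\rho^*\mathbf f^*)^n=\tilde\Omega^n$; and from $\tilde\Omega+\rho^*\mathbf f^*=\N(\Omega+\sigma+\rho^*\mathbf f)$ together with the same degeneracy applied to $\Omega+\sigma+\rho^*\mathbf f$ one gets $\tilde\Omega^n=\N^n\Omega^n$. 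Writing $X_{red}=\N\tilde X_{red}$ and applying Cartan's formula to the top form $\N^{n-1}\Omega^n$ gives $i_{X_{red}}(\N^{n-1}\Omega^n)=\N^n\,i_{\tilde X_{red}}\Omega^n=i_{\tilde X_{red}}(\N^n\Omega^n)=i_{\tilde X_{red}}\tilde\Omega^n$, whose exterior derivative is $\mathcal L_{\tilde X_{red}}\tilde\Omega^n=0$; hence $\mathcal L_{X_{red}}(\N^{n-1}\Omega^n)=0$, which is \eqref{invMES}.

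For part (iii), the observation that in dimension two every $2$-form on $S$ is closed makes \eqref{df*} automatic, so $\mathbf f^*=\N\mathbf f$ is magnetic and the standing hypothesis of part (i) holds for free. The crux is to upgrade (e)$\Rightarrow$(c) to an equivalence: for $n=2$ the tensor $\mathbf C$ (skew in its lower indices, by Remark \ref{giroskopskiTenzor}) has only the two components $C^1_{12},C^2_{12}$, and expanding \eqref{simple*} for $i=1,2$ reduces to the relations $\Delta^k g^{2l}+\Delta^l g^{2k}=0$ and $\Delta^k g^{1l}+\Delta^l g^{1k}=0$ for the difference $\Delta^k$ between $C^k_{12}$ and its $\ln\N$-simple value; the diagonal specializations $k=l$ with $g^{11},g^{22}>0$ from positive-definiteness of $\mathbf g$ force $\Delta^1=\Delta^2=0$, i.e. the full tensor identity \eqref{simple1}, so (c)$\Leftrightarrow$(e) and all five conditions coincide. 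For the announced inversion, I would combine Theorem \ref{novamera} --- a basic density $\nu=\N$ exists iff $\Theta=d\ln\N$ --- with the direct computation that for $n=2$ the one-form \eqref{Theta} equals $C^2_{12}\,dx_1-C^1_{12}\,dx_2$, so $\Theta=d\ln\N$ is exactly $\ln\N$-simplicity, namely (e); the equivalences then deliver (b), i.e. the twisted Hamiltonian form with respect to \eqref{symCL}.

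The main obstacle I anticipate is precisely the $n=2$ implication (c)$\Rightarrow$(e): passing from an equality of quadratic forms in the momenta to an equality of the underlying $(1,2)$-tensors. This fails in general --- for $n\ge 3$ one only has the sufficiency \eqref{simple1}$\Rightarrow$\eqref{simple*} --- and the argument succeeds in dimension two only because the small number of components, together with the positive-definiteness of $\mathbf g$, forces the homogeneous linear system on $\Delta^k$ to admit the trivial solution alone; I would take care to verify that the diagonal specializations already exhaust the constraints needed.
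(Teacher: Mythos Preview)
Your proposal is correct and tracks the paper's own development in Section~5: the equivalences (a)$\Leftrightarrow$(b)$\Leftrightarrow$(c) are exactly the content of \eqref{newEq}--\eqref{x5} together with the diagram \eqref{dijagram}, (d)$\Leftrightarrow$(e) is the componentwise reading of \eqref{sigmaN} against \eqref{simple1} plus Proposition~\ref{BazniMnozilac}, and (e)$\Rightarrow$(c) is the sufficiency \eqref{simple1}$\Rightarrow$\eqref{simple*}. The paper does not supply a separate formal proof of the theorem --- it is a summary of the preceding subsections --- and for item~(iii) it simply cites \cite{BoMa, BBM2013}; your explicit $n=2$ argument that (c)$\Rightarrow$(e) via the homogeneous linear system on $\Delta^k$ (with the diagonal specializations $k=l$ and positive-definiteness of $\mathbf g$ forcing $\Delta^1=\Delta^2=0$) and your identification $\Theta=d\ln\N\Leftrightarrow$\eqref{simple1} are genuinely more than the paper gives, and both are correct.

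One small inaccuracy worth noting in part~(ii): the identity $\tilde\Omega+\rho^*\mathbf f^*=\N(\Omega+\sigma+\rho^*\mathbf f)$ that you invoke to get $\tilde\Omega^n=\N^n\Omega^n$ holds only under (d)/(e), not under the weaker hypotheses (a)--(c) that ``Chaplygin multiplier'' refers to (cf.\ Remark~\ref{rem:ref}, where (c) and (e) are explicitly distinguished). The conclusion $\tilde\Omega^n=\N^n\Omega^n$ is nevertheless correct and follows directly from \eqref{symCL1}, $\tilde\Omega=\N\Omega+d\N\wedge\theta$, since $d\N\wedge\theta$ is purely in the $dx$'s and the same degeneracy argument you used for $\rho^*\mathbf f^*$ applies; equivalently, it is the Jacobian $\N^n$ of the fibrewise map $p\mapsto\N p$. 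With that substitution your Liouville--Cartan argument goes through unchanged.
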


Theorem \ref{opsta} relates the classical Chaplygin Hamiltonization (items (a), (b), (c) see \cite{Chaplygin1911, FJ2004}) and the
Chaplygin Hamiltonization within the framework of almost symplectic forms and the gyroscopic tensor field $\mathbf C$ (items (d) and (e), see \cite{CCL2002, Naranjo2020}).

For the Veselova problem on $SO(n)$ (see \cite{FJ2004}) it is proved in \cite{Naranjo2020} that (c) implies (d) as well. A similar statement can be proved for the nonholonomic problem of a ball rolling over a sphere considered in \cite{Jov2018b}.

\begin{rem}\label{rem:ref}
Note that \eqref{simple*} implies that the symmetric parts of the tensors
\[
\sum_{j=1}^n C^k_{ij}(x)g^{jl} \quad \text{and} \quad  
\sum_{j=1}^n \N^{-1}\big( \delta^k_j \frac{\partial\N}{\partial x_i}-\delta^k_i \frac{\partial\N}{\partial x_j}   \big)g^{jl}
\]
are equal, but the conditions \eqref{simple*} and \eqref{simple1}, i.e, the items (c) and (e) of Theorem \ref{opsta} do not need to be equivalent.
For example, one can have $\mathbf C$ and $\sigma$ different from zero, but with $i_{X_{red}}\sigma = 0$.
Then $\Pi= 0$ and $X_{red}$ is a Hamiltonian vector field with respect to the magnetic symplectic form $\Omega +\rho^*\mathbf f$. Thus, the constant
$\N= 1$ can be chosen as a Chaplygin multiplier.
As a result, the right hand side of \eqref{simple1} is zero, while the left hand side of \eqref{simple1} is different from zero.
\end{rem}

Further, from  Theorem \ref{opsta} it follows that if a Chaplygin system without gyroscopic force allows Hamiltonization with
a basic multiplier $\mathcal N$, and if $\mathcal N\mathbf f$ is closed, then the system with reduced gyroscopic force $\mathbf f$ also
allows Hamiltonization and vice versa: if a Chaplygin system with gyroscopic force $\mathbf f$ allows Hamiltonization with
a basic multiplier $\mathcal N$ (either in the sense that $\mathcal N$ is a conformal factor for the almost symplectic form $\Omega+\sigma+\rho^*\mathbf f$ and according  Proposition \ref{BazniMnozilac} $\mathcal N\mathbf f$ is closed,  or
in the sense of the classical Hamiltonization where $\mathcal N\mathbf f$ is also closed) then the system without the gyroscopic force $\mathbf f$ allows Hamiltonization as well.

For $n=2$, the equations \eqref{orEQ} are
\begin{align}
&\frac{d}{dt}\frac{\partial l}{\partial \dot x_1}=\frac{\partial l}{\partial x_1}+S(x)\dot x_2, \label{ln1}\\
&\frac{d}{dt}\frac{\partial l}{\partial \dot x_2}=\frac{\partial l}{\partial x_2}-S(x)\dot x_1, \label{ln2}
\end{align}
where
\[
S(x)=\sum_{k,l=1}^2 C^k_{12}(x) g_{kl}\dot x_l+ f_{12}(x).
\]

Item (iii) of Theorem \ref{opsta} is given in \cite{BoMa, BBM2013},
where the Lagrangian systems of the form \eqref{ln1}, \eqref{ln2}, for $f_{12}(x) \neq 0$
are called \emph{generalised Chaplygin systems}.

\section{Chaplygin ball with a gyroscope rolling over a plane and over a sphere}\label{sec6}

\subsection{Chaplygin ball with a gyroscope rolling without slipping}\label{sec6.1}
One of the most famous solvable problems in nonholonomic mechanics describes
rolling without slipping of a balanced,  dynamically nonsymmetric   ball over a horizontal
plane (Chaplygin \cite{Ch1}). After \cite{Ch1}, a balanced,  dynamically nonsymmetric ball is called the \emph{Chaplygin ball}, see \cite{Kozlov2, AKN, BoMa, BM2008, BM, BN, BMT, BBM2013}.

Let $O_{\bf B}$, $a$, $m$, $\mathbb I=\diag(A,B,C)$, be the center, radius, mass and the inertia operator of a  ball $\bf B$.
There are three possible configurations in the problem of rolling without slipping of the Chaplygin ball $\bf B$ over a fixed sphere $\mathbf S$ of the radius $b$:

\begin{itemize}

\item[(i)] rolling of $\bf B$ over the outer surface of $\mathbf S$ and $\mathbf S$ is outside $\bf B$ (see the leftmost part of Fig. \ref{fig:3scen});

\item[(ii)] rolling of $\bf B$ over the inner surface of $\mathbf S$ ($b>a$)(see the central part of Fig. \ref{fig:3scen});

\item[(iii)] rolling of $\bf B$ over the outer surface of $\mathbf S$ and $\mathbf S$ is within $\bf B$; in this case
$b<a$ and the rolling ball $\bf B$ is a spherical shell (see the rightmost part of Fig. \ref{fig:3scen}).
\end{itemize}

Let $\varepsilon={b}/({b\pm a})$,
where we take "$+$" for the case (i) and "$-$" in the cases
(ii) and (iii) and let $D=m a^2$.
The equations of motion in the frame attached to the ball can be written in the form
\begin{equation}
\dot{\vec{\mathbf k}}=\vec{\mathbf  k}\times\vec\omega,
\qquad \dot{\vec{\gamma}}=\varepsilon
\vec\gamma\times\vec\omega,
\label{Chap}
\end{equation}
where $\omega$ is the angular velocity of the ball,
$\vec{\mathbf  k}=\mathbb I \vec\omega+ D\vec\omega-D \langle \vec\omega,\vec\gamma\rangle \vec\gamma$
is the angular momentum of the ball with respect to the point of contact, and $\gamma$ is the unit normal to the sphere $\mathbf S$ at the contact point.

When $b$ tends to infinity, then $\varepsilon$ tends to $1$ and $\vec\gamma$ tends to the unit vector that is constant in the fixed reference frame.
This way,  for $\varepsilon=1$,    we obtain the equations of motion of the Chaplygin ball rolling over the plane orthogonal to $\vec\gamma$.

\begin{figure}[ht]\label{fig:3scen}
\includegraphics[width=95mm]{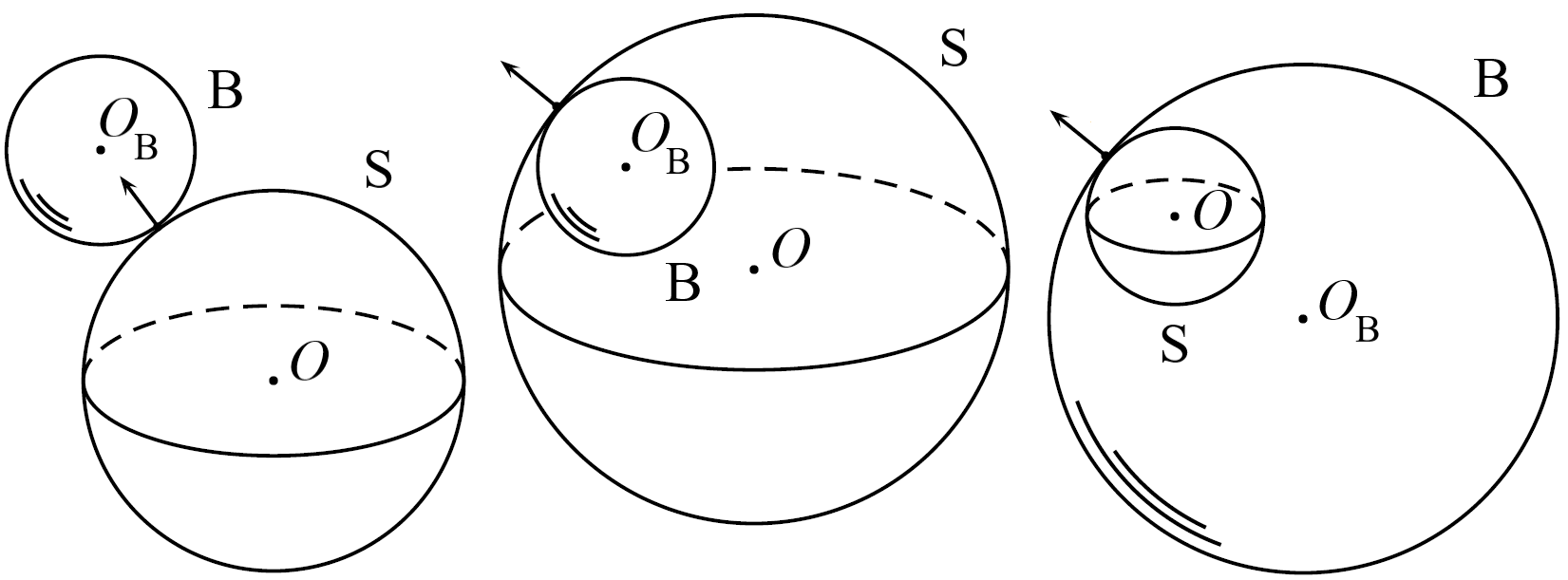}
\caption{Rolling of the ball $\bf B$ with center $O_{\bf B}$ over the sphere $\mathbf S$ with center $O$: three scenarios}
\end{figure}

An invariant measure of the system was derived by Chaplygin for $\varepsilon=1$ \cite{Ch1}, and by Yaroshchuk for $\varepsilon \ne 1$ \cite{Ya}. Remarkably, for $\varepsilon=-1$, which is the case (iii) above with $a=2b$,  the problem is integrable  (see \cite{BF, BFM, BM3}).

Next, we assume that a gyroscope is placed in a ball $\bf B$ such
that the mass center of the system coincides with the geometric center $O_{\bf B}$ of the ball.
The addition of a gyroscope  to the problem is equivalent to the addition of a constant angular momentum $\vec{\kappa}$
directed along the axis of the gyroscope to $\vec{\mathbf k}$ \cite{Bob1892,Zhuk1893}:
\begin{equation}
\frac{d}{dt}\big(\vec{\mathbf k}+\vec\kappa\big)=(\vec{\mathbf k}+\vec\kappa)\times\vec\omega,
\qquad \dot{\vec{\gamma}}=\varepsilon
\vec\gamma\times\vec\omega.
\label{Chap1}
\end{equation}
As above,
$
\vec{\mathbf  k}=\mathbb I \vec\omega+ D\vec\omega-D \langle \vec\omega,\vec\gamma\rangle \gamma,
$
where $D=a^2m$, $m$ is the mass of the system (ball with gyroscope), $\mathbb I$ is a new inertia operator that is described below (see \eqref{bobiljev}) together with the momentum $\kappa$ for the Bobilev symmetric case.

Markeev proved that the equations of motion for the rolling over the plane ($\varepsilon=1$) can be resolved in quadratures \cite{Markeev1985}.
The analysis of the bifurcation diagram and  the topology of the phase space of  the Markeev case is studied in \cite{Moskvin2009} and \cite{Zhila2020}, respectively.

There are two famous  classical cases of the system  \eqref{Chap1} for $\varepsilon=1$  where the quadratures are given in elliptic functions. These cases were studied by Bobilev \cite{Bob1892} and Zhukovskiy \cite{Zhuk1893}.

 In the Bobilev case the central ellipsoid of inertia of the ball $\bf B$ is rotationally symmetric
and the gyroscope axis coincides to the axis of symmetry.
Let $O_{\bf B}\vec{\mathbf e}_1\vec{\mathbf e}_2\vec{\mathbf e}_3$ and $O_{\bf B}\vec{\mathbf e}'_1\vec{\mathbf e}'_2\vec{\mathbf e}'_3$ be the moving frames attached to the ball $\bf B$
and the gyroscope in which
the inertia operator has the forms $\mathbb I_1=(A_1,A_1,C_1)$ and  $\mathbb I_2=(A_2,A_2,C_2)$, respectively.
It is assumed that  the axis of the gyroscope is fixed with respect to the ball
and coincides with the axis of symmetry of the inertia ellipsoid of the ball ($\vec{\mathbf e}_3=\vec{\mathbf e}'_3$) and
 that the forces applied to the gyroscope do not induce torque about the axis of the gyroscope.
Thus, the gyroscope  rotates with a constant angular velocity ${\omega'_3}$ about the axis of symmetry.
Then the operator $\mathbb I$  and the momentum $\vec\kappa$ in \eqref{Chap1} for the Bobilev case are given by:
\begin{equation}\label{bobiljev}
\mathbb I=\diag(A,A,C)=\diag(A_1+A_2,A_1+A_2,C_1) \quad \text{and} \quad \vec\kappa=C_2\omega'_3\vec{\mathbf e}_3.
\end{equation}

In the Zhukovskiy case there is an additional assumption, (\emph{called the Zhukovskiy condition}):
\begin{equation}\label{uslovZh}
C_1=A_1+A_2,
\end{equation}
that is,  it is assumed that  $\mathbb I$ is proportional to the identity matrix $\mathbb E=\diag(1,1,1)$.

Demchenko used the Zhukovskiy condition to integrate the problem of rolling of the gyroscopic ball over a sphere \cite{Dem1924} (see also \cite{DGJ}).
The integrability  of the problem of rolling of the gyroscopic ball over a sphere with the Bobilev conditions \eqref{bobiljev}  can be found  in  \cite{BoMa}.
The question about the existence of an integrable case for a dynamically nonsymmetric ball with a gyroscope rolling over a sphere is still  open.
 Another natural extension of the problem of the ball rolling over a sphere is recently given in \cite{DGJRCD, DGJ2022}.

\subsection{Chaplygin ball with a gyroscope rolling without slipping and twisting}\label{sec6.4}
One can consider the additional nonholonomic constraint $\langle\omega,\gamma\rangle=0$ describing \emph{no-twisting} condition: the ball $\bf B$ does not
rotate around the normal at the contact point and is called a \emph{rubber Chaplygin ball}. Then the momentum with respect to the contact point can be expressed
as  $\vec{\mathbf k}=\mathbf  I\vec\omega$, $\mathbf  I=\mathbb  I+D\mathbb  E$.
The gyroscopic equations take the form
\begin{equation}\label{e3a}
\frac{d}{dt}\big(\vec{\mathbf k}+\vec\kappa\big)=(\vec{\mathbf k}+\vec\kappa)\times \vec\omega+\lambda \vec\gamma, \qquad
\dot{\vec{\gamma}}=\varepsilon \vec\gamma\times\vec\omega,
\end{equation}
where the Lagrange multiplier is given by
$
\lambda=-\langle \vec\gamma,\mathbf  I^{-1}((\vec{\mathbf k}+\vec{\kappa})\times \vec\omega)\rangle /\langle \vec\gamma,\mathbf
I^{-1}\vec\gamma\rangle.
$

The system has an invariant measure with the same density as in the absence of a gyroscope
(see \cite{EKMR2005} for $\varepsilon=1$ and \cite{EK2007} for $\varepsilon \ne 1$). As in the Markeev integrable case,
for $\varepsilon=1$ the system is integrable according to the Euler-Jacobi theorem. This is proved by Borisov, Mamaev and Kilin in \cite{BoMa}
for the Veselova problem with a gyroscope, which is described by the same system of equations.
Borisov,  Bizyaev,  and Mamaev also pointed out  the integrability of the equations \eqref{e3a} for
$\varepsilon\ne 1$  in the case of the dynamical symmetry $A=B$ if the gyroscope
is oriented in the direction of the axis of the dynamical symmetry,  which gives the Bobilev conditions \eqref{bobiljev}
(see Table 2 in  \cite{BorBizMam2013}). Borisov and Mamaev  proved the
integrability of the problem without the gyroscope,  for $\varepsilon=-1$ \cite{BM2007}, providing analogy with the
non-rubber rolling.

The system of a Chaplygin ball with a gyroscope rolling without slipping and twisting over a sphere deserves to be studied in more detail. In order to describe its reduction and Hamiltonization, we will
consider a general problem in $\R^n$.

\section{The rolling of a gyroscopic ball without slipping and twisting in $\R^n$}\label{7}

\subsection{Rolling of a ball without slipping and twisting over a sphere}
The aim of this Section is to generalize the considerations from Section \ref{sec6} from $\R^3$ to $\R^n$, for any $n > 3$.
We start with the situation without gyroscopic or magnetic forces,
following \cite{Jov2018b, GajJov2019, GajJov2019b}. We consider in this Subsection the rolling without
slipping and twisting of an $n$-dimensional ball $\mathbf B$ of radius $a$ over the $(n-1)$-dimensional
fixed sphere $\mathbf S$ of radius $b$. There are three possible scenarios, in a full analogy with the three configurations described at the beginning of Section \ref{sec6.1} for $n=3$, recall Fig. \ref{fig:3scen}.

Consider the space frame $\R^n(\mathbf x)$ with the origin $O$ at the center of the fixed sphere $\mathbf S$ and the moving frame $\R^n(\mathbf X)$
with the origin $O_{\bf B}$ at the center of the rolling ball $\mathbf B$. The mapping from the moving to the space frame is given by $\mathbf x=g\mathbf X+\mathbf r$, where $g\in SO(n)$ is a rotation matrix and $\mathbf r=\overrightarrow{OO_{\bf B}}$
is the position vector of the ball center $O_{\bf B}$ in the space frame.
The configuration space $Q$ is the {direct product} of the Lie
group $SO(n)$ and the sphere $\mathcal S=\{\mathbf r\in\R^n\,\vert\,(\mathbf r,\mathbf r)=(b\pm a)^2\}$.

\begin{rem}
Here and below, we take the sign  "$+$"  for the case (i) and the case "$-$"  for the cases (ii) and (iii) of the three possible scenarios in analogy
with the three cases from the beginning of Section 6.1.
\end{rem}

Let $\omega=g^{-1}\dot g$ be the angular velocity of the ball in the moving frame, $m$ be the mass of the ball, and $\mathbb I: so(n) \to so(n)$ the inertia operator.
We additionally assume that the ball is balanced, i.e., its geometric center coincides with the mass center. We will call such a system a Chaplygin ball in $\R^n$.
Then the Lagrangian of the system is given by
\begin{equation}
L(g,\mathbf r,\omega,\dot{\mathbf r})=\frac12 \langle\mathbb I\omega,\omega\rangle+\frac12 m\langle \dot{\mathbf r},\dot{\mathbf r}\rangle,
\label{lagr}
\end{equation}
where now $\langle\cdot,\cdot\rangle$ is
the invariant scalar product proportional to the Killing form on $so(n)$ ($\langle\cdot,\cdot\rangle=-\frac12\tr(\cdot\circ\cdot)$)
and the Euclidean scalar product in $\R^n$, respectively.

The direction $\overrightarrow{OA}/\vert \overrightarrow{OA}\vert$ of the contact point $A$ in the frame attached to the ball is given by the unit vector
$\gamma=\frac{1}{b\pm a}g^{-1}\mathbf r$. It is invariant with respect to the diagonal
\emph{left} $SO(n)$-action:
$\tilde g\cdot (g,\mathbf r)=(\tilde g g,\tilde g\mathbf r)$, $\tilde g\in SO(n)$.  The action  defines $SO(n)$-bundle
\begin{equation}
\label{principal}
\xymatrix@R28pt@C28pt{
SO(n) \ar@{^{}->}[r]  & Q=SO(n) \times \mathcal S \ar@{^{}->}[d]^{\pi}  \\
  & S^{n-1}=Q/SO(n)  }
\end{equation}
with the submersion $\pi$ given by 
\[
\gamma=\pi(g,\mathbf r)=\frac{1}{b\pm a}g^{-1}\mathbf r.
\]

The contact point $A$ of the ball in the moving frame is $\mathbf X_A=-(\pm a\gamma)$.
The condition that the ball is rolling without
slipping is that the velocity $\dot{\mathbf x}_A$ of the contact point in the space frame is equal to zero
\[
0=\dot{\mathbf x}_A=\frac{d}{dt}\big(g\mathbf X_A+\mathbf r\big)=\mp a \dot g\gamma+\dot{\mathbf r}=\mp a(\dot gg^{-1})g\gamma+\dot{\mathbf r}.
\]

This leads to the constraint $\dot{\mathbf r}=\pm\frac{a}{b\pm a}\Omega \mathbf r$, where $\Omega=\Ad_g\omega=\dot gg^{-1}$ is the angular velocity in the space frame.
On the other hand, the condition of no twisting at the contact point can be written as the condition on $\Omega$: $\Omega\in \mathbf r\wedge\R^n$. The same condition can be written in terms of $\omega$: $\omega\in \gamma\wedge \R^n$. For more detail, see \cite{Jov2018b}.
The constraints determine the distribution
\[
\mathcal D_{(g,\mathbf r)}=\{(\omega,\dot{\mathbf r})\in T_{(g,\mathbf r)} SO(n)\times \mathcal S\,\vert\,
\dot{\mathbf r}=\pm\frac{a}{b\pm a}(\Ad_g\omega)\mathbf r, \,\omega\in g^{-1}\mathbf r\wedge\R^{n}\}
\]
of rank $(n-1)$, a principal connection of the bundle \eqref{principal}.
The Lagrangian $L$ from \eqref{lagr} is $SO(n)$--invariant as well. Thus,
an $n$--dimensional Chaplygin ball rolling without slipping and twisting
over a fixed sphere in $\R^{n}$ is a $SO(n)$--Chaplygin system. It
reduces to the tangent bundle $TS^{n-1}\cong \mathcal D/SO(n)$.

As in the three-dimensional case, we set $\varepsilon={b}/({b\pm a})$.
The horizontal lift $\dot\gamma^h\vert_{(g,\mathbf
r)}=(\omega,\mathbf V)$ is given by:
\begin{eqnarray*}
&& \omega=\frac{1}{\varepsilon}\gamma\wedge\dot\gamma,\\
&& \mathbf V=\dot{\mathbf r}=(b\pm
a)\frac{d}{dt}(g\gamma)=(b\pm a)\big(1-\frac{1}\varepsilon\big)g\dot\gamma.
\end{eqnarray*}

The  reduced Lagrangian $l$ and the $(0,3)$-tensor field $\Sigma$   are
(see \cite{Jov2018b})
\begin{align}
\label{eq:l}& l(\gamma,\dot\gamma) =\frac12 \mathbf g(\dot\gamma,\dot\gamma)=-\frac{1}{4\varepsilon^2}\tr(\mathbf
I(\gamma\wedge\dot\gamma)\circ(\gamma\wedge\dot\gamma) )=-\frac{1}{2\varepsilon^2}\langle\mathbf
I(\gamma\wedge\dot\gamma) \gamma,\dot\gamma\rangle,\\
\label{7.4}&\Sigma(X,Y,Z)\vert_\gamma=\frac{2\varepsilon-1}{2\varepsilon^3}\tr(\mathbf
I(\gamma\wedge X)\circ (Y\wedge Z))=\frac{2\varepsilon-1}{\varepsilon^3}\langle\mathbf
I(\gamma\wedge X)Y,Z\rangle,
\end{align}
 where, as in the three-dimension,
$\mathbf I=\mathbb I+D\cdot \mathrm{Id}_{so(n)}$ and $D=m a^2$.
We have
\begin{equation}\label{7.5a}
\frac{\partial l}{\partial
\gamma}=\frac{1}{\varepsilon^2}\mathbf
I(\gamma\wedge\dot\gamma)\dot\gamma, \quad \frac{\partial
l}{\partial \dot \gamma}=-\frac{1}{\varepsilon^2}\mathbf
I(\gamma\wedge\dot\gamma)\gamma, \quad \mathbf{JK}(\dot\gamma,\delta\gamma)=\frac{2\varepsilon-1}{\varepsilon^3}\langle\mathbf
I(\gamma\wedge \dot\gamma)\dot\gamma,\delta\gamma\rangle
\end{equation}

Therefore, the reduced Chaplygin equations \eqref{ChaplyginRed} without gyroscopic forces are:
\begin{equation}\label{REDUCED}
\delta l-\mathbf{JK}(\dot\gamma,\delta\gamma)=\Big\langle\frac{1}{\varepsilon^2}\frac{d}{dt}\big(\mathbf
I(\gamma\wedge\dot\gamma)\gamma\big)+\frac{1-\varepsilon}{\varepsilon^3}\mathbf
I(\gamma\wedge\dot\gamma)\dot\gamma,\delta\gamma\Big\rangle =0 , \qquad \delta\gamma\in
T_\gamma S^{n-1}.
\end{equation}

\begin{rem}\label{nemaReakcije}
Note that if the radii of the sphere and the ball are equal, then $\varepsilon=1/2$. Then, the curvature of
$\mathcal D$ vanishes and $\Sigma\equiv 0$ \cite{Jov2018b}.
For $n=3$, see \cite{EK2007, BMT}.
Also, if $\mathbb I$ is proportional to the identity operator then $\Sigma\equiv 0$. Then the   {\bf JK}-term   vanishes although the curvature of $\mathcal D$ is different from zero. Under these conditions, the reduced system is Hamiltonian without any time reparametrization.
\end{rem}

\subsection{Gyroscopic ball} Now, we want to consider the gyroscopic Chaplygin ball in $\R^n$  and to study how the addition of a gyroscopic term is going to
modify the reduced  equations of motion \eqref{REDUCED}. The equations \eqref{e3a} without the gyroscope have an analog in in $\R^n$:
\begin{equation}\label{ORIGINAL}
\dot{\mathbf k}=[\mathbf k,\omega]+\lambda_0, \qquad \dot\gamma=-\varepsilon\omega\gamma.
\end{equation}
Here $\mathbf k=\mathbf I\omega$ and the Lagrange multiplier $\lambda_0\in(\R^n\wedge\gamma)^\perp$ is determined from the condition that
$\omega\in\R^n\wedge\gamma$ (see \cite{Jov2018b}).

Let us notice that  the equations \eqref{REDUCED} alternatively can be derived directly by the substitution of $\omega=\frac{1}{\varepsilon}\gamma\wedge\dot\gamma$
in the equations \eqref{ORIGINAL}. The equations \eqref{ORIGINAL} are also a convenient starting point for  gyroscopic generalizations. With a suitable modification of $\mathbb I$ for the gyroscopic ball, the analogue of the equation \eqref{e3a} in $\R^n$ is
\begin{equation}\label{ORIGINALgyr}
\dot{\mathbf k}=[\mathbf k,\omega]+[\kappa,\omega]+\lambda_0, \qquad \dot\gamma=-\varepsilon\omega\gamma,
\end{equation}
where now $\kappa\in so(n)$ is a fixed matrix, $\mathbf k=\mathbf I\omega=\mathbb I\omega+D\omega$, $D=a^2m$,  and  $m$ is the mass of the system (ball with gyroscope).

After the substitution $\omega=\frac{1}{\varepsilon}\gamma\wedge\dot\gamma$, and taking the scalar product with $\frac{1}{\varepsilon}\gamma\wedge\delta\gamma$,
the equations \eqref{ORIGINALgyr} take the form
\begin{equation}\label{ORgyr}
\big\langle\frac{1}{\varepsilon^2}\mathbf I(\gamma\wedge\ddot\gamma)-\frac{1}{\varepsilon^3}[\mathbf I(\gamma\wedge\dot\gamma),\gamma\wedge\dot\gamma],\gamma\wedge\delta\gamma\big\rangle=\frac{1}{\varepsilon^2}\langle [\kappa,\gamma\wedge\dot\gamma],\gamma\wedge\delta\gamma\rangle,
\end{equation}
where we used that $\lambda_0$ is orthogonal to $\gamma\wedge \R^n$.
Now, since
\[
[\kappa,\gamma\wedge\dot\gamma]=(\kappa\gamma)\wedge\dot\gamma-(\kappa\dot\gamma)\wedge\gamma
\]
and
\[
\langle X\wedge Y,Z\wedge T\rangle=\langle X,Z\rangle\langle Y,T\rangle- \langle X,T\rangle\langle Y,Z\rangle,
\]
we get the right hand side of \eqref{ORgyr}:
\[
\mathrm{rhs}=\frac{1}{\varepsilon^2}\big(\langle \kappa\gamma,\gamma\rangle\langle\dot\gamma,\delta\gamma\rangle-\langle \kappa\gamma,\delta\gamma\rangle\langle\dot\gamma,\gamma\rangle
-\langle \kappa\dot\gamma,\gamma\rangle\langle\gamma,\delta\gamma\rangle+\langle \kappa\dot\gamma,\delta\gamma\rangle\langle\gamma,\gamma\rangle\big)
=\frac{1}{\varepsilon^2}\langle \kappa\dot\gamma,\delta\gamma\rangle.
\]
Similarly, the left hand side of \eqref{ORgyr} is given by
\[
\mathrm{lhs}=\Big\langle-\frac{1}{\varepsilon^2}\mathbf
I(\gamma\wedge\ddot\gamma)\gamma-\frac{1}{\varepsilon^3}\mathbf
I(\gamma\wedge\dot\gamma)\dot\gamma,\delta\gamma\Big\rangle=-\delta l+\mathbf{JK}(\dot\gamma,\delta\gamma),
\]
where  the second equality follows from \eqref{REDUCED}.
Therefore, from \eqref{ORgyr} we obtain

\begin{prop}\label{redukcija}
The reduced equations of motion of a gyroscopic ball rolling without slipping and twisting over a sphere are given by
\begin{equation}\label{REDUCED*}
\delta l-\mathbf{JK}(\dot\gamma,\delta\gamma)=\Big\langle\frac{1}{\varepsilon^2}\mathbf
I(\gamma\wedge\ddot\gamma)\gamma-\frac{1}{\varepsilon^3}\mathbf
I(\gamma\wedge\dot\gamma)\dot\gamma,\delta\gamma\Big\rangle=\mathbf f(\dot\gamma,\delta\gamma)
\end{equation}
where the gyroscopic term is given by
$\mathbf f(\dot\gamma,\delta\gamma)=\frac{1}{\varepsilon^2}\langle \dot\gamma,\kappa\delta\gamma\rangle$.
\end{prop}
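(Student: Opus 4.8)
The plan is to obtain the reduced equations \eqref{REDUCED*} directly from the unreduced body-frame equations \eqref{ORIGINALgyr}, exactly as in the non-gyroscopic case \eqref{ORIGINAL}--\eqref{REDUCED}, now treating the gyroscopic commutator $[\kappa,\omega]$ as an additional inhomogeneous term on the right. First I would insert the constraints: the no-slip and no-twisting conditions force $\omega\in\R^n\wedge\gamma$ and, along the horizontal lift, give $\omega=\frac{1}{\varepsilon}\gamma\wedge\dot\gamma$, the same relation already used to obtain \eqref{eq:l} and \eqref{7.5a}. Substituting this together with $\mathbf k=\mathbf I\omega=\frac{1}{\varepsilon}\mathbf I(\gamma\wedge\dot\gamma)$ into the first equation of \eqref{ORIGINALgyr} and pairing it with an admissible virtual displacement $\frac{1}{\varepsilon}\gamma\wedge\delta\gamma$, $\delta\gamma\in T_\gamma S^{n-1}$, yields \eqref{ORgyr}.

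The crucial mechanical point in this pairing is that the reaction term drops out: since $\lambda_0\in(\R^n\wedge\gamma)^\perp$ while $\gamma\wedge\delta\gamma\in\R^n\wedge\gamma$, we have $\langle\lambda_0,\gamma\wedge\delta\gamma\rangle=0$. This is precisely the Lagrange--d'Alembert principle, and it is what lets one eliminate the multiplier without ever solving for it.

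Next I would identify the two sides of \eqref{ORgyr} separately. The left-hand side involves only the inertia operator $\mathbf I$ and is formally identical to the expression arising in the non-gyroscopic reduction; using \eqref{7.5a}, i.e.\ the formulas for $\partial l/\partial\gamma$, $\partial l/\partial\dot\gamma$ and $\mathbf{JK}$, together with the definition of $\delta l$, this side equals $-\delta l+\mathbf{JK}(\dot\gamma,\delta\gamma)$, already recorded in \eqref{REDUCED}. For the right-hand side I would expand the commutator via $[\kappa,\gamma\wedge\dot\gamma]=(\kappa\gamma)\wedge\dot\gamma-(\kappa\dot\gamma)\wedge\gamma$ and apply the bivector identity $\langle X\wedge Y,Z\wedge T\rangle=\langle X,Z\rangle\langle Y,T\rangle-\langle X,T\rangle\langle Y,Z\rangle$. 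Of the four resulting scalar products, three vanish because $\gamma$ is a unit vector ($\langle\gamma,\gamma\rangle=1$, $\langle\dot\gamma,\gamma\rangle=0$), $\kappa$ is skew-symmetric ($\langle\kappa\gamma,\gamma\rangle=0$), and $\delta\gamma$ is tangent to the sphere ($\langle\gamma,\delta\gamma\rangle=0$). Only $\frac{1}{\varepsilon^2}\langle\kappa\dot\gamma,\delta\gamma\rangle$ survives, and it is the gyroscopic two-form $\mathbf f(\dot\gamma,\delta\gamma)=\frac{1}{\varepsilon^2}\langle\dot\gamma,\kappa\delta\gamma\rangle$. Equating the two sides then produces \eqref{REDUCED*}.

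The computation is essentially routine once the constraint has been inserted; I expect the only genuinely delicate bookkeeping to be the careful tracking of which of the four terms in the expansion of the right-hand side survive, and ensuring the signs are consistent with the convention $\langle\cdot,\cdot\rangle=-\frac12\tr(\cdot\circ\cdot)$ on $so(n)$ and with the skew-symmetry of $\kappa$. No obstacle of principle arises beyond the non-gyroscopic case, since the gyroscope enters only linearly, through the single extra commutator $[\kappa,\omega]$.
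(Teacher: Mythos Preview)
Your proposal is correct and follows essentially the same route as the paper: substitute $\omega=\frac{1}{\varepsilon}\gamma\wedge\dot\gamma$ into \eqref{ORIGINALgyr}, pair with $\frac{1}{\varepsilon}\gamma\wedge\delta\gamma$ so that $\lambda_0$ drops out, then expand $[\kappa,\gamma\wedge\dot\gamma]$ using the bivector identity to isolate $\frac{1}{\varepsilon^2}\langle\kappa\dot\gamma,\delta\gamma\rangle$ on the right, while the left is recognized via \eqref{7.5a} and \eqref{REDUCED} as $-\delta l+\mathbf{JK}(\dot\gamma,\delta\gamma)$. The only cosmetic point is that $\langle\gamma,\gamma\rangle=1$ is not what kills a term but what normalizes the surviving one; otherwise your bookkeeping matches the paper exactly.
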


Note that
the gyroscopic two-form $\mathbf f$
\begin{equation}\label{eq:f}
\mathbf f=\frac{1}{\varepsilon^2}\sum_{i<j}\kappa_{ij} d\gamma_i\wedge d\gamma_j.
\end{equation}
is exact  magnetic: $\mathbf f=d\mathbf a$, where
\[
\mathbf a=\frac{1}{2\varepsilon^2}\sum_{ij} \kappa_{ij}\gamma_id\gamma_j.
\]

Thus, the reduced equations of motion of a gyroscopic ball rolling without slipping and twisting over a sphere \eqref{REDUCED*} can be rewritten
in the equivalent form  (see Remark \ref{primedba}):
\[
\delta l_1=\mathbf{JK}(\dot\gamma,\delta\gamma),
\]
 where the Lagrangian $l_1$ is
\[
l_1(\dot\gamma,\gamma) =\frac{1}{2\varepsilon^2}\langle\mathbf
I(\gamma\wedge\dot\gamma) \dot\gamma,\gamma\rangle +\frac{1}{2\varepsilon^2}\langle \gamma,\kappa\dot\gamma\rangle.
\]

\begin{rem}
As in the 3-dimensional case, when $b$ tends to infinity, $\varepsilon$ tends to 1, $\gamma$ tends to the unit vector that is constant in the fixed reference frame
and we obtain the equations of motion of the Chaplygin ball with a gyroscope rolling without slipping and twisting over the plane orthogonal to $\gamma$.
\end{rem}

\begin{rem}
In addition, let us note that for $\varepsilon=1$ the system \eqref{ORIGINALgyr} with $\kappa=0$ represents also
the Veselova problem with the left-invariant metric on $SO(n)$ defined by the operator $\mathbf I$ (see \cite{VeVe2, FJ2004}).
 In this way, the system \eqref{ORIGINALgyr} for $\varepsilon=1$ can be seen as a Veselova problem with the addition of a gyroscope.
\end{rem}

Note that the Veselova problem is an example of an LR system.
These are nonholonomic systems with left-invariant metrics and right-invariant constraints on Lie groups \cite{VeVe2, FJ2004}.
One can consider LR systems with gyroscopic forces and their reduction to homogeneous spaces as well.
Along with the gyroscopic Chaplygin reduction, it is interesting to consider
the symplectic reduction of the corresponding Hamiltonian magnetic systems on Lie groups by using a general framework for the reduction
of the systems with symmetries on magnetic cotangent bundles given in \cite{KNP2005}.
The reduction problems based on \cite{KNP2005} will be consider elsewhere.

\subsection{Invariant measure} We are going to describe the reduced magnetic flow \eqref{REDUCED*} and its invariant measure on the cotangent bundle of a sphere $S^{n-1}$.
Consider the Legendre transformation of the Lagrangian $l$ given by \eqref{eq:l}.
\begin{equation}\label{Leg}
p=\frac{\partial l}{\partial \dot
\gamma}=\mathbf g(\dot\gamma)=-\frac{1}{\varepsilon^2}\mathbf
I(\gamma\wedge\dot\gamma)\gamma.
\end{equation}

Since $\mathbf I(\gamma\wedge\dot\gamma)$ is skew-symmetric, we get $\langle \gamma,p\rangle=0$. Thus,
the point $(p,\gamma)$ belongs to the cotangent bundle of a sphere
realized as a symplectic submanifold in the symplectic linear
space
$
(\R^{2n}\{\gamma, p\},dp_1\wedge
d\gamma_1+\cdots+dp_n\wedge d\gamma_n)
$
defined by the equations:
\begin{equation}\label{psi}
\phi_1=\langle \gamma,\gamma \rangle=1, \qquad \phi_2=\langle \gamma,p\rangle=0.
\end{equation}

Let
$
\dot\gamma=\mathbf g^{-1}(p)=X_\gamma(p,\gamma)
$
be the inverse of the Legendre transformation \eqref{Leg}, which is unique on the subvariety \eqref{psi}. Then
\begin{equation}\label{eq:h}
h(\gamma,p)=\frac12\langle X_\gamma(\gamma,p),p\rangle
\end{equation}
is the Hamiltonian function of the reduced system.
From \eqref{REDUCED} and \eqref{REDUCED*} we have
\[
\Big\langle-\dot p+\frac{1-\varepsilon}{\varepsilon^3}\mathbf
I(\gamma\wedge X_\gamma)X_\gamma,\delta\gamma\Big\rangle=\frac{1}{\varepsilon^2}\langle X_\gamma,\kappa\delta\gamma\rangle.
\]

Therefore
\[
\dot p=\frac{1-\varepsilon}{\varepsilon^3}\mathbf I(\gamma\wedge X_\gamma)X_\gamma+\frac{1}{\varepsilon^2}\kappa X_\gamma+\mu\gamma,
\]
where $\mu$ is the multiplier determined from the condition that $(\dot\gamma,\dot p)$ is tangent to $T^*S^{n-1}$:
\[
\langle \dot\gamma,p\rangle+\langle \gamma,\dot p\rangle=0.
\]

\begin{prop}\label{redukcija2}
The reduced equations of the
rolling of a ball with a gyroscope over a sphere without
slipping and twisting  on $T^*S^{n-1}$  are
\begin{equation}\label{dotG}
\dot\gamma =X_\gamma(\gamma,p), \qquad
\dot p =X_p(\gamma,p)=\frac{1-\varepsilon}{\varepsilon^3}\mathbf I(\gamma\wedge X_\gamma)X_\gamma+\frac{1}{\varepsilon^2}\kappa X_\gamma+\mu\gamma,
\end{equation}
where
\begin{equation}\label{Xp}
\mu=\frac{(\varepsilon-1)}{\varepsilon^3}\langle\left(\mathbf
I\left( \gamma\wedge X_\gamma\right)\right)X_\gamma,\gamma\rangle-2h(\gamma,p)+\frac{1}{\varepsilon^2}\langle X_\gamma,\kappa \gamma\rangle.
\end{equation}
\end{prop}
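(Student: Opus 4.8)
The plan is to read off the two components of the reduced vector field directly from the reduced equation \eqref{REDUCED*} together with the geometry of $T^*S^{n-1}$, and then to fix the single undetermined scalar $\mu$ by a tangency requirement. The first equation $\dot\gamma=X_\gamma(\gamma,p)$ is nothing but the inverse of the Legendre transformation \eqref{Leg}, which is well defined on the constraint subvariety \eqref{psi}. For the momentum equation, I would start from the form of \eqref{REDUCED*} already transported to $T^*S^{n-1}$, namely the identity $\langle-\dot p+\tfrac{1-\varepsilon}{\varepsilon^3}\mathbf I(\gamma\wedge X_\gamma)X_\gamma,\delta\gamma\rangle=\tfrac{1}{\varepsilon^2}\langle X_\gamma,\kappa\delta\gamma\rangle$, valid for all $\delta\gamma\in T_\gamma S^{n-1}$. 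Rewriting the right-hand side as $-\tfrac{1}{\varepsilon^2}\langle\kappa X_\gamma,\delta\gamma\rangle$ via the skew-symmetry of $\kappa$, this says that the vector $-\dot p+\tfrac{1-\varepsilon}{\varepsilon^3}\mathbf I(\gamma\wedge X_\gamma)X_\gamma+\tfrac{1}{\varepsilon^2}\kappa X_\gamma$ annihilates every tangent vector to the sphere and is therefore proportional to the normal $\gamma$. This yields the stated expression for $\dot p$ in \eqref{dotG} with an as yet undetermined multiplier $\mu$ in front of $\gamma$.

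It remains to determine $\mu$. Here I would impose that the trajectory stays on $T^*S^{n-1}$, i.e.\ that $(\dot\gamma,\dot p)$ is tangent to the subvariety \eqref{psi}. Differentiating $\phi_1=\langle\gamma,\gamma\rangle=1$ gives $\langle\gamma,\dot\gamma\rangle=0$, which holds automatically because $\dot\gamma=X_\gamma\in T_\gamma S^{n-1}$; hence $\phi_1$ carries no information about $\mu$. The relevant condition comes from differentiating $\phi_2=\langle\gamma,p\rangle=0$, giving $\langle\dot\gamma,p\rangle+\langle\gamma,\dot p\rangle=0$. I would then substitute the two components: the first term equals $\langle X_\gamma,p\rangle=2h(\gamma,p)$ by the definition \eqref{eq:h} of the Hamiltonian, while in the second term the contribution of $\mu\gamma$ reduces to $\mu\langle\gamma,\gamma\rangle=\mu$ by the constraint $\phi_1$. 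Solving the resulting scalar equation for $\mu$ produces \eqref{Xp}, once one rewrites $\langle\gamma,\kappa X_\gamma\rangle=-\langle X_\gamma,\kappa\gamma\rangle$ using the skew-symmetry of $\kappa$ and uses the symmetry of $\langle\cdot,\cdot\rangle$ in the $\mathbf I$-term.

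There is no genuine obstacle in this argument; the whole proof is a short linear-algebra computation, and essentially all of the analytic content has already been extracted in \eqref{REDUCED*}. The only points demanding care are the bookkeeping of signs — the skew-symmetry of $\kappa$ and of $\mathbf I(\gamma\wedge\dot\gamma)$ must be tracked consistently — and the observation that it is the constraint $\phi_2$, not $\phi_1$, that pins down $\mu$, since the derivative of $\phi_1$ is satisfied identically.
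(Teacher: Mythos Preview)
Your proposal is correct and follows exactly the derivation given in the paper: the paper transports \eqref{REDUCED*} to $T^*S^{n-1}$ via the Legendre transformation \eqref{Leg}, reads off $\dot p$ up to a normal term $\mu\gamma$, and then determines $\mu$ from the tangency condition $\langle\dot\gamma,p\rangle+\langle\gamma,\dot p\rangle=0$. Your write-up is, if anything, slightly more explicit than the paper's about the role of the skew-symmetry of $\kappa$ and about why $\phi_1$ gives no constraint on $\mu$.
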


Let
\begin{equation}\label{eq:w}
\mathbf w=dp_1\wedge d\gamma_1+\dots+dp_n\wedge
d\gamma_n\,\vert_{T^*S^{n-1}}
\end{equation}
 be the canonical symplectic form on $T^*S^{n-1}$.

From Theorem \ref{novamera} and the formula for an invariant measure without magnetic term (see \cite{Jov2018b}), we have:

\begin{prop}\label{invarijantnaMera}
The reduced equations of the
rolling of a ball with a gyroscope over a sphere without
slipping and twisting \eqref{dotG} have an invariant measure $\nu(\gamma)\mathbf w^{n-1}$, where $w$ is from \eqref{eq:w} and $\nu$ is defined by:
\begin{equation}\label{IM}
\nu(\gamma):=(\det\mathbf I\vert_{\R^n\wedge\gamma})^{\frac{1}{2\varepsilon}-1}.
\end{equation}
\end{prop}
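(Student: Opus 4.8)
The plan is to reduce the claim to two already-established facts: the $\mathbf f$-independence of basic invariant measures furnished by Theorem~\ref{novamera}, and the explicit density computed in the non-gyroscopic case in \cite{Jov2018b}. First I would observe that, by Proposition~\ref{redukcija2}, the reduced system \eqref{dotG} is precisely of the almost-Hamiltonian type \eqref{ham1}, \eqref{ham2} on the cotangent bundle $T^*S^{n-1}$, with Hamiltonian \eqref{eq:h}, with the {\bf JK}-term encoded by the $(0,3)$-tensor $\Sigma$ of \eqref{7.4}, and with the gyroscopic two-form $\mathbf f$ given by \eqref{eq:f}. Since $\dim S^{n-1}=n-1$, Theorem~\ref{novamera} applies with $n$ replaced by $n-1$: a basic density $\nu(\gamma)$ yields an invariant measure $\nu(\gamma)\,\mathbf w^{n-1}$ if and only if the one-form $\Theta$ of \eqref{Theta}, built from the trace of the tensor $\mathbf C$ associated with $\Sigma$, is exact, and in that case $\nu=\exp(\lambda)$ with $\Theta=d\lambda$.

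The key point is that $\Theta$ does not see the gyroscope. Indeed, the tensor $\Sigma$ in \eqref{7.4}, and hence $\mathbf C$ and the one-form $\Theta$, are built solely from the metric $\mathbf g$ and the curvature of the connection $\mathcal D$; the constant matrix $\kappa$ enters the reduced equations only through the magnetic two-form $\mathbf f$ of \eqref{eq:f}, which is absent from the divergence computation in the proof of Theorem~\ref{novamera} (there the contributions of $f_{ij}$ cancel by virtue of $\sum_{i,j}f_{ij}g^{ji}=0$). Consequently the condition $\Theta=d\lambda$ and the resulting density are literally the same for the gyroscopic system \eqref{REDUCED*} and for its non-gyroscopic counterpart \eqref{REDUCED} obtained by setting $\kappa=0$.

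It then remains only to quote the non-gyroscopic result. For $\kappa=0$ the reduced rolling system is exactly the one studied in \cite{Jov2018b}, where the one-form $\Theta$ is shown to be exact with $\exp(\lambda(\gamma))=(\det\mathbf I\vert_{\R^n\wedge\gamma})^{\frac{1}{2\varepsilon}-1}$; this gives the invariant measure $\nu(\gamma)\,\mathbf w^{n-1}$ with $\nu$ as in \eqref{IM}, where $\mathbf w$ is the canonical form \eqref{eq:w}. Combining this with the $\kappa$-independence established above yields the proposition.

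The only genuinely computational step --- and the one I expect to be the main obstacle --- is the verification that $\Theta$ is exact with the stated $\lambda=\ln\nu$, i.e. the evaluation of $\tr\mathbf C(X,\cdot)$ for the curvature tensor \eqref{7.4} and its recognition as $d\ln(\det\mathbf I\vert_{\R^n\wedge\gamma})^{\frac{1}{2\varepsilon}-1}$. Since this computation is carried out in \cite{Jov2018b} for the $\kappa=0$ problem, and since our reduction shows the gyroscopic term has no effect on it, there is nothing further to check.
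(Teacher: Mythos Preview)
Your proposal is correct and follows exactly the paper's approach: the paper simply states that the proposition follows ``from Theorem~\ref{novamera} and the formula for an invariant measure without magnetic term (see \cite{Jov2018b})'', which is precisely the two-step reduction you outline. Your write-up is more detailed than what the paper provides, but the logical content is identical.
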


\section{Hamiltonization and integrability}\label{sec8}

\subsection{Hamiltonization of the $SO(n-2)$--invariant case}\label{sec8.1}
As already mentioned above, the existence of an invariant measure of a nonholonomic
system is closely related to the problem of its Hamiltonization. In this Section we provide a class of examples of $SO(n-2)$--symmetric systems
(ball with gyroscope) that  allow a Chaplygin Hamiltonization.

Consider the inertia operators
\begin{align}
\label{spec-op}
\mathbb I(\mathbf e_i\wedge \mathbf e_j)=(a_i a_j-D)\mathbf e_i\wedge
\mathbf e_j \quad \text{i.e.}, \quad  \mathbf I(X \wedge Y)=\mathbb AX\wedge
\mathbb AY,
\end{align}
parameterized by $\mathbb A=\diag(a_1,\dots,a_n)$, where $[\mathbf e_1, \dots, \mathbf e_n]$ is the standard basis of $\R^n$.
The formulas for the reduced Lagrangian $l$ \eqref{eq:l}, the Hamiltonian $h$ \eqref{eq:h}, and the density of an invariant measure $\nu$ \eqref{IM}
take the form:
\begin{align}
& l(\gamma,\dot\gamma)=\frac 1{2\varepsilon^2} \left(\langle \mathbb A\dot \gamma,\dot
\gamma\rangle\langle \mathbb A\gamma,\gamma\rangle- \langle \mathbb A\gamma,\dot \gamma\rangle^2 \right),
\label{lagrangian}\\
& h(\gamma,p)=\frac {\varepsilon^2}{2}  \frac{ \langle p, \mathbb A^{-1}p\rangle }{\langle \gamma,\mathbb A\gamma\rangle},
\label{hamiltonian}\\
& \nu(\gamma)=const\cdot\langle \mathbb A\gamma,\gamma\rangle^{\frac{n-2}{2\varepsilon}+2-n},
\label{measure}
\end{align}
(see \cite{Jov2015, Jov2018b}).
Moreover, the function
$
\N(\gamma)=\varepsilon\langle \mathbb A\gamma,\gamma\rangle ^{\frac{1}{2\varepsilon}-1}
$
is a Chaplygin multiplier: under the time substitution  $d\tau=\N(\gamma) dt$, the reduced system  \eqref{REDUCED}  with $\kappa=0$
becomes the geodesic flow of the metric
\begin{equation}
 ds^2_{A,\varepsilon}=(\gamma, A\gamma)^{\frac{1}{\varepsilon}-2}
\left((A d\gamma, d\gamma)(A\gamma,\gamma) -(A\gamma, d\gamma)^2
\right) \label{dsAe}
\end{equation}
defined by the Lagrangian (see \cite{Jov2018b})
\begin{equation} \label{L^*}
l^*(\gamma,\gamma^{\prime})=l(\gamma,\dot\gamma)\vert_{\dot\gamma=\N(\gamma)\gamma^\prime}=\frac12\langle \gamma,
\mathbb A\gamma\rangle^{\frac{1}{\varepsilon}-2}\left(\langle \mathbb A \gamma^{\prime},
\gamma^{\prime}\rangle\langle \mathbb A\gamma,\gamma\rangle  -\langle \mathbb A\gamma, \gamma^{\prime}\rangle^2
\right).
\end{equation}

\begin{rem}\label{3Doperator}
Note that for $n=3$ all symmetric operators $\mathbb I$ have the form \eqref{spec-op} in a basis formed by its eigenvectors.
Namely, after the  standard  identification $\R^3\cong so(3)$ \cite{Ar}, for the given inertia operator $\mathbb I=\diag(A,B,C)\colon \R^3\to\R^3$ for the gyroscopic ball and the
parameter $D=ma^2$, the operator $\mathbb I: so(3)\to so(3)$ has the form \eqref{spec-op}, with:
\begin{equation}\label{IandA}
\mathbb A=\diag\big(\frac{\Delta}{A+D},\frac{\Delta}{B+D},\frac{\Delta}{C+D}), \qquad \Delta=\sqrt{(A+D)(B+D)(C+D)}.
\end{equation}
\end{rem}

The above Hamiltonization
recovers the procedure of reduction and Hamiltonization  for a three-dimensional ball without gyroscope from \cite{EK2007}.
We would recall that Borisov and Mamaev proved the integrability  of the  three-dimensional ball  without gyroscope  and the spherical shell for a specific ratio between the radii: the case (iii) from Section \ref{sec6.1}, where $a=2b$, i.e. $\varepsilon=-1$, see \cite{BM2007}. The $n$--dimensional  reduced  system of a ball without gyroscope rolling over a sphere  \eqref{REDUCED} with the inertia operator $\mathbb I$ given by \eqref{spec-op}  is also integrable
for $\varepsilon=-1$; the integrability remains for such systems  for an arbitrary $\varepsilon$, if the matrix $\mathbb A$ has only two distinct
parameters  \cite{GajJov2019, GajJov2019b}.

Now, we turn to the systems with gyroscopic force.  If
\begin{equation}\label{condA*}
d(\N\mathbf f)=\frac{2}{\varepsilon}({\frac{1}{2\varepsilon}-1})\langle \mathbb A\gamma,\gamma\rangle ^{\frac{1}{2\varepsilon}-2}\big(\sum_k a_k\gamma_k d\gamma_k\big)
\wedge \big(\sum_{i<j}\kappa_{ij} d\gamma_i\wedge d\gamma_j\big)=0\vert_{T^*S^{n-1}}
\end{equation}
then the reduced gyroscopic system is Hamiltonizable as well. This follows from Theorem \ref{opsta}.

 For $n=3$, equation \eqref{condA*} is satisfied for an arbitrary gyroscopic term $\kappa$.
The following statement provides a class of examples, based on the $SO(n-2)$-symmetry, which satisfy  equation \eqref{condA*}, thus
are Hamiltonizable,  for every $n\ge 3$.

\begin{thm}\label{hamiltonizacija}
Assume that the gyroscopic term $\mathbf f$ from \eqref{eq:f} is given by $\kappa=\kappa_{12}\mathbf e_1\wedge\mathbf e_2$, i.e.,
\[
\mathbf f=\frac{\kappa_{12}}{\varepsilon^2} d\gamma_1\wedge d\gamma_2
\]
and the inertia operator of the system ball with gyroscope is given by
\eqref{spec-op}, where $a_3=a_4=\dots=a_n$:
\[
\mathbb A=\diag(a_1,a_2,a_3,\dots,a_3).
\]
Then the function $\N(\gamma)=\varepsilon\mathcal A(\gamma) ^{\frac{1}{2\varepsilon}-1}$, with
\begin{equation}\label{eq:A}
\mathcal A(\gamma)=a_3+(a_1-a_3)\gamma_1^2+(a_2-a_3)\gamma_2^2,
\end{equation}
 is a Chaplygin multiplier.
Under the time substitution  $d\tau=\N(\gamma) dt$ and the change of momenta $\tilde p=\N(\gamma)p$, the reduced system \eqref{dotG} becomes the
 magnetic geodesic flow of the metric \eqref{dsAe}
with respect to the twisted symplectic form given by
\begin{equation}
\tilde{\mathbf w}+\N\rho^*\mathbf f=d\tilde p\wedge d\gamma_1+\dots+d\tilde p_n \wedge d\gamma_n
+\frac{\kappa_{12}}{\varepsilon}\mathcal A(\gamma)^{\frac{1}{2\varepsilon}-1} d\gamma_1\wedge d\gamma_2\vert_{T^*S^{n-1}} .\label{TS}
\end{equation}
\end{thm}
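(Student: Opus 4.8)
The discussion preceding the theorem already reduces everything to Theorem \ref{opsta}: once we know that $\N$ is the Chaplygin multiplier of the underlying non-gyroscopic system and that $d(\N\mathbf f)=0$ (condition \eqref{condA*}), Hamiltonizability of the gyroscopic system follows. So the plan is, first, to identify $\N$ with the known multiplier, and second, to verify \eqref{condA*}. For the first point I would simply observe that on the sphere $S^{n-1}$, where $\gamma_1^2+\dots+\gamma_n^2=1$, the symmetry assumption $a_3=\dots=a_n$ gives
\[
\langle\mathbb A\gamma,\gamma\rangle=a_1\gamma_1^2+a_2\gamma_2^2+a_3(1-\gamma_1^2-\gamma_2^2)=\mathcal A(\gamma),
\]
so that $\N(\gamma)=\varepsilon\langle\mathbb A\gamma,\gamma\rangle^{\frac{1}{2\varepsilon}-1}$ coincides with the function $\varepsilon\mathcal A(\gamma)^{\frac{1}{2\varepsilon}-1}$ of \eqref{eq:A}. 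Hence, by the Hamiltonization of the $\kappa=0$ problem recalled before Remark \ref{3Doperator}, $\N$ is a Chaplygin multiplier that turns the reduced system \eqref{REDUCED} with $\kappa=0$ into the geodesic flow of the metric \eqref{dsAe}; equivalently, the {\bf JK}/$\Pi$-term satisfies condition (c) of Theorem \ref{opsta}, a property intrinsic to $\mathbf C$ that is unaffected by adding the gyroscopic form $\mathbf f$.

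The crux is the closedness computation, and it is exactly where the $SO(n-2)$-symmetry is used. By \eqref{condA*}, since $\kappa=\kappa_{12}\mathbf e_1\wedge\mathbf e_2$, it suffices to show that $\big(\sum_k a_k\gamma_k\,d\gamma_k\big)\wedge d\gamma_1\wedge d\gamma_2$ vanishes on $T^*S^{n-1}$. Using the constraint $\sum_k\gamma_k\,d\gamma_k=0$ together with $a_3=\dots=a_n$, I would rewrite the relevant one-form on $S^{n-1}$ as
\[
\sum_k a_k\gamma_k\,d\gamma_k=(a_1-a_3)\gamma_1\,d\gamma_1+(a_2-a_3)\gamma_2\,d\gamma_2=\tfrac12\,d\mathcal A .
\]
This one-form involves only $d\gamma_1$ and $d\gamma_2$, so its wedge with $d\gamma_1\wedge d\gamma_2$ is identically zero. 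Therefore $d(\N\mathbf f)|_{T^*S^{n-1}}=0$, i.e. $\N\mathbf f$ is magnetic, and \eqref{condA*} holds.

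With the multiplier identified and $\N\mathbf f$ shown to be closed, Theorem \ref{opsta}(i) applies (its standing hypothesis that $\mathbf f^{*}=\N\mathbf f$ be closed is now met), and the equivalence of its items (a), (b), (c) upgrades the non-gyroscopic statement to the full gyroscopic one: after $d\tau=\N\,dt$ and the non-canonical change $\tilde p=\N p$, the reduced system \eqref{dotG} becomes the magnetic geodesic flow of \eqref{dsAe} with respect to the twisted symplectic form $\tilde{\mathbf w}+\N\rho^*\mathbf f$. Substituting the explicit $\N=\varepsilon\mathcal A^{\frac{1}{2\varepsilon}-1}$ and $\mathbf f=\frac{\kappa_{12}}{\varepsilon^2}\,d\gamma_1\wedge d\gamma_2$ from \eqref{eq:f} gives
\[
\N\mathbf f=\frac{\kappa_{12}}{\varepsilon}\,\mathcal A(\gamma)^{\frac{1}{2\varepsilon}-1}\,d\gamma_1\wedge d\gamma_2,
\]
which reproduces the form \eqref{TS}. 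The main obstacle is the single exterior-algebra identity above; all the surrounding structure is supplied by Theorem \ref{opsta}, and the verification is essentially bookkeeping once the symmetry is exploited to collapse $\sum_k a_k\gamma_k\,d\gamma_k$ to a multiple of $d\mathcal A$.
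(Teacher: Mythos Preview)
Your proof is correct and follows exactly the route laid out by the paper: the text preceding the theorem already reduces the claim to checking condition \eqref{condA*} and invoking Theorem \ref{opsta}, and you carry out precisely that verification, exploiting $a_3=\dots=a_n$ together with the sphere constraint to collapse $\sum_k a_k\gamma_k\,d\gamma_k$ to a combination of $d\gamma_1,d\gamma_2$. The paper itself gives no further proof beyond this pointer, so your argument is the intended one with the details filled in.
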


\begin{rem}
The function \eqref{eq:A}  satisfies  $\mathcal A(\gamma)=\langle \mathbb A\gamma,\gamma\rangle $ for $\langle\gamma,\gamma\rangle=1$.
We use the function $\mathcal A$ to simplify some equations below.
For example, the Hamiltonian of the magnetic geodesic flow of the metric \eqref{dsAe} in the coordinates $(\gamma,\tilde p)$ can be written as
\begin{equation}
 h^*(\gamma,\tilde p)=\frac12 \mathcal A(\gamma)^{1-\frac{1}{\varepsilon}}{ \langle\tilde p, \mathbb A^{-1}\tilde p\rangle}.\label{eq:H}
\end{equation}
\end{rem}

\subsection{Integrability of the $SO(2)\times SO(n-2)$-invariant case}
In this section we want to impose additonal symmetry with respect to $SO(n-2)$-symmetry considered in Section \ref{sec8.1} and in particular
in Theorem \ref{hamiltonizacija}, This additional symmetry will imply integrability.

As mentioned above, the cotangent bundle $T^*S^{n-1}$  is realized within $\R^{2n}$
 by the constraints \eqref{psi}. In the new coordinates $(\gamma,\tilde p)=(\gamma,\varepsilon\mathcal A(\gamma) ^{\frac{1}{2\varepsilon}-1} p)$, the constraints become
\begin{equation}\label{phi*}
\phi^*_1=\langle \gamma,\gamma \rangle=1, \qquad \phi^*_2=\frac{1}{\varepsilon}\mathcal A(\gamma) ^{1-\frac{1}{2\varepsilon}}\langle \gamma,\tilde p\rangle=0.
\end{equation}
Instead of \eqref{phi*}, we equivalently use  the constraints
\begin{equation}
\psi_1= \langle\gamma,\gamma\rangle=1, \qquad \psi_2= \langle\tilde p,\gamma\rangle=0.
\label{psi*}
\end{equation}

The magnetic Poisson bracket on the cotangent bundle $T^*S^{n-1}\subset \R^{2n}\{\gamma,\tilde p\}$
can be described by the Dirac construction as follows:
\[
\{F, G\}_d =\{F, G\}^\kappa -\frac{\{F,\psi_1 \}^\kappa\{G,\psi_2\}^\kappa- \{F,\psi_2\}^\kappa
\{G,\psi_1\}^\kappa }{ \{\psi_1, \psi_2 \}^\kappa },
\]
where
\[
\{F,G\}^\kappa=\{F,G\}^0+
\frac{\kappa_{12}}{\varepsilon}\mathcal A(\gamma)^{\frac{1}{2\varepsilon}-1}\big(\frac{\partial F}{\partial\tilde p_1}
\frac{\partial G}{\partial\tilde p_2}-\frac{\partial F}{\partial\tilde p_2}\frac{\partial G}{\partial\tilde p_1}\big)
\]
and
\[
\{F,G\}^0=\sum_{i=1}^n\big(\frac{\partial
F}{\partial\gamma_i} \frac{\partial G}{\partial\tilde
p_i}-\frac{\partial F}{\partial\tilde p_i}\frac{\partial
G}{\partial\gamma_i}\big)
\]
is the canonical Poisson bracket on $\R^{2n}\{\gamma,\tilde p\}$, (see \cite{AKN}).
Considered on $\R^{2n}\{\gamma,\tilde p\}$ without the subset $\{\gamma=0\}$, the bracket $\{\cdot,\cdot\}_d$ is
degenerate with two Casimir functions $\psi_1$ and $\psi_2$.
The symplectic leaf given by \eqref{psi*} is exactly the cotangent
bundle $T^*S^{n-1}$ endowed with the twisted symplectic form \eqref{TS}.

It is convenient to derive the equations of the magnetic Hamiltonian flows with respect to the Dirac bracket $\{\cdot,\cdot\}_d$
using  the Lagrange multipliers and the magnetic Hamiltonian flows with respect to the magnetic bracket $\{\cdot,\cdot\}^\kappa$   (e.g., see \cite{AKN}). Let
\begin{equation*}
H=h^*-\lambda_1 \psi_1-\lambda_2\psi_2.
\end{equation*}

The magnetic Hamiltonian flow generated by the Hamiltonian  \eqref{eq:H}
with respect to the Dirac bracket $\{\cdot,\cdot\}_d$ is given by
\begin{align}
\label{eq:HamSys}
\gamma^\prime=&\frac{\partial H}{\partial \tilde p}= \mathcal A(\gamma)^{1-\frac{1}{\varepsilon}}\A^{-1}\tilde p-\lambda_2
\gamma,\\
\label{eq:HamSys2}
\tilde p^\prime=&-\frac{\partial H}{\partial \gamma}
 +\frac{\kappa_{12}}{\varepsilon}\mathcal A(\gamma)^{\frac{1}{2\varepsilon}-1} \mathbf e_1 \wedge \mathbf e_2(\gamma^\prime) \\
\nonumber =& \frac{1-\varepsilon}{\varepsilon}\mathcal A(\gamma)^{-\frac{1}{\varepsilon}}\langle\tilde p, \A^{-1}\tilde
p\rangle \big((a_1-a_3)\gamma_1 \mathbf e_1+(a_2-a_3)\gamma_2\mathbf e_2\big)+2\lambda_1 \gamma+\lambda_2 \tilde p\\
\nonumber &+\frac{\kappa_{12}}{\varepsilon}
\mathcal A(\gamma)^{\frac{1}{2\varepsilon}-1}\big((\mathcal A(\gamma)^{1-\frac{1}{\varepsilon}}\frac{\tilde p_2}{a_2}-\lambda_2
\gamma_2)\mathbf e_1-(\mathcal A(\gamma)^{1-\frac{1}{\varepsilon}}\frac{\tilde p_1}{a_1}-\lambda_2
\gamma_1)\mathbf e_2 \big),
\end{align}
where the Lagrange multipliers $\lambda_1$ and $\lambda_2$  are determined  from the condition that the functions $\psi_1$ and $\psi_2$ are integrals of the flow.

From now on we consider the system \eqref{eq:HamSys}, \eqref{eq:HamSys2} restricted to the
symplectic leaf \eqref{psi*}, that is, we consider the magnetic geodesic flow of the metric \eqref{dsAe}.

Let us impose now the additional symmetry.
Suppose: $a_1=a_2\ne a_3$.  Both the Hamiltonian \eqref{eq:H} and the magnetic two-form \eqref{TS} are invariant with respect to the action of the group $SO(2)\times SO(n-2)$. We first consider the case $\kappa_{12}=0$:
the corresponding first integrals are linear and given as follows:
\[
\Phi^0_{12}=\gamma_1 \tilde p_2-\gamma_2\tilde p_1, \quad
\Phi^0_{ij}=\gamma_i\tilde p_j-\gamma_j\tilde p_i, \quad 3\le i<j\le n.
\]
Such first integrals are sometimes called {\it Noether integrals} as their existence follow from the Emmy Noether theorem.
Let us now consider a general case $\kappa_{12}\ne 0$:
straightforward calculations show that $\Phi_{ij}=\Phi_{ij}^0$, $3\le i<j\le n$ remain to be first integrals for $\kappa_{12}\ne 0$. Moreover,
\[
\frac{d}{d\tau}\Phi^0_{12}=-\frac{\kappa_{12}}{\varepsilon}\mathcal A(\gamma) ^{\frac{1}{2\varepsilon}-1}(\gamma_1\gamma_1'+\gamma_2\gamma_2')
=-\frac{\kappa_{12}}{a_1-a_3}\frac{d}{d\tau}\big(\mathcal A(\gamma)^{\frac{1}{2\varepsilon}}\big).
\]

Thus, the first integrals for $\kappa_{12}\ne 0$ are
\[
\Phi_{12}=\gamma_1 \tilde p_2-\gamma_2\tilde p_1+\frac{\kappa_{12}}{a_1-a_3}\mathcal A(\gamma)^{\frac{1}{2\varepsilon}},
\quad
\Phi_{ij}=\gamma_i\tilde p_j-\gamma_j\tilde p_i, \quad 3\le i<j\le n.
\]
These first integrals are the components of the momentum mapping of the $SO(2)\times SO(n-2)$-action with respect to the twisted symplectic form \eqref{TS}.

\begin{thm}\label{integrabilni}
For $a_1=a_2\ne a_3$ the magnetic geodesic flow of the metric $ds^2_{\mathbb A,\varepsilon}$ defined by the Hamiltonian \eqref{eq:H}
with respect to the twisted symplectic form \eqref{TS} is completely integrable.

\begin{itemize}
\item[(i)] If $n=3$ the system is Liouville integrable. Generic invariant manifolds are two-dimensional Lagrangian tori, the common level sets of $h^*$ and $\Phi_{12}$.

\item[(ii)]  If $n=4$ the system is Liouville integrable. Generic invariant manifolds are three-dimensional Lagrangian tori, the common level sets of $h^*$, $\Phi_{12}$, and $\Phi_{34}$.

\item[(iii)]  If $n\ge 5$ the system is integrable in the noncommutative sense. Generic invariant manifolds are three-dimensional isotropic tori, the common level sets of $h^*$, $\Phi_{12}$, and $\Phi_{ij}$, $3\le i<j\le n$.
    \end{itemize}
\end{thm}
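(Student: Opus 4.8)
The plan is to combine the first integrals already produced, namely the Hamiltonian $h^*$ from \eqref{eq:H} together with the momentum-map components $\Phi_{12}$ and $\Phi_{ij}$ ($3\le i<j\le n$) of the $SO(2)\times SO(n-2)$-action, and to verify in each case the hypotheses of the appropriate integrability theorem: the Liouville--Arnold theorem for $n=3,4$, and the Mishchenko--Fomenko noncommutative integrability theorem for $n\ge5$. First I would record the Poisson algebra of these integrals with respect to the twisted symplectic form \eqref{TS}. Since they are the components of a momentum map and the magnetic term in \eqref{TS} is a multiple of $d\gamma_1\wedge d\gamma_2$, which is $SO(n-2)$-invariant and is annihilated upon contraction with every Hamiltonian vector field $X_{\Phi_{ij}}$ (these act only on $\gamma_3,\dots,\gamma_n$ and $\tilde p_3,\dots,\tilde p_n$), the non-equivariance cocycle vanishes in the $SO(n-2)$-directions and in the cross terms. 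Hence $\mathcal F$ is the finite-dimensional Lie algebra of integrals spanned by $h^*$, $\Phi_{12}$ and the $\Phi_{ij}$, with $h^*$ and $\Phi_{12}$ central and the $\Phi_{ij}$ realising $so(n-2)$.

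For $n=3$ and $n=4$ the statement is then immediate. The phase space $T^*S^{n-1}$ has dimension $2n-2$. For $n=3$ the two functions $h^*,\Phi_{12}$ are in involution and generically independent, hence $\frac12\dim M=2$ commuting integrals; for $n=4$ the three functions $h^*,\Phi_{12},\Phi_{34}$ are pairwise commuting (here $SO(n-2)=SO(2)$) and generically independent, giving $\frac12\dim M=3$ involutive integrals. In both cases the Liouville--Arnold theorem applies and the regular common level sets are Lagrangian tori.

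For $n\ge5$ I would pass to the Mishchenko--Fomenko framework, where the decisive point is the count of functionally independent integrals. Writing $m=n-2$, the $\binom{m}{2}$ functions $\Phi_{ij}$ are the components of the moment map $(\gamma_3,\dots,\gamma_n,\tilde p_3,\dots,\tilde p_n)\mapsto(\gamma_i\tilde p_j-\gamma_j\tilde p_i)$, whose image is the variety of skew-symmetric matrices of rank at most two; its generic dimension is $2m-3=2n-7$, so the $\Phi_{ij}$ have functional rank $2n-7$. Adding the independent functions $h^*$ and $\Phi_{12}$ yields $\dim\mathcal F=2n-5$ functionally independent integrals. On the rank-two locus the only nonconstant Casimir of $so(n-2)$ is $\sum_{i<j}\Phi_{ij}^2$, so the centre of $\mathcal F$ is generated by $h^*$, $\Phi_{12}$ and $\sum_{i<j}\Phi_{ij}^2$, giving $\operatorname{ind}\mathcal F=3$. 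The completeness condition $\dim\mathcal F+\operatorname{ind}\mathcal F=(2n-5)+3=2n-2=\dim M$ is therefore satisfied, and the regular common level sets of $h^*,\Phi_{12}$ and all $\Phi_{ij}$ are isotropic submanifolds of dimension $\operatorname{ind}\mathcal F=3$.

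Finally I would check compactness to upgrade the level sets to tori. Since $\A$ is positive definite and $\mathcal A(\gamma)>0$ on the sphere, fixing the energy $h^*=E$ bounds $\langle\tilde p,\A^{-1}\tilde p\rangle$, hence $\tilde p$; together with $\gamma\in S^{n-1}$ this makes every common level set compact, so its connected components are tori, Lagrangian for $n=3,4$ and isotropic of dimension three for $n\ge5$. The main obstacle is precisely the $n\ge5$ case: one must justify the generic-rank computation for the $SO(n-2)$ moment map (equivalently, that the Plücker relations are the only functional dependencies among the $\Phi_{ij}$ and that the higher $so(n-2)$-Casimirs degenerate on the rank-two locus), and confirm that no further dependencies with $h^*$ and $\Phi_{12}$ occur on a dense set, so that the identity $\dim\mathcal F+\operatorname{ind}\mathcal F=\dim M$ holds exactly rather than as an inequality.
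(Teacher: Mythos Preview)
Your argument for $n=3,4$ coincides with the paper's. For $n\ge5$ you take a genuinely different route: you verify the Mishchenko--Fomenko completeness condition $\dim\mathcal F+\operatorname{ind}\mathcal F=\dim T^*S^{n-1}$ by an algebraic rank count (rank of the $so(n-2)$ moment map on decomposable bivectors is $2(n-2)-3$, the only effective Casimir on that locus is the quadratic one, hence $\operatorname{ind}\mathcal F=3$). The paper instead argues dynamically: it uses the $SO(n-2)$-invariance to rotate an arbitrary initial condition into $T^*S^3\subset T^*S^{n-1}$, observes that the restricted flow is exactly the $n=4$ system already shown to be Liouville integrable, and transports the resulting three-torus back by the inverse rotation; only then does it invoke Nekhoroshev--Mishchenko--Fomenko, with the three commuting integrals $h^*$, $\Phi_{12}$, $\sum_{3\le i<j\le n}\Phi_{ij}^2$. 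Your approach is cleaner and self-contained but hinges on the rank and independence claims you correctly flag as the delicate step; the paper's reduction to $n=4$ sidesteps those computations entirely at the cost of a slightly longer geometric detour, and in particular never needs to count the functional rank of the $\Phi_{ij}$. Your added compactness check (energy bounds $\tilde p$) is a point the paper leaves implicit.
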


\begin{proof}
 For $n=3$ the statement is clear.
For $n=4$, the Hamiltonian system \eqref{eq:HamSys} possesses three independent  integrals $h^*$, $\Phi_{12}$, $\Phi_{34}$, in involution:
\[
\{h^*,\Phi_{12}\}_d=0, \quad \{h^*,\Phi_{34}\}_d=0,\quad  \{\Phi_{12},\Phi_{34}\}_d=0.
\]
Thus, the Hamiltonian system \eqref{eq:HamSys}, \eqref{eq:HamSys2} is
completely integrable according to the Arnold-Liouville theorem.

For $n>4$, generic common level sets of all integrals are three-dimensional tori as well.
Indeed, consider the natural embedding $T^*S^3 \subset T^*S^{n-1}$ induced by the embedding
$\mathrm{span}\{\mathbf e_1,\mathbf e_2,\mathbf e_3,\mathbf e_4\}\subset \R^n$.
Let us set
$\mathbf P=(\tilde p_3,\tilde p_4,\dots,\tilde p_n)$, $\Gamma=(\gamma_3,\gamma_4,\dots,\gamma_n)$.
Then $\tilde p=(\tilde p_1,\tilde p_2,\mathbf P)$, $\gamma=(\gamma_1,\gamma_2,\Gamma)$.

The system \eqref{eq:HamSys}, \eqref{eq:HamSys2} is invariant with respect to the $SO(n-2)$--action
\[
R(\gamma,\tilde p)=(\gamma_1,\gamma_2,R\Gamma,\tilde p_1,\tilde p_2,R\mathbf P), \quad R\in SO(n-2).
\]
Also, as we already mentioned, the integrals $\Phi_{ij}$, $3\le i<j\le n$ are components of the corresponding momentum mapping
\[
(\gamma,\tilde p)\longmapsto \Gamma\wedge\mathbf P.
\]

For any point  $\mathbf c_0=(\gamma_0,\tilde p_0)\in T^*S^{n-1}$, there exists a matrix $R_0\in SO(n-2)$, such that
$\mathbf d_0=R_0(\gamma_0,\tilde p_0)$ belongs to $T^*S^3$. Since the system is invariant with respect to the $SO(n-2)$--action,
the solution $\mathbf c(\tau)=(\gamma(\tau),\tilde p(\tau))$ with the initial condition
$\mathbf c(0)=(\gamma(0),\tilde p(0))=\mathbf c_0$
is mapped to the solution $\mathbf d(\tau)=R(\gamma(\tau),\tilde p(\tau))$ with the initial condition
\[
\mathbf d(0)=R_0(\gamma(0),\tilde p(0))=R_0(\gamma_0,\tilde p_0)=\mathbf d_0\in T^*S^3.
\]

The solutions $\mathbf c(\tau)$ and $\mathbf d(\tau)$ have the same energy, $h^*(\mathbf c_0)=h^*(\mathbf d_0)$, while the corresponding values of the momenta are different:
the momentum of $\mathbf c(\tau)$ is transformed to the momentum of $\mathbf d(\tau)$ by the adjoint mapping
\[
\Gamma_0\wedge \mathbf P_0 \longmapsto R_0 (\Gamma_0\wedge\mathbf P_0) R^T_0=\Phi_{34}(\mathbf d_0) \mathbf e_3 \wedge \mathbf e_4,
\]
where $\mathbf c_0=(\gamma_{0,1},\gamma_{0,2},\Gamma_0,\tilde p_{0,1},\tilde p_{0,2},\mathbf P_0)$.

One can easily verify that the solution $\mathbf d(\tau)$ belongs $T^*S^3$, that is, it is a solution of the problem for $n=4$. Therefore, generically,
$\mathbf d(\tau)$ is a quasi-periodic trajectory over a 3--dimensional invariant torus $\mathcal T_0\subset T^*S^3$, the connected component of the level set
\[
h^*=h^*(\mathbf d_0), \qquad \Phi_{12}=\Phi_{12}(\mathbf d_0), \qquad \Phi_{34}=\Phi_{34}(\mathbf d_0).
\]
All other components of the momentum mapping $\Phi_{ij}$, $3\le i<j\le n$, $(i,j)\ne (3,4)$ are equal to zero.

Note that a point $\mathbf d\in T^*S^{n-1}$ belongs to
$T^*S^3$ if and only if 
\[
\Phi_{ij}(\mathbf d)=0, \qquad 3\le i<j\le n, \qquad (i,j)\ne (3,4).
\]
Thus, the original trajectory $\mathbf c(\tau)=R^{-1}_0(\mathbf d(\tau))$ is quasi-periodic over the 3-dimensional invariant torus $\mathcal T=R_0^{-1}(\mathcal T_0)$,
which is also the connected component of the level set
\[
h^*=h^*(\mathbf c_0)=h^*(\mathbf d_0), \qquad \Phi_{12}=\Phi_{12}(\mathbf c_0), \qquad \Phi_{ij}=\Phi_{ij}(\mathbf c_0), \quad 3\le i < j \le n.
\]

The integrability of the system is a particular example of so-called noncommutative integrability.
Namely, since the common level sets
of the integrals are 3--dimensional, and
the Hamiltonian system \eqref{eq:HamSys}, \eqref{eq:HamSys2} has three  independent first integrals $h^*$, $\Phi_{12}^\kappa$, and
$
\sum_{3\le i<j\le n}(\Phi_{ij})^2,
$
that commute with all integrals,  the system
is completely integrable according the Nekhoroshev-Mishchenko-Fomenko theorem on non-commutative integrability for all $n>4$ (e.g., see \cite{AKN}).\end{proof}

Note that in the original phase space $T^*S^{n-1}\{\gamma,p\}$, the first integrals have the form
\[
\Phi_{12}=\varepsilon\mathcal A(\gamma) ^{\frac{1}{2\varepsilon}-1}(\gamma_1 p_2-\gamma_2p_1)+\frac{\kappa_{12}}{a_1-a_3}\mathcal A(\gamma)^{\frac{1}{2\varepsilon}}, \]
and
\[
\Phi_{ij}=\varepsilon\mathcal A(\gamma) ^{\frac{1}{2\varepsilon}-1}(\gamma_i p_j-\gamma_j p_i), \qquad 3\le i<j\le n.
\]
In the original time, the system over a regular invariant torus $\mathcal T$ has the form \eqref{Jacobi}, where $\Phi = \N^{-1}\vert_\mathcal T$.

\begin{rem}
For $n=3$,
within the standard isomorphism between Lie algebras $(so(3),[\cdot,\cdot])$ and $(\R^3,\times)$
given by
\begin{equation}\label{izomorfizam}
a_{ij}=-\varepsilon_{ijk}a_k, \qquad i,j,k=1,2,3
\end{equation}
(see \cite{Ar}),
the equations \eqref{ORIGINALgyr} with the inertia operator defined by \eqref{spec-op}, $\mathbb A=\diag(a_1,a_1,a_3)$, and
$\kappa=\kappa_{12}\mathbf e_1\wedge \mathbf e_2$ correspond to
the equations \eqref{e3a}
defined by the Bobilev conditions \eqref{bobiljev} with $\vec\kappa=-\kappa_{12}\vec{\mathbf e}_3$ and $\mathbb I$ and $\mathbb A$ related by \eqref{IandA}  (see Subsection \ref{sec6.4} and Remark \ref{3Doperator}).
Then, along with the Liouville integrability after the Hamiltonization described in Theorem \ref{integrabilni},
the  system is also integrable according to the Euler-Jacobi theorem.
\end{rem}

\section{Generalized Demchenko case without twisting in $\R^n$}\label{sec9}

\subsection{Definition of the system}
As above, we will consider the rolling of a gyroscopic ball $\mathbf B$ without slipping and twisting in $\R^n$, now with an additional symmetry of the system.
The additional symmetry is analogous to the Zhukovskiy condition \eqref{uslovZh} in dimension $n=3$. Recall,  that adding a gyroscopic term does not change formulas for curvature of the distribution $\mathcal D$, ${\bf JK}$ term \eqref{7.5a} and $\Sigma$ term \eqref{7.4}. For the curvature $K$ of $\mathcal D$ see Lemma 7 in \cite{Jov2018b}:
\[
K_{(g,\mathbf{r})}(\xi_1^h,\xi_2^h)=\frac{2\varepsilon-1}{\varepsilon^2}\Ad_g(\xi_1\wedge\xi_2),  \qquad
\xi_1,\xi_2\in T_{\pi(g,\mathbf{r})} S^{n-1}.
\]

Since the reduced gyroscopic form $\mathbf f$ is exact magnetic for an arbitrary $\kappa\in so(n)$,
\begin{equation}\label{kapa}
\kappa=\sum_{i<j} \kappa_{ij} \mathbf e_i \wedge \mathbf e_j,
\end{equation}
if the {\bf JK}-term in \eqref{ChaplyginRed} vanishes, then the reduced gyroscopic $G$--Chaplygin system \eqref{ChaplyginRed} is automatically Hamiltonian
without any time reparametrization.

We provide two situations when such conditions are satisfied, for the
rolling of a gyroscopic Chaplygin ball without slipping and twisting over a sphere $S^{n-1}$
 (see Remark \ref{nemaReakcije}).
The first situation: if the radii of the sphere and the ball are equal, which is equivalent to the condition $\varepsilon=1/2$, then the curvature $K$ of $\mathcal D$ vanishes
(the constraints are holonomic).
 Since the {\bf JK}-term is given by the coupling of the curvature $K$ with the momentum mapping of the $SO(n)$--action on the configuration
space \eqref{principal} (see Remark \ref{JKclan}),
we have ${\bf JK}=0$.
The second situation we get when the  inertia operator $\mathbb I$ of the system, that is, the modified inertia operator $\mathbf I$,   is proportional to the identity operator. Then   the coupling between the curvature and the momentum mapping vanishes, see \eqref{7.5a}, although the curvature
of $\mathcal D$ is different from zero. Let us remind that the curvature of the distribution measure the nonholonomicity of the constraints: it is zero if and only if the constraints are holonomic.

These two situations do not require a time reparametrisation for a Hamiltonization: the reduced equations \eqref{dotG} are Hamiltonian with respect to the symplectic form $\mathbf w+\rho^*\mathbf f$, where $\mathbf w$ is the canonical symplectic form \eqref{eq:w}.

For $n=3$, the condition that the inertia operator $\mathbb I$ is proportional to the identity operator is equivalent to the Zhukovskiy condition \eqref{uslovZh}. One gets the case of motion of a gyroscopic ball considered by Demchenko in \cite{Dem1924}, see also \cite{DGJ} and subsection \ref{ex:Dem} below, under an additional non-twisting condition.  This motivates us to introduce the following definition of a generalized
Demchenko case without twisting in higher dimensions.

\begin{dfn}\label{def:GenDem}
We say that the ball with a gyroscope satisfies the \emph{Zhukovskiy condition} if the
inertia operator $\mathbb I$ of the system is proportional to the identity operator.
The \emph{generalized Demchenko case without twisting in} $\R^n, \, n\geq 3$, is a system of a balanced $n$-dimensional
gyroscopic  ball satisfying the Zhukovskiy condition, rolling without slipping and twisting over a fixed $(n-1)$--dimensional sphere.
\end{dfn}

As before, we consider the cotangent bundle $T^*S^{n-1}\subset \R^{2n}\{\gamma,p\}$
realized by the constraints \eqref{psi},
$\mathbf w$ is the canonical symplectic form on $T^*S^{n-1}$ given with \eqref{eq:w} and $\rho$ is the canonical projection $\rho: T^*S^{n-1}\to S^{n-1}$.
 Now, the magnetic Poisson brackets on $\R^{2n}\{\gamma,p\}$ without the set $\{\gamma=0\}$ are defined by:
\begin{equation}\label{poisson}
\{F, G\}_d =\{F, G\}^\kappa -\frac{\{F,\phi_1 \}^\kappa\{G,\phi_2\}^\kappa- \{F,\phi_2\}^\kappa
\{G,\phi_1\}^\kappa }{ \{\phi_1, \phi_2 \}^\kappa },
\end{equation}
where
\[
\{F,G\}^\kappa=\sum_{i}\big(\frac{\partial
F}{\partial\gamma_i} \frac{\partial G}{\partial
p_i}-\frac{\partial F}{\partial p_i}\frac{\partial
G}{\partial\gamma_i}\big)+
\frac{1}{\varepsilon^2}\sum_{i,j} \kappa_{ij}\frac{\partial
F}{\partial p_i} \frac{\partial G}{\partial p_j}
\]
and $\phi_1, \phi_2$ are given in \eqref{psi}.
The symplectic leaf given by \eqref{psi} is the cotangent
bundle $T^*S^{n-1}$ endowed with the twisted symplectic form $\mathbf w+\rho^*\mathbf f$.

Let the modified inertia operator $\mathbf I=\mathbb I+ D\mathrm{Id}_{so(n)}$ $(D=m a^2)$  be equal to the identity operator on $so(n)$ multiplied by a constant $\tau$. For example, we can take $\mathbb I$ given by \eqref{spec-op} with $\mathbb A=\diag(\sqrt{\tau},\dots,\sqrt{\tau})$.
Then the reduced Hamiltonian takes the form
\begin{equation}\label{DemHam}
h=\frac{\varepsilon^2}{2\tau}\langle p,p\rangle.
\end{equation}

By taking $H=h-\lambda_1\phi_1-\lambda_2\phi_2$, we obtain
the magnetic Hamiltonian flow of the Hamiltonian \eqref{DemHam} with respect to the Dirac bracket \eqref{poisson}
\begin{align}
\label{HFmag1}\dot\gamma=& \frac{\partial H}{\partial p}= \frac{\varepsilon^2}{\tau} p-\lambda_2 \gamma,\\
\label{HFmag2} \dot p   =&  -\frac{\partial H}{\partial \gamma}+ \frac{1}{\varepsilon^2}\kappa\big(\frac{\partial H}{\partial p}\big)=
2\lambda_1\gamma+\lambda_2 p+\frac{1}{\tau}\kappa p -\frac{\lambda_2}{\varepsilon^2} \kappa \gamma.
\end{align}
Here, from the condition that $\phi_1$ and $\phi_2$ are first integrals
of the flow, the Lagrange multipliers can be calculated to get
\[
\lambda_1=\frac{\frac{1}{\tau}\langle p,\kappa\gamma \rangle-\frac{\varepsilon^2}{\tau}\langle p,p\rangle}{2\langle \gamma,\gamma\rangle },
\qquad \lambda_2=\frac{\epsilon^2}{\tau}\frac{\langle p,\gamma\rangle}{\langle \gamma,\gamma\rangle}.
\]

 \begin{prop} The equations of motion of the $n$-dimensional generalized Demchenko case without twisting are:
\begin{equation}\label{DEMCENKOgyr}
 \tau\dot{\omega}=[\kappa,\omega]+\lambda_0, \qquad \dot\gamma=-\varepsilon\omega\gamma,
\end{equation}
where $\kappa\in so(n)$ is a fixed skew-symmetric matrix  \eqref{kapa}
and the Lagrange multiplier $\lambda_0\in(\R^n\wedge\gamma)^\perp$ is determined from the condition that
$\omega\in\R^n\wedge\gamma$.
The equations of motion
reduce to the magnetic geodesic flow of the Hamiltonian \eqref{DemHam} with respect to the bracket \eqref{poisson}
\begin{equation}\label{dem}
\dot\gamma=\frac{\varepsilon^2}{\tau} p,\qquad
\dot p=\frac{1}{\tau}\kappa p+\mu\gamma, \qquad \mu=\frac{1}{\tau}\langle p,\kappa\gamma \rangle-\frac{\varepsilon^2}{\tau}\langle p,p\rangle,
\end{equation}
restricted to the cotangent bundle of the sphere \eqref{psi}.
\end{prop}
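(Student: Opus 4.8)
The plan is to derive both displayed systems by specializing the general $n$-dimensional gyroscopic equations to the Zhukovskiy condition $\mathbf I=\tau\,\mathrm{Id}_{so(n)}$ and then passing to the cotangent bundle through the Legendre transformation \eqref{Leg}. First I would start from \eqref{ORIGINALgyr}, $\dot{\mathbf k}=[\mathbf k,\omega]+[\kappa,\omega]+\lambda_0$ with $\mathbf k=\mathbf I\omega$. Imposing $\mathbf I=\tau\,\mathrm{Id}_{so(n)}$ gives $\mathbf k=\tau\omega$, so $\dot{\mathbf k}=\tau\dot\omega$ and, crucially, $[\mathbf k,\omega]=\tau[\omega,\omega]=0$. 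The self-bracket of the angular momentum drops out, and the first equation of \eqref{DEMCENKOgyr}, $\tau\dot\omega=[\kappa,\omega]+\lambda_0$, follows at once; the kinematic equation $\dot\gamma=-\varepsilon\omega\gamma$ and the determination of $\lambda_0\in(\R^n\wedge\gamma)^\perp$ from $\omega\in\R^n\wedge\gamma$ are inherited verbatim from \eqref{ORIGINALgyr}.

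For the reduced system \eqref{dem}, I would evaluate the Legendre map \eqref{Leg}, $p=-\frac{1}{\varepsilon^2}\mathbf I(\gamma\wedge\dot\gamma)\gamma$, at $\mathbf I=\tau\,\mathrm{Id}$. The paper's own relations $\omega=\frac{1}{\varepsilon}\gamma\wedge\dot\gamma$ and $\dot\gamma=-\varepsilon\omega\gamma$ already give $(\gamma\wedge\dot\gamma)\gamma=-\dot\gamma$, whence $p=\frac{\tau}{\varepsilon^2}\dot\gamma$, i.e.\ the first equation $\dot\gamma=\frac{\varepsilon^2}{\tau}p$. Substituting $\mathbf I=\tau\,\mathrm{Id}$ and $\dot\gamma=\frac{\varepsilon^2}{\tau}p$ into the reduced equation \eqref{dotG} of Proposition \ref{redukcija2}, the term $\frac{1-\varepsilon}{\varepsilon^3}\mathbf I(\gamma\wedge\dot\gamma)\dot\gamma=\frac{(1-\varepsilon)\tau}{\varepsilon^3}\langle\dot\gamma,\dot\gamma\rangle\gamma$ is proportional to $\gamma$ (since $(\gamma\wedge\dot\gamma)\dot\gamma=\langle\dot\gamma,\dot\gamma\rangle\gamma$ on the sphere), so it can be absorbed into the normal multiplier, while $\frac{1}{\varepsilon^2}\kappa\dot\gamma=\frac{1}{\tau}\kappa p$. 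This yields the structural form $\dot p=\frac{1}{\tau}\kappa p+\mu\gamma$.

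The remaining task is to fix $\mu$, which I would do by differentiating the constraint $\langle\gamma,p\rangle=0$ from \eqref{psi}: the condition $\langle\dot\gamma,p\rangle+\langle\gamma,\dot p\rangle=0$, combined with $\dot\gamma=\frac{\varepsilon^2}{\tau}p$ and the skew-symmetry $\langle\gamma,\kappa p\rangle=-\langle p,\kappa\gamma\rangle$, produces $\mu=\frac{1}{\tau}\langle p,\kappa\gamma\rangle-\frac{\varepsilon^2}{\tau}\langle p,p\rangle$, exactly as in \eqref{dem}. As a cross-check I would confirm that \eqref{dem} agrees with the magnetic Hamiltonian flow \eqref{HFmag1}, \eqref{HFmag2} restricted to \eqref{psi}: there $\lambda_2$ vanishes because $\langle p,\gamma\rangle=0$, and $2\lambda_1$ collapses to the same $\mu$. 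No step is a serious obstacle; the only points needing care are the bivector conventions and the identity $\mathbf k=\mathbf I\omega$, together with verifying that the two normal-direction contributions (one absorbed from the $\gamma$-term, one from tangency) combine into the single multiplier $\mu$ displayed in \eqref{dem}.
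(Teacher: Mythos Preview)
Your proposal is correct and follows essentially the same route as the paper: the paper's proof simply cites \eqref{ORIGINALgyr}, the restriction of \eqref{HFmag1}--\eqref{HFmag2} to \eqref{psi}, and Proposition~\ref{redukcija2}, and you have unpacked exactly those three ingredients. The only cosmetic difference is that the paper obtains $\mu$ by restricting the Lagrange multipliers $\lambda_1,\lambda_2$ of the Dirac Hamiltonian flow, whereas you recover it by differentiating the constraint $\langle\gamma,p\rangle=0$ (and then cross-check against $2\lambda_1$); these are the same computation.
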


The proof follows from \eqref{ORIGINALgyr},  the equations \eqref{HFmag1}, \eqref{HFmag2} restricted to \eqref{psi},  and Proposition \ref{redukcija2}.

When $\varepsilon=1$, we obtain the equations of motion of a gyroscopic ball  rolling without slipping and twisting  over the plane orthogonal to $\gamma$, such that the
the inertia operator $\mathbb I$ of the system is proportional to the identity operator. In dimension $n=3$ this is the Zhukovskiy problem with an additional nontwisting condition (see Section \ref{sec6}).

Let us note that integrable magnetic Hamiltonian systems on $S^2$  were  studied in  \cite{S2002},   using their
relation to a special Neumann system on $S^3$.
In particular,  the reduced problem \eqref{dem} for $n=3$ was described there
by using the Cartan model of the sphere $S^2$ within the group $SU(2)$.
Although the systems \eqref{dem} are quite natural as they are described by the round metric on a sphere with a magnetic field defined by a constant
two-form in the ambient space, they have not been studied before for $n>3$.

Since $\mathbb I$ (and equivalently $\mathbf I$) is proportional to the identity matrix, we can consider,  without loss of generality,
the system  in a suitable orthonormal basis $[\mathbf{e}_1,\dots, \mathbf{e}_n]$  of $\R^n$,  such that
the skew-symmetric matrix  \eqref{kapa}  takes the form
\[
\kappa=\kappa_{12} \mathbf e_1\wedge \mathbf e_2+\kappa_{34} \mathbf e_3\wedge \mathbf e_4 +\dots +\mathbf \kappa_{2[n/2]-1,2[n/2]}\mathbf e_{2[n/2]-1}\wedge\mathbf e_{2[n/2]}.
\]

\subsection{Three-dimensional Demchenko case without twisting\label{ex:Dem}} In his PhD thesis \cite{Dem1924} (see also \cite{DGJ}) Demchenko studied the rolling of a
 ball with a gyroscope without slipping over a fixed sphere in $\mathbb{R}^3$. He assumed that the ball is  dynamically  axially symmetric, that axis of gyroscope coincide with symmetry axis of the ball, and that the inertia operators of the ball  and the gyroscope  satisfy the Zhukovskiy condition \eqref{uslovZh},  that is, the inertia operator of the system is proportional to the identity matrix:
 $\mathbb I=\diag(A,A,A)$.

The equations of motion are (see \eqref{Chap1})
\begin{equation}
\dot{\vec{\mathbf k}}=(\vec{{\mathbf k}}+\vec{\kappa})\times\vec{\omega},
\qquad \dot{\vec{\gamma}}=\varepsilon \vec{\gamma}\times\vec{\omega},
\label{demoriginal}
\end{equation}
 where $\vec{{\mathbf  k}}=(A+ma^2)\vec{\omega}-ma^2 \langle \vec{\omega},\vec{\gamma}\rangle \vec{\gamma}$.
 Demchenko  solved the system via elliptic functions.

Now, we add the no-twisting condition on the Demchenko rolling, e.q. we additionally assume that the angular velocity $\vec{\omega}$ belongs to the common tangent plane of the ball and the sphere in their contact point. The equations of motion are (see \eqref{e3a})
\begin{equation}\label{3dem}
\dot{\vec{\mathbf k}}=\vec{\kappa}\times\vec{\omega}+\lambda\vec\gamma,\qquad \vec{\gamma}=\varepsilon\vec{\gamma}\times\vec{\omega},
\end{equation}
where  $\vec{\mathbf{k}}=(A+ma^2)\vec{\omega}=((A+ma^2)\omega_1, (A+ma^2)\omega_2, (A+ma^2)\omega_3))$  and
$\lambda$  is the Lagrange multiplier of the constraint $\langle\vec{\omega},\vec{\gamma}\rangle=0$,
\[
\lambda=-\langle \vec\gamma,\vec\kappa\times \vec\omega\rangle.
\]

After the identification \eqref{izomorfizam}, the matrix system
\eqref{DEMCENKOgyr}, for $n=3$, becomes the system \eqref{3dem} in the vector notation, where the matrix multiplier $\lambda_0$ corresponds to  $\lambda\vec\gamma$,
 $\gamma\equiv \vec\gamma$, and the parameter $\tau$ is equal to $A+ma^2$ (see Remark \ref{3Doperator}).

The reduced equations of motion \eqref{dem} on $T^*S^2$,
for
$
\kappa=\kappa_{12}\mathbf e_1\wedge \mathbf e_2,
$
become
\begin{equation}\label{3dred}
\begin{aligned}
{\dot\gamma}_1&=\frac{\varepsilon^2}{\tau}p_1,\qquad\qquad {\dot p}_1=\frac{1}{\tau}\kappa_{12} p_2+\mu\gamma_1,\\
{\dot\gamma}_2&=\frac{\varepsilon^2}{\tau}p_2,\qquad\qquad {\dot p}_2=-\frac{1}{\tau}\kappa_{12} p_1+\mu\gamma_2,\\
{\dot\gamma}_3&=\frac{\varepsilon^2}{\tau}p_3,\qquad\qquad {\dot p}_3=\mu\gamma_3,\\
\mu&=\frac{\kappa_{12}}{\tau}  (p_1\gamma_2-p_2\gamma_1)-\frac{\varepsilon^2}{\tau}(p_1^2+p_2^2+p_3^2),
\end{aligned}
\end{equation}
They are Hamiltonian with respect to the Poisson structure \eqref{poisson}  and the  Hamiltonian is
\[
h=\frac{\varepsilon^2}{2\tau}(p_1^2+p_2^2+p_3^2).
\]

\begin{thm}
The reduced equations of  the  Demchenko case without twisting  \eqref{3dred}
are Liouville integrable on $T^*S^2$
with the first integrals $h$, $\Phi$, where
$$
\Phi(\gamma,p)=\gamma_1 p_2-\gamma_2p_1+\frac{\kappa_{12}}{2\varepsilon^2}(\gamma_1^2+\gamma_2^2).
$$
\end{thm}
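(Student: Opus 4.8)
The plan is to invoke the Arnold--Liouville theorem. Since $\dim T^*S^2=4$, Liouville integrability only requires two functionally independent first integrals that are in involution with respect to the Dirac bracket \eqref{poisson}, and the claim is that $h$ and $\Phi$ form such a pair. The function $h$ is the Hamiltonian of the flow \eqref{3dred} and is therefore automatically conserved, so essentially all the content lies in the function $\Phi$.

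First I would show that $\Phi$ is a first integral. The quickest route is a direct differentiation: substituting the right-hand sides of \eqref{3dred} into
\[
\dot\Phi=\dot\gamma_1 p_2+\gamma_1\dot p_2-\dot\gamma_2 p_1-\gamma_2\dot p_1+\frac{\kappa_{12}}{\varepsilon^2}(\gamma_1\dot\gamma_1+\gamma_2\dot\gamma_2),
\]
one finds that the contributions proportional to $\frac{\varepsilon^2}{\tau}p_1p_2$, to $\frac{\kappa_{12}}{\tau}(\gamma_1 p_1+\gamma_2 p_2)$, and to $\mu\gamma_1\gamma_2$ each cancel pairwise, giving $\dot\Phi\equiv 0$. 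Conceptually, this conservation law is Noether's theorem: $\Phi$ is precisely the momentum map of the $SO(2)$-action rotating the $(\gamma_1,\gamma_2,p_1,p_2)$-block, taken with respect to the twisted symplectic form $\mathbf w+\rho^*\mathbf f$. Indeed, for the generating vector field $\xi=\gamma_1\partial_{\gamma_2}-\gamma_2\partial_{\gamma_1}+p_1\partial_{p_2}-p_2\partial_{p_1}$ one computes $i_\xi\mathbf w=-d(\gamma_1 p_2-\gamma_2 p_1)$ and, using $\mathbf f=\frac{\kappa_{12}}{\varepsilon^2}d\gamma_1\wedge d\gamma_2$ from \eqref{eq:f}, $i_\xi\rho^*\mathbf f=-\frac{\kappa_{12}}{2\varepsilon^2}d(\gamma_1^2+\gamma_2^2)$, so that $i_\xi(\mathbf w+\rho^*\mathbf f)=-d\Phi$. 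Since the round-metric Hamiltonian \eqref{DemHam} is $SO(n)$-invariant, hence $SO(2)$-invariant, and $\mathbf w+\rho^*\mathbf f$ is $SO(2)$-invariant, the momentum map $\Phi$ is conserved.

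Involution is then immediate: as the flow \eqref{3dred} is the Hamiltonian flow of $h$ with respect to $\{\cdot,\cdot\}_d$ and $\Phi$ is a first integral, we have $\{h,\Phi\}_d=-\dot\Phi=0$ identically on $T^*S^2$. It remains to check functional independence on a dense open set. The level sets of $h$ are cosphere bundles $\{|\gamma|^2=1,\ \langle\gamma,p\rangle=0,\ |p|^2=\mathrm{const}\}$, which are compact, so the common level sets of $h$ and $\Phi$ are compact as well. The map $(h,\Phi)\colon T^*S^2\to\R^2$ fails to be a submersion only where $dh=0$ (the zero section $p=0$) or where $\Phi$ is stationary along the transverse directions, i.e. at the fixed points of the $SO(2)$-action; both form a proper closed subset, on whose complement $(h,\Phi)$ has rank $2$. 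Consequently, by the Arnold--Liouville theorem \cite{Ar} the regular, compact, connected common level sets of $h$ and $\Phi$ are two-dimensional Lagrangian tori carrying quasi-periodic motion, establishing the Liouville integrability of \eqref{3dred} on $T^*S^2$.

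The verifications above are entirely routine once $\Phi$ is recognized as the magnetically corrected $SO(2)$-momentum; there is no substantial obstacle, the only point meriting a line of care being the genericity statement, which is handled by locating the critical set of $(h,\Phi)$ inside the zero section together with the fixed-point set of the circle action. (Note that this degenerate case $a_1=a_2=a_3$ falls outside the scope of Theorem \ref{integrabilni}, whose integral $\Phi_{12}$ blows up as $a_1\to a_3$, so the present $\Phi$ is its correct replacement.)
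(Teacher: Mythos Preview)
Your proof is correct and follows the same approach as the paper, whose proof consists of the single sentence ``Proof follows by a direct calculation.'' You carry out that direct calculation explicitly and supplement it with the Noether/momentum-map interpretation of $\Phi$ for the $SO(2)$-action with respect to the twisted form $\mathbf w+\rho^*\mathbf f$, together with a check of functional independence and compactness of the level sets; none of this is in the paper's one-line proof, but it is entirely consistent with it and fleshes out what the authors leave implicit.
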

Proof follows by a direct calculation.

The reduced     system \eqref{3dred}   can be solved in elliptic quadratures.

\begin{thm}\label{th:Integration3d}
The reduced equations of the three-dimensional Demchenko case without twisting \eqref{3dred} can be explicitly integrated via elliptic functions and their degenerations.
\end{thm}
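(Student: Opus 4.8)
The plan is to exploit the $SO(2)$-symmetry about the axis $\mathbf e_3$ together with the two first integrals $h$ and $\Phi$ from the preceding theorem in order to separate variables, reducing the integration of \eqref{3dred} to a single quadrature governed by a quartic polynomial. First I would eliminate the momenta: the first column of \eqref{3dred} gives $p=\frac{\tau}{\varepsilon^2}\dot\gamma$, so it suffices to integrate the curve $\gamma(t)\in S^2$ and then recover $p$ by differentiation. I would then pass to coordinates adapted to the symmetry axis, namely $z=\gamma_3$ and $(\gamma_1,\gamma_2)=\sqrt{1-z^2}\,(\cos\phi,\sin\phi)$, valid away from the poles $z=\pm1$. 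In these coordinates the round metric gives $|\dot\gamma|^2=\frac{\dot z^2}{1-z^2}+(1-z^2)\dot\phi^2$.

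Next I would rewrite the conserved quantities in the $(z,\phi)$ coordinates. Conservation of $h$ is the constant-speed condition $|\dot\gamma|^2=E$ with $E=2\varepsilon^2 h/\tau$, while the integral $\Phi$ becomes $\Phi=\frac{\tau}{\varepsilon^2}(1-z^2)\dot\phi+\frac{\kappa_{12}}{2\varepsilon^2}(1-z^2)$, from which one solves $\dot\phi=\frac{\varepsilon^2\Phi}{\tau(1-z^2)}-\frac{\kappa_{12}}{2\tau}$. Substituting $\dot\phi$ back into the energy relation eliminates the angular variable and yields the separated equation
\[
\dot z^2=E(1-z^2)-\frac{1}{\tau^2}\Big(\varepsilon^2\Phi-\tfrac{\kappa_{12}}{2}(1-z^2)\Big)^2=:P_4(z),
\]
a quartic polynomial in $z$ with leading coefficient $-\kappa_{12}^2/(4\tau^2)$.

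The equation $\dot z^2=P_4(z)$ integrates by the classical elliptic quadrature $t-t_0=\int dz/\sqrt{P_4(z)}$, whose inversion expresses $z(t)$ through Jacobi elliptic functions; the azimuthal angle then follows from the additional quadrature $\phi(t)=\int\big(\frac{\varepsilon^2\Phi}{\tau(1-z(t)^2)}-\frac{\kappa_{12}}{2\tau}\big)\,dt$, which is an elliptic integral of the third kind because of the poles at $z=\pm1$, and finally $p$ is recovered from $p=\frac{\tau}{\varepsilon^2}\dot\gamma$.

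Aside from the routine algebra, the point that genuinely requires care is the \emph{degenerations} named in the statement: these come from the discriminant analysis of $P_4$, and whenever two or more of its roots coalesce the elliptic functions degenerate to elementary ones. In particular, for $\kappa_{12}=0$ the leading coefficient of $P_4$ vanishes, the quartic drops to a quadratic, and one recovers the great-circle geodesics of the round sphere in terms of trigonometric functions, consistent with the vanishing of the magnetic field. I expect the main effort to lie in this case-by-case reduction of $P_4$ together with the third-kind quadrature for $\phi$, rather than in any conceptual obstacle, since separability is guaranteed by the $SO(2)$-invariance established above.
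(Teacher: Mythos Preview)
Your proposal is correct and follows essentially the same separation-of-variables argument as the paper: exploit the $SO(2)$-symmetry to express $\dot\phi$ from $\Phi$, substitute into the energy relation, and reduce to a single quadrature. The only cosmetic difference is the choice of separating variable---the paper uses $r=\sqrt{\gamma_1^2+\gamma_2^2}$ and then $u=r^2$, arriving at a cubic $\dot u^2=Q_3(u)$, whereas you use the complementary coordinate $z=\gamma_3=\sqrt{1-u}$ and land on the even quartic $\dot z^2=P_4(z)$; since $P_4$ is a quadratic in $z^2$, the substitution $w=z^2=1-u$ recovers the paper's cubic, so the two reductions are equivalent.
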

\begin{proof}
 Instead on the cotangent bundle $T^*S^2\{\gamma,p\}$,  we will equivalently integrate the system
on the tangent bundle $TS^2\{\gamma,\dot\gamma\}$.
Let us introduce polar coordinates $r, \varphi$ by
\[
\gamma_1=r\cos\varphi,\ \gamma_2=r\sin\varphi.
\]
From the condition $\langle\gamma,\gamma\rangle=1$ it follows that $r^2+\gamma_3^2=1$, while $\langle\gamma,\dot\gamma\rangle=0$ is identically satisfied.
 By differentiating $r^2+\gamma_3^2=1$ with respect to time, one gets $\dot{\gamma}_3^2=\frac{r^2}{1-r^2}\dot{r}^2$.

In the new coordinates, using the last relation, the first integrals can be rewritten as:
 \begin{align}
h&=\frac{\tau}{2\varepsilon^2}\big(\dot{r}^2+r^2\dot{\varphi}^2+\frac{r^2\dot{r}^2}{1-r^2}),\label{h3}\\
\Phi&=\frac{\tau}{\varepsilon^2}r^2\dot{\varphi}+\frac{\kappa_{12}}{2\varepsilon^2}r^2\label{f3}.
\end{align}

Note that $\tau>0$. We also assume $h>0$ since $h=0$ corresponds to the equilibrium positions.

From \eqref{f3},  we get
\begin{equation}\label{fi}
\dot{\varphi}=\frac{2\varepsilon^2\Phi-\kappa_{12} r^2}{2\tau r^2},\
\end{equation}
and, by plugging into \eqref{h3},  it follows
\[
\dot{r}^2=\big(\frac{\varepsilon^2}{\tau^2}(2h\tau+\kappa_{12}\Phi)-\frac{\kappa_{12}^2}{4\tau^2}r^2-\frac{\varepsilon^4\Phi^2}{\tau^2}\frac{1}{r^2}\big)(1-r^2).
\]

Introducing $u=r^2$, one derives
\begin{align}
\nonumber            &  \qquad\qquad\qquad \dot{u}^2=Q_3(u), \\
\label{eq:Q}   Q_3(u) :&=\frac{\kappa_{12}^2}{\tau^2}(u-1)\big(u^2-\frac{4\varepsilon^2}{\kappa_{12}^2}(2h\tau+\kappa_{12}\Phi)u+\frac{4\varepsilon^4\Phi^2}{\kappa_{12}^2}\big)\\
\nonumber            &=\frac{\kappa_{12}^2}{\tau^2}(u-1)(u-u_1)(u-u_2).
\end{align}

Thus, $r^2$ can be expressed as an elliptic function (or its degenerations) of time. Using $\gamma_3^2=1-r^2$, one gets $\gamma_3$, and from \eqref{fi} one finds $\varphi$ after an integration.\end{proof}

Notice that the polynomial $Q_3$ \eqref{eq:Q} always has $u=1$ as a root. Observe also:
\[
Q_3(0)=-\frac{ 4 \varepsilon^4\Phi^2}{\tau^2}<0.
\]

From Vieta's formulas, it follows that $u_1u_2>0$, or in other words, the remaining two roots $u_1, u_2$ of $Q_3$ are of the same sign. Having in mind that $0\leqslant u\leqslant 1$, the real solutions,  for $u_1< u_2$,  corresponds to the following cases:

\begin{itemize}

\item[(A)] $0<u_1<u_2<1$; Case  (A)  happens when the discriminant of the polynomial $Q_2(u)=(u-u_1)(u-u_2)$ is greater then zero,
 the minimum of $Q_2(u)$ is between  $0$ and $1$, and  $Q_2(1)>0$. This yields
conditions:
\[
\begin{aligned}
&h\tau+\kappa_{12}\Phi>0,\\
&2h\tau+\kappa_{12}\Phi<\frac{\kappa_{12}^2}{2\varepsilon^2},\\
&2h\tau+\kappa_{12}\Phi-\varepsilon^2\Phi<\frac{\kappa_{12}^2}{4\varepsilon^2}
\end{aligned}
\]

\item[(B)] $0<u_1<1<u_2$. Case (B)  happens when $Q_2(1)<0$,  that is
\[
2h\tau+\kappa_{12}\Phi-\varepsilon^2\Phi>\frac{\kappa_{12}^2}{4\varepsilon^2}
\]

\end{itemize}

In both cases $r$ belongs to an annulus:
\[
\text{Case (A)} \quad \sqrt{u_1}\leqslant r\leqslant\sqrt{u_2};\qquad  \text{Case (B)}\quad \sqrt{u_1}\leqslant r\leqslant 1.
\]

When the discriminant of the polynomial $Q_3$ \eqref{eq:Q}  vanishes, the corresponding elliptic functions degenerate. It happens  if
 $u_1=u_2$, or when one of the roots $u_1, u_2$ is equal to $1$. Direct calculations show that the
discriminant of the polynomial $Q_3$ vanishes when
\[
h\tau+\kappa_{12}\Phi=0, \qquad \text{or} \qquad 2h\tau+\kappa_{12}\Phi -  \varepsilon^2\Phi=\frac{\kappa_{12}^2}{4\varepsilon^2}.
\]
The first case corresponds to the condition that the discriminant of $Q_2$
is zero, and  the second case corresponds to  $Q_2(1)=0$.

\subsection{The generalized Demchenko case without twisting in $\R^4$. A qualitative analysis of the solutions}

In dimension four, the equations of motion of generalized Demchenko case without twisting reduce to Hamiltonian equations with respect to the Poisson structure \eqref{poisson} on the cotangent bundle $T^*S^{3}\subset\mathbb{R}^4\{\gamma,p\}$ of the three-dimensional sphere realized by $\langle\gamma,\gamma\rangle=1$, $\langle\gamma,p\rangle=0$.
 Let
\[
\kappa=\kappa_{12} \mathbf e_1\wedge \mathbf e_2+\kappa_{34} \mathbf e_3\wedge \mathbf e_4.
\]

The equations \eqref{dem} are:
\begin{equation}\label{eq:dim4}
\begin{aligned}
{\dot\gamma}_1&=\frac{\varepsilon^2}{\tau}p_1,\qquad\qquad {\dot p}_1=\frac{1}{\tau}\kappa_{12}p_2+\mu\gamma_1,\\
{\dot\gamma}_2&=\frac{\varepsilon^2}{\tau}p_2,\qquad\qquad {\dot p}_2=-\frac{1}{\tau}\kappa_{12}p_1+\mu\gamma_2,\\
{\dot\gamma}_3&=\frac{\varepsilon^2}{\tau}p_3,\qquad\qquad {\dot p}_3=\frac{1}{\tau}\kappa_{34}p_4+\mu\gamma_3,\\
{\dot\gamma}_4&=\frac{\varepsilon^2}{\tau}p_4,\qquad\qquad {\dot p}_4=-\frac{1}{\tau}\kappa_{34}p_3+\mu\gamma_4,\\
\mu&=\frac{1}{\tau}\big(\kappa_{12}(p_1\gamma_2-p_2\gamma_1)+\kappa_{34}(p_3\gamma_4-p_4\gamma_3)\big)-\frac{\varepsilon^2}{\tau}(p_1^2+p_2^2+p_3^2+p_4^2).
\end{aligned}
\end{equation}
The Hamiltonian is
\[
h=\frac{\varepsilon^2}{ 2  \tau}(p_1^2+p_2^2+p_3^2+p_4^2).
\]

\begin{thm}\label{3sfera}
The reduced equations of generalized Demchenko case for $n=4$ \eqref{eq:dim4} are Liouville integrable on $T^*S^3$ with the three first integrals
$h$, $\Phi_{12}$, and $\Phi_{34}$ in involution, where
\begin{align*}
&\Phi_{12}(p,\gamma)=\gamma_1 p_2-\gamma_2p_1+\frac{\kappa_{12}}{2\varepsilon^2}(\gamma_1^2+\gamma_2^2), \\
&\Phi_{34}(p,\gamma)=\gamma_3 p_4-\gamma_4p_3+\frac{\kappa_{34}}{2\varepsilon^2}(\gamma_3^2+\gamma_4^2).
\end{align*}
\end{thm}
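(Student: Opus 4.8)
The plan is to verify the three hypotheses of the Arnold--Liouville theorem on the six-dimensional symplectic leaf $T^*S^3$: that $h$, $\Phi_{12}$, $\Phi_{34}$ are first integrals of \eqref{eq:dim4}, that they pairwise commute with respect to the Dirac bracket \eqref{poisson}, and that they are functionally independent on a dense open set.

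First I would confirm that $\Phi_{12}$ and $\Phi_{34}$ are conserved. A direct differentiation along \eqref{eq:dim4} does it: using $\dot\gamma_i=\frac{\varepsilon^2}{\tau}p_i$ the term $\dot\gamma_1 p_2-\dot\gamma_2 p_1$ vanishes, the term $\gamma_1\dot p_2-\gamma_2\dot p_1$ equals $-\frac{\kappa_{12}}{\tau}(\gamma_1 p_1+\gamma_2 p_2)$, and this is exactly compensated by the time-derivative $\frac{\kappa_{12}}{\tau}(\gamma_1 p_1+\gamma_2 p_2)$ of the quadratic correction $\frac{\kappa_{12}}{2\varepsilon^2}(\gamma_1^2+\gamma_2^2)$, so $\dot\Phi_{12}=0$; the computation for $\Phi_{34}$ is identical after relabelling. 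Conceptually this is the magnetic Noether theorem: since the round-sphere Hamiltonian $h$ and the magnetic form $\mathbf f=\frac{1}{\varepsilon^2}(\kappa_{12}\,d\gamma_1\wedge d\gamma_2+\kappa_{34}\,d\gamma_3\wedge d\gamma_4)$ are invariant under the two commuting circle actions $SO(2)_{12}$ and $SO(2)_{34}$, the functions $\Phi_{12}$, $\Phi_{34}$ are the corresponding components of the momentum map and are preserved. In particular $\{h,\Phi_{12}\}_d=\{h,\Phi_{34}\}_d=0$.

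The core of the argument, and the step I expect to be the main obstacle, is the involution $\{\Phi_{12},\Phi_{34}\}_d=0$, since the Dirac correction in \eqref{poisson} is built from the constraints $\phi_1=\langle\gamma,\gamma\rangle$, $\phi_2=\langle\gamma,p\rangle$ and a priori couples all four coordinate blocks. The structural point is that for $\kappa=\kappa_{12}\mathbf e_1\wedge\mathbf e_2+\kappa_{34}\mathbf e_3\wedge\mathbf e_4$ the function $\Phi_{12}$ depends only on $\gamma_1,\gamma_2,p_1,p_2$ while $\Phi_{34}$ depends only on $\gamma_3,\gamma_4,p_3,p_4$, so both the canonical and the magnetic parts of $\{\Phi_{12},\Phi_{34}\}^\kappa$ vanish term by term. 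To eliminate the correction I would then compute $\{\Phi_{12},\phi_1\}^\kappa$ and $\{\Phi_{12},\phi_2\}^\kappa$: the first is zero immediately, and for the second the canonical piece $\frac{\kappa_{12}}{\varepsilon^2}(\gamma_1^2+\gamma_2^2)$ is exactly cancelled by the magnetic piece $-\frac{\kappa_{12}}{\varepsilon^2}(\gamma_1^2+\gamma_2^2)$, so $\{\Phi_{12},\phi_1\}^\kappa=\{\Phi_{12},\phi_2\}^\kappa=0$. Hence the entire numerator of the correction term in \eqref{poisson} vanishes and $\{\Phi_{12},\Phi_{34}\}_d=\{\Phi_{12},\Phi_{34}\}^\kappa=0$; note also $\{\phi_1,\phi_2\}^\kappa=2$ on the leaf, so the bracket is well defined.

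Finally I would check functional independence on a dense open subset of $T^*S^3$: at a generic point with $p\neq 0$ and $\gamma,p$ in general position the differentials $dh$, $d\Phi_{12}$, $d\Phi_{34}$ are linearly independent. Having three independent, pairwise commuting integrals on the six-dimensional symplectic manifold $T^*S^3$, the Arnold--Liouville theorem yields Liouville integrability, with generic common level sets being three-dimensional Lagrangian tori, in agreement with part (ii) of Theorem~\ref{integrabilni}. This completes the proof.
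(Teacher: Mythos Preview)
Your proposal is correct and follows the same approach as the paper, which simply states ``The proof follows by a direct calculation.'' You have supplied the details of that calculation --- the conservation of $\Phi_{12}$ and $\Phi_{34}$, the vanishing of $\{\Phi_{12},\phi_1\}^\kappa$ and $\{\Phi_{12},\phi_2\}^\kappa$ so that the Dirac correction disappears, and independence --- all of which are straightforward and accurate.
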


The proof  follows by a direct calculation.

It is well known that the question of integrability for a Hamiltonian system is distinct from the problem of its explicit integration.

The reduced equations of generalized Demchenko case without twisting in $\R^4$ can be solved via elliptic functions by quadratures, similarly to their three-dimensional counterpart, see Theorem \ref{th:Integration3d} above.

\begin{thm}\label{th:Integration}
The reduced equations of generalized Demchenko case without twisting for $n=4$ \eqref{eq:dim4} can be explicitly integrated via elliptic functions and their degenerations.
\end{thm}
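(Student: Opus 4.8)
The plan is to follow closely the integration of the three-dimensional case in Theorem \ref{th:Integration3d}, now exploiting the block structure of $\kappa=\kappa_{12}\,\mathbf e_1\wedge\mathbf e_2+\kappa_{34}\,\mathbf e_3\wedge\mathbf e_4$ together with the two commuting first integrals $\Phi_{12}$, $\Phi_{34}$ from Theorem \ref{3sfera}. As before, I would pass from the cotangent bundle to the tangent bundle $TS^3\{\gamma,\dot\gamma\}$ via $p=\frac{\tau}{\varepsilon^2}\dot\gamma$, and introduce polar coordinates adapted to the two invariant planes of the $SO(2)\times SO(2)$ action:
\[
\gamma_1=r_1\cos\varphi_1,\quad \gamma_2=r_1\sin\varphi_1,\quad \gamma_3=r_2\cos\varphi_2,\quad \gamma_4=r_2\sin\varphi_2.
\]
The constraint $\langle\gamma,\gamma\rangle=1$ from \eqref{psi} then reads $r_1^2+r_2^2=1$, while $\langle\gamma,p\rangle=0$ is automatically satisfied; differentiating the former gives $\dot r_2^2=\frac{r_1^2}{1-r_1^2}\dot r_1^2$, exactly as in the three-dimensional reduction.

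In these coordinates the integrals separate. The two angular integrals become
\[
\Phi_{12}=\frac{\tau}{\varepsilon^2}r_1^2\dot\varphi_1+\frac{\kappa_{12}}{2\varepsilon^2}r_1^2,\qquad \Phi_{34}=\frac{\tau}{\varepsilon^2}r_2^2\dot\varphi_2+\frac{\kappa_{34}}{2\varepsilon^2}r_2^2,
\]
which I would solve for $\dot\varphi_1$ and $\dot\varphi_2$. Substituting these together with $\dot r_1^2+\dot r_2^2=\dot r_1^2/(1-r_1^2)$ into the energy
\[
h=\frac{\tau}{2\varepsilon^2}\big(\dot r_1^2+\dot r_2^2+r_1^2\dot\varphi_1^2+r_2^2\dot\varphi_2^2\big)
\]
and setting $u=r_1^2$ (so that $r_2^2=1-u$ and $\dot r_1^2=\dot u^2/(4u)$) yields, after clearing the denominators $u$ and $1-u$, a single separated equation $\dot u^2=Q(u)$ with $Q$ a polynomial in $u$. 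A short computation of the leading coefficient shows the cubic terms combine with coefficient $(\kappa_{12}^2-\kappa_{34}^2)/\tau^2$, so that $Q$ is genuinely cubic for $\kappa_{12}^2\ne\kappa_{34}^2$ and degenerates to a quadratic otherwise, in full analogy with the cubic $Q_3$ of \eqref{eq:Q}.

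Having reduced the radial motion to $\dot u^2=Q(u)$ with $\deg Q\le 3$, I conclude as in Theorem \ref{th:Integration3d} that $u=r_1^2$ is an elliptic function of time, or one of its degenerations when the discriminant of $Q$ vanishes or when $\kappa_{12}^2=\kappa_{34}^2$; then $r_2^2=\gamma_3^2+\gamma_4^2=1-u$ follows directly. Finally, the two angles are recovered by the quadratures obtained from the solved expressions for $\dot\varphi_1$ and $\dot\varphi_2$, which are elliptic integrals of the third kind generically. I expect the main technical obstacle to be the bookkeeping in the derivation and factorization of $Q(u)$: verifying the cancellations that keep its degree at most three, identifying its roots in terms of $h,\Phi_{12},\Phi_{34},\kappa_{12},\kappa_{34},\varepsilon,\tau$, and enumerating the admissible ranges $u\in[0,1]$ together with the degenerate subcases, in complete parallel with cases (A) and (B) of the three-dimensional analysis.
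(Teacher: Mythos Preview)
Your proposal is correct and follows essentially the same approach as the paper: the paper also passes to the tangent bundle, introduces the same double polar coordinates (denoted $\rho_1,\rho_3,\varphi_1,\varphi_3$ there), eliminates $\dot\varphi_1,\dot\varphi_3$ via $\Phi_{12},\Phi_{34}$, substitutes into $h$, sets $u=\rho_1^2$, and obtains $\dot u^2=P_3(u)$ with leading coefficient $(\kappa_{12}^2-\kappa_{34}^2)/\tau^2$. The only differences are notational.
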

\begin{proof}
 As in dimension $n=3$, instead on the cotangent bundle $T^*S^2\{\gamma,p\}$,  we will integrate the system
on the tangent bundle $TS^3\{\gamma,\dot\gamma\}$.
Let us introduce new coordinates $\rho_1, \rho_3, \varphi_1, \varphi_3$ by
\[
\gamma_1=\rho_1\cos\varphi_1,\quad \gamma_2=\rho_1\sin\varphi_1,\quad \gamma_3=\rho_3\cos\varphi_3,\quad \gamma_4=\rho_3\sin\varphi_3.
\]

From the condition $\langle\gamma,\gamma\rangle=1$ it follows that $\rho_1^2+\rho_3^2=1$, while $\langle\gamma, \dot\gamma \rangle=0$ is identically satisfied.
In the new coordinates the first integrals  become
\begin{equation}\label{int}
\begin{aligned}
& h=\frac{\tau}{2\varepsilon^2}\left(\dot{\rho}_1^2+\rho_1^2\dot{\varphi}_1^2+\dot{\rho}_3^2+\rho_3^2\dot{\varphi}_3^2\right),\\
& \Phi_{12}=\frac{\tau}{\varepsilon^2}\rho_1^2\dot{\varphi}_1+\frac{\kappa_{12}}{2\varepsilon^2}\rho_1^2,\\
& \Phi_{34}=\frac{\tau}{\varepsilon^2}\rho_3^2\dot{\varphi}_3+\frac{\kappa_{34}}{2\varepsilon^2}\rho_3^2.
\end{aligned}
\end{equation}

Since the first integrals $\Phi_{12}$ and $\Phi_{34}$ depend on $\rho_1, \dot{\varphi}_{1}$ and $\rho_3, \dot{\varphi}_{3}$ respectively, $\dot{\varphi}_1$ can be expressed as
a function of $\rho_1$ and values of these first integrals; similarly,  $\dot{\varphi}_3$ can be expressed as
a function of $\rho_3$ and values of these first integrals:
\begin{equation}
\label{int1}
\dot{\varphi}_1=\frac{2\varepsilon^2\Phi_{12}-\kappa_{12}\rho_1^2}{2\tau\rho_1^2},\qquad
\dot{\varphi}_3=\frac{2\varepsilon^2\Phi_{34}-\kappa_{34}\rho_3^2}{2\tau\rho_3^2}.
\end{equation}

By differentiating  the relation $\rho_1^2+\rho_3^2=1$ with respect to time,  we  get
\[
\dot{\rho}_3^2=\frac{\rho_1^2}{1-\rho_1^2}\dot{\rho}_1^2.
\]
Using \eqref{int1}, the last equality, and the expression for the first integral $h$ from \eqref{int}, one obtains
\[
\dot{\rho}_1^2=(1-\rho_1^2)\frac{2\varepsilon^2h}{\tau}-\frac{(2\varepsilon^2\Phi_{34}-\kappa_{34}+\kappa_{34}\rho_1^2)^2}{4\tau^2}-
\frac{1-\rho_1^2}{\rho_1^2}\frac{(2\varepsilon^2\Phi_{12}-\kappa_{12}\rho_1^2)^2}{4\tau^2}.
\]

Introducing $u=\rho_1^2$,  it follows
\begin{equation}\label{u}
\dot{u}^2=P_3(u).
\end{equation}

Here, $P_3$ is a polynomial in $u$ of the degree not greater than three:
\[
P_3(u):=a_0u^3+a_1u^2+a_2u+a_3,
\]
where
\begin{align*}
a_0&=\frac{\kappa_{12}^2-\kappa_{34}^2}{\tau^2},\qquad\qquad a_3=-\frac{4\varepsilon^4\Phi_{12}^2}{\tau^2},\\
a_1&=-\frac{8\varepsilon^2h}{\tau}-\frac{2\kappa_{34}}{\tau^2}(2\varepsilon^2\Phi_{34}-\kappa_{34})-\frac{\kappa_{12}^2}{\tau^2}-\frac{4\varepsilon^2\kappa_{12}\Phi_{12}}{\tau^2},\\
a_2&=\frac{8\varepsilon^2h}{\tau}-\frac{(2\varepsilon^2\Phi_{34}-
\kappa_{34})^2}{\tau^2}+\frac{4\varepsilon^2\kappa_{12}\Phi_{12}}{\tau^2}+\frac{4\varepsilon^4\Phi_{12}^2}{\tau^2}.
\end{align*}

 Therefore,  from the equation \eqref{u}, integrating, one gets $\rho_1^2$ as an elliptic function or a degeneration of an elliptic function,
depending on the degree and composition of zeros of the polynomial $P_3(u)$.  We get $\rho_3$  from the algebraic equation
$\rho_3^2=1-\rho_1^2$. Finally, the variables $\varphi_1, \varphi_3$ can be obtained by quadratures from \eqref{int1}. \end{proof}

Let us express the variable $\rho_1^2$ in terms of the Weierstrass $\wp$-function  in a generic case:
$\kappa_{12}^2\ne \kappa_{34}^2$ and  the polynomial $P_3(u)$ has all roots distinct.
Introducing $z$ such that
\[
u=\frac{4}{a_0}z-\frac{a_1}{3a_0},
\]
 the equation \eqref{u} takes the form
\begin{equation}\label{medjukorak}
\dot{z}^2=4z^3-g_2z-g_3,
\end{equation}
where
\[
g_2=\frac{a_1^2}{12}-\frac{a_0a_2}{4}, \qquad g_3=\frac{a_0a_1a_2}{4}-\frac{a_1^3}{216}-\frac{a_0^2a_3}{16}.
\]
 By integration of \eqref{medjukorak} we get
\[
\int\limits_z^\infty\frac{d\xi}{\sqrt{4\xi^3-g_2\xi-g_3}}-\int\limits_{z_0}^\infty\frac{d\xi}{\sqrt{4\xi^3-g_2\xi-g_3}}=\pm (t-t_0).
\]

 Finally,  using the Weierstrass $\wp$-function (see for example \cite{Akh4}), one obtains
\[
z=\wp(A\pm(t-t_0)),\quad z_0=\wp(A).
\]

Now, we are going to provide a qualitative analysis of the solutions of the generalized Demchenko case without twisting in $\R^4$, obtained
in Theorem \ref{th:Integration}.

{\bf Case A.} Let us consider first the case $\kappa_{12}^2\ne \kappa_{34}^2$. Then $P_3(u)$ is a degree three polynomial.
The coordinates $\rho_1, \varphi_1$ and $\rho_3, \varphi_3$ are polar coordinates on the projections of the sphere $\langle\gamma,\gamma\rangle=1$
 to  the coordinate planes $O\mathbf e_1\mathbf e_2$ and $O\mathbf e_3\mathbf e_4$, respectively.  Hence, $\rho_1$ and $\rho_3$, and consequently $u$ can take values  between $0$ and $1$.

Since
\[
P_3(0)=-\frac{4\varepsilon^4{\Phi_{12}^{2}}}{\tau^2}<0,
\]
and
\[
P_3(1)=-\frac{4\varepsilon^4{\Phi_{34}^{2}}}{\tau^2}<0,
\]
one concludes that
on interval $(0,1)$ the polynomial $P_3(u)$ has (i) no real roots; (ii) two distinct real roots; or (iii) one double real root.

\begin{itemize}

\item[(i)]
If the number of real roots is zero, then the polynomial $P_3(u)$ takes negative values on the whole interval $(0,1)$. Thus, the case (i)
does not correspond to a real motion.

\item[(ii)]
In the case (ii)
when the polynomial $P_3(u)$ has two distinct real roots $u_1<u_2$ on the interval $(0,1)$, the projection of a trajectory  to
the $O\mathbf e_1\mathbf e_2$ and $O\mathbf e_3\mathbf e_4$ planes belong, respectively, to the annuli
\[
\sqrt{u_1}\leqslant\rho_1\leqslant\sqrt{u_2} \qquad \text{and} \qquad \sqrt{1-u_2^2}\leqslant\rho_3=\sqrt{1-\rho_1^2} \leqslant\sqrt{1-u_1^2}.
\]
There are three types of the trajectories in this case.  Let
\[
\hat{u}=\frac{2\varepsilon^2\Phi_{12}}{\kappa_{34}}.
\]
 If $\hat{u}$  belongs to $(u_1,u_2)$ then $\dot{\varphi}_1$ changes the sign and trajectories are presented in Figure
\ref{fig:slika2ab}. If $\hat{u}$ is equal to $u_1$ or $u_2$, then the trajectories are presented
in Figure \ref{fig:slika3ab}. Otherwise, the trajectories are presented in Figure \ref{fig:slika4}.

\begin{figure}[h]
{\centering
{\includegraphics[width=11cm]{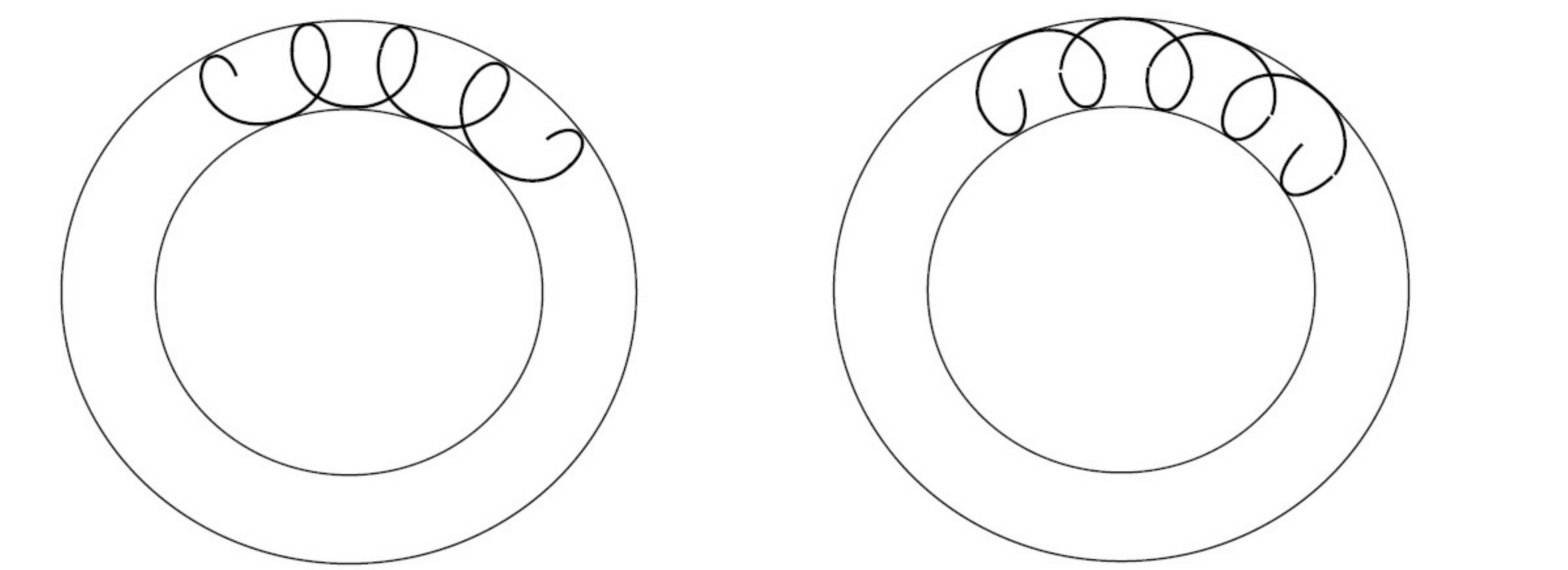}}
\caption{The case $u_1<\hat{u}<u_2$} \label{fig:slika2ab}}
\end{figure}

\begin{figure}[h]
{\centering
{\includegraphics[width=10cm]{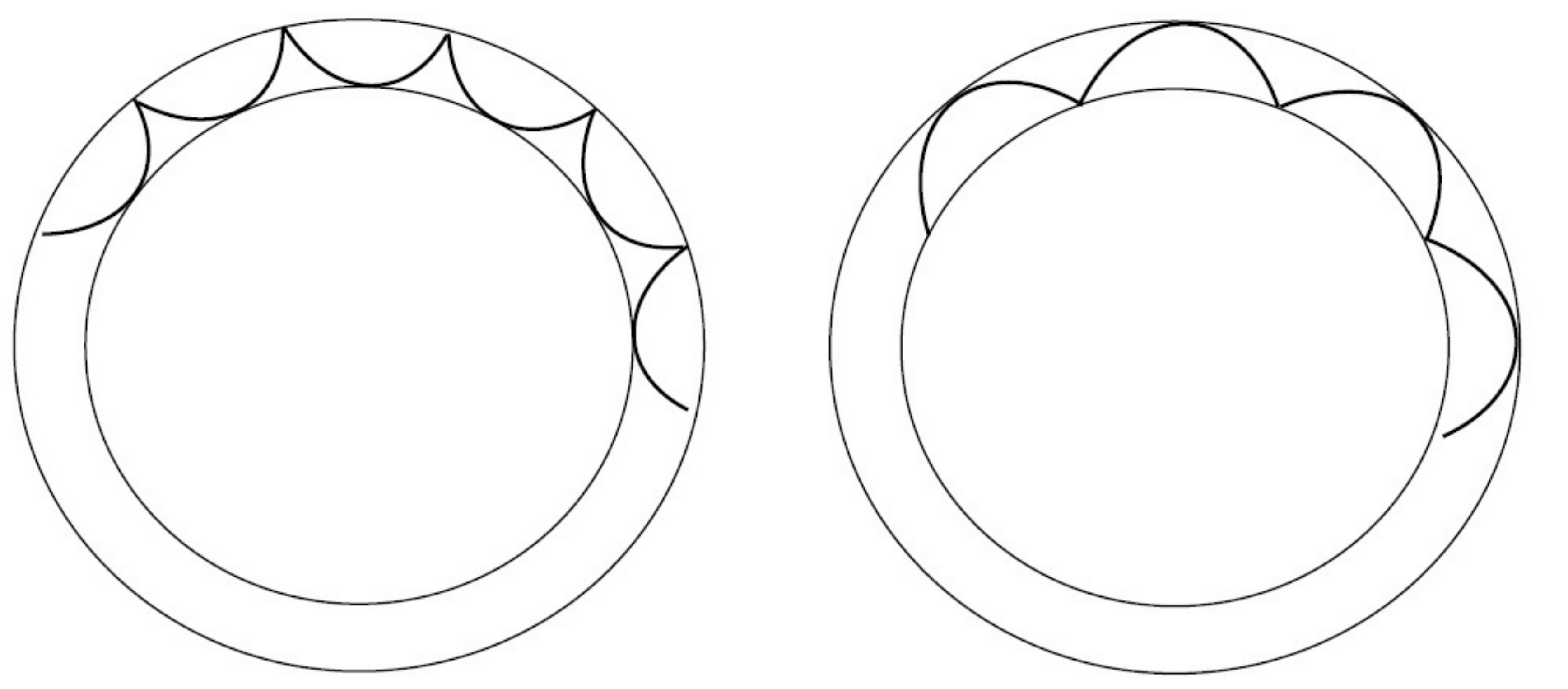}}
\caption{The cases $\hat{u}=u_2$ (left) and $\hat{u}=u_1$ (right)}\label{fig:slika3ab}}
\end{figure}

\begin{figure}[h]
{\centering
{\includegraphics[width=4.2cm]{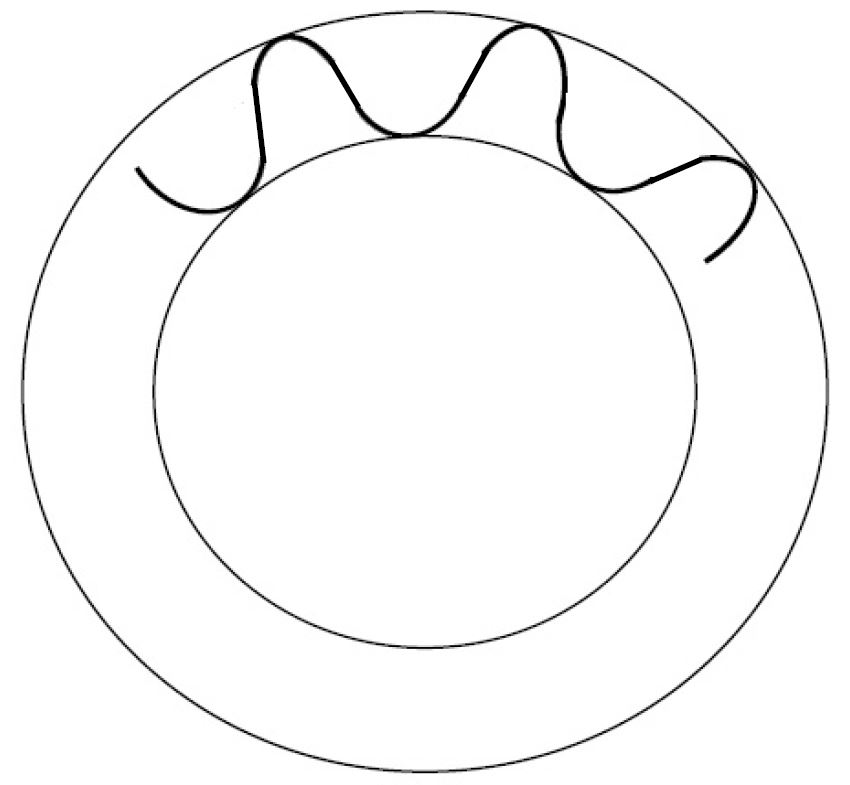}}
\caption{The case when $\hat{u}$ does not belong to the interval $[u_1,u_2]$ }\label{fig:slika4}}
\end{figure}

\begin{figure}[h]
{\centering
{\includegraphics[width=4.2cm]{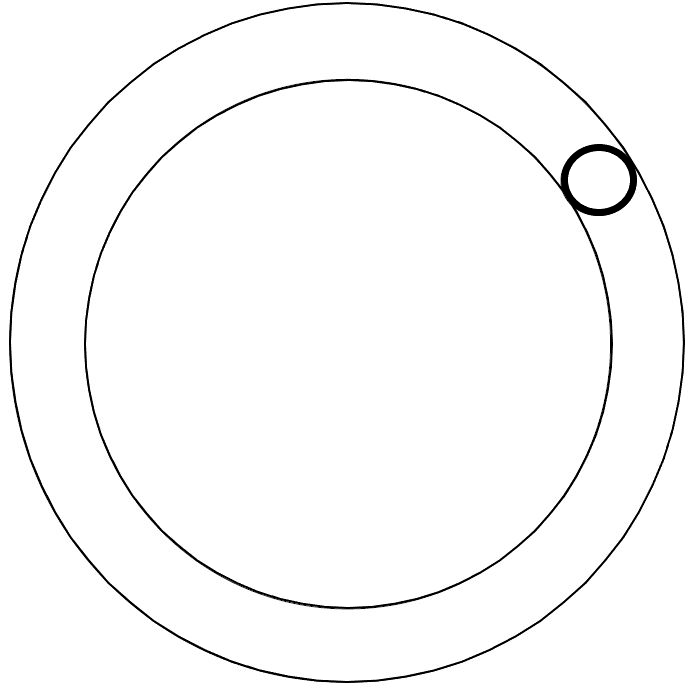}}
\caption{A case that does not correspond to a possible motion.}\label{fig:slika5}}
\end{figure}

\item[(iii)]
The case of a double root $u_1=u_2$ corresponds to the stationary motion
\begin{align*}
&\rho_1=const,    \qquad  \qquad   \qquad \varphi_{1}=\alpha_1t+\varphi_{10}, \\
&\rho_3=\sqrt{1-\rho_1^2}=const,   \quad \varphi_{3}=\alpha_3t+\varphi_{20},
\end{align*}
where
\[
\alpha_1=\frac{2\varepsilon^2\Phi_{12}-\kappa_{12}u_1}{2\tau u_1}=const, \quad
\alpha_3=\frac{2\varepsilon^2\Phi_{34}-\kappa_{34}(1-u_1)}{2\tau(1-u_1)}=const.
\]
From the equations of motion \eqref{eq:dim4} it follows that the constants $\alpha_1$ and $\alpha_3$ should satisfy:
\[
\kappa_{12}\alpha_{1}-\kappa_{34}\alpha_{3}+\tau(\alpha_1^2-\alpha_3^2)=0.
\]
Since the roots $u_1$ and $u_2$ of the polynomial $P_3(u)$ coincide, the discriminant of the polynomial $P_3(u)$ is equal to zero.

As we mentioned, in the case when $\dot{\varphi_1}$ changes the sign, the trajectories are presented in Figure \ref{fig:slika2ab}. In both cases, if we consider $\varphi_1$ as a function on the universal covering of $S^1$, it is an unbounded function of time: in one case it goes to plus infinity, while in the other case it goes to minus infinity, when $t$ goes to infinity.

We come to a natural question: \emph{is there any case when $\varphi_1$ is a bounded or, in particular, a periodic function of time?}

In other words, are there conditions which would generate  Figure \ref{fig:slika5} as a limit case of those presented in Figure \ref{fig:slika2ab}.  The answer is negative, as one concludes from the following:

\begin{prop} \label{neogranicena}
If $\kappa_{12}\ne0$, then $\varphi_1$ is unbounded function of time.
\end{prop}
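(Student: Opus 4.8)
The plan is to show that $\varphi_1$ accumulates a nonzero net increment over each period of the radial motion, so that it cannot return to a bounded range. By \eqref{int1}, along any solution
\[
\dot\varphi_1=\frac{2\varepsilon^2\Phi_{12}-\kappa_{12}\rho_1^2}{2\tau\rho_1^2}=\frac{\kappa_{12}(\hat u-u)}{2\tau u},\qquad u:=\rho_1^2,\quad \hat u:=\frac{2\varepsilon^2\Phi_{12}}{\kappa_{12}},
\]
where $\hat u$ is the value of $u$ at which $\dot\varphi_1$ vanishes. Thus $\dot\varphi_1$ is a function of $u$ alone. Since $u$ satisfies $\dot u^2=P_3(u)$ and (in the generic case) oscillates periodically between two consecutive roots $u_1<u_2$ of $P_3$ in $(0,1)$, the function $\dot\varphi_1(t)$ is periodic with the same period $T$, and $\varphi_1$ is bounded if and only if its increment per period vanishes. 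Writing the period integral with $dt=du/\sqrt{P_3(u)}$ and accounting for both the ascending and descending phases, this increment is
\[
\Delta\varphi_1=\oint\dot\varphi_1\,dt=\frac{\kappa_{12}}{\tau}\int_{u_1}^{u_2}\frac{\hat u-u}{u\sqrt{P_3(u)}}\,du ,
\]
so the statement reduces to proving that this elliptic integral is nonzero whenever $\kappa_{12}\ne0$.

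First I would dispose of the easy cases. If $\Phi_{12}=0$ then $\hat u=0$ and $\dot\varphi_1\equiv-\kappa_{12}/(2\tau)\ne0$, so $\varphi_1$ is a nonconstant linear function of time. If $\Phi_{12}\ne0$, then $P_3(0)=-4\varepsilon^4\Phi_{12}^2/\tau^2<0$ forces $u_1>0$, so the orbit stays in the compact interval $[u_1,u_2]\subset(0,1)$; if in addition $\hat u\notin(u_1,u_2)$, the factor $\hat u-u$ keeps a constant sign there, $\dot\varphi_1$ is bounded away from $0$, and $\varphi_1$ is strictly monotone, hence unbounded. The stationary double-root case is immediate from the formula for $\alpha_1$, and the non-generic case $\kappa_{12}^2=\kappa_{34}^2$ (where $P_3$ drops degree) is treated analogously with a quadratic. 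This leaves only the genuinely oscillatory case $u_1<\hat u<u_2$ of Figure \ref{fig:slika2ab}, in which $\dot\varphi_1$ changes sign and cancellation could a priori occur.

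For this remaining case the plan is to exhibit a strict sign of the integral. I would substitute $u=\hat u e^{\theta}$, which turns the numerator $\hat u-u$ into $-2\hat u e^{\theta/2}\sinh(\theta/2)$, an \emph{odd} function of $\theta$, and factor $P_3(\hat u e^{\theta})$ into hyperbolic-sine factors $\sinh\tfrac{\theta+a}{2}\,\sinh\tfrac{b-\theta}{2}\,\sinh\tfrac{c-\theta}{2}$ contributed by the three roots, where $u_1=\hat u e^{-a}$, $u_2=\hat u e^{b}$, $u_3=\hat u e^{c}$. Decomposing the remaining weight into its even and odd parts in $\theta$ and using that only the product of the odd numerator with the odd part of the weight survives the integration over $[-a,b]$ yields a definite sign. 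This closes cleanly in the symmetric situation $a=b$, i.e. $\hat u=\sqrt{u_1u_2}$, which is exactly what happens for $n=3$: there the third root is pinned at $u_3=1$ and the Vieta relation gives $u_1u_2=\hat u^2$. In that case the integral becomes
\[
\int_{-\Theta}^{\Theta}\frac{-2\sinh(\theta/2)}{\sqrt{2\,(1-\hat u e^{\theta})(\cosh\Theta-\cosh\theta)}}\,d\theta ,
\]
whose surviving even contribution is strictly negative, since the odd part of $(1-\hat u e^{\theta})^{-1/2}$ is positive for $\theta>0$; hence $\Delta\varphi_1\ne0$.

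The hard part will be the $\R^4$ (and higher) case, where the Vieta relation reads $u_1u_2u_3=\frac{\kappa_{12}^2}{\kappa_{12}^2-\kappa_{34}^2}\,\hat u^2$ instead of $u_1u_2=\hat u^2$, so that in general $a\ne b$: the interval is no longer symmetric about $\hat u$ and the clean odd/even cancellation is spoiled by the factor $\sinh\tfrac{c-\theta}{2}$ of the third root together with the residual weight $e^{-\theta/4}$. The essential obstacle is therefore to show that the contributions on the two sides of $\hat u$ never exactly cancel. I expect two viable routes: (a) push the even/odd decomposition through and show that the monotone asymmetric factor cannot overturn the definite sign established in the symmetric model; or (b) read $\Delta\varphi_1$ as the real period of the third-kind differential $\omega=\frac{2\varepsilon^2\Phi_{12}-\kappa_{12}u}{2\tau u}\,\frac{du}{\sqrt{P_3}}$, which by the value of $P_3(0)$ has nonzero, purely imaginary residues $\mp i/2$ at the two points lying over $u=0$, and evaluate it through the explicit Weierstrass parametrization of Theorem \ref{th:Integration}, obtaining a combination of a $\wp$-period and a $\zeta$-quasi-period whose nonvanishing follows from $\wp'\ne0$ at the pole and the Legendre relation. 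Securing this nonvanishing is the single crux; everything else is bookkeeping.
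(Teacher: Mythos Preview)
The paper's argument is two sentences: it writes $\dot\varphi_1=\frac{\varepsilon^2\Phi_{12}}{\tau u}-\frac{\kappa_{12}}{2\tau}$, observes that the first summand is periodic in $t$ (since $u$ is) while the second is a nonzero constant, and concludes that $\varphi_1$ is unbounded. In other words, the paper does \emph{not} engage with the cancellation issue you raise; taken literally its argument is incomplete, since a periodic function plus a nonzero constant can certainly have zero time-average (e.g.\ $\cos^2 t-\tfrac12$), and that is exactly the question of whether $\Delta\varphi_1=0$. So your diagnosis that the case $u_1<\hat u<u_2$ requires genuine work is more careful than what the paper actually writes.

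That said, your proposal does not close that case either. The $n=3$ sketch exploiting the Vieta symmetry $u_1u_2=\hat u^2$ is plausible, but for $n\ge4$ you only list two strategies without executing them. Route (a), the odd/even decomposition after $u=\hat u e^{\theta}$, is stated at precisely the level of vagueness where the difficulty lives (``cannot overturn the definite sign''): once $a\ne b$ the asymmetric weight mixes even and odd parts and there is no evident monotonicity forcing a sign. Route (b) is the more promising line --- $\Delta\varphi_1$ is a real period of a third-kind differential on $w^2=P_3(u)$ with simple poles over $u=0$ and purely imaginary residues, and one can hope to express it through the Weierstrass $\zeta$-function and prove non-vanishing --- but this must actually be carried out, and the Legendre relation alone will not finish it without a careful reality argument about the location of the poles relative to the real oval $[u_1,u_2]$. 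As things stand, both the paper and your proposal assert the crux rather than prove it; you have simply localized the gap more honestly.
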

\begin{proof} From \eqref{int1} we have
\[
\dot{\varphi}_1=\frac{2\varepsilon^2\Phi_{12}}{2\tau u}-\frac{\kappa_{12}}{2\tau}.
\]
Since $\kappa_{12}\ne 0$, the second  addend is a constant, while the first one is periodic in time. So $\varphi_1$ is unbounded function of time.
\end{proof}
\end{itemize}

{\bf Case B.}
In the  case  $\kappa_{34}=\pm\kappa_{12}$, the coefficient of $u^3$ in the polynomial $P_{3}(u)$ is zero. Hence $P_3(u)$ is at most a quadratic polynomial in $u$. Qualitative pictures of the trajectories are the same as before. They  are presented in Figures \ref{fig:slika2ab}, \ref{fig:slika3ab}, and \ref{fig:slika4} with an important difference: now the solutions are not elliptic functions of time.

In the case when $u_1=u_2$, the discriminant of the polynomial $P_3$ vanishes. This leads to the
stationary motion
\[
\rho_1=const, \quad \rho_3=\sqrt{1-\rho_1^2}=const, \quad \varphi_{1}=\alpha_1t+\varphi_{10}, \quad \varphi_{3}=\alpha_3t+\varphi_{20}.
\]

As in the case A, the constants $\alpha_1$ and $\alpha_3$ are not independent. If $\kappa_{12}=\kappa_{34}$ we have $\alpha_1=\alpha_3$, or $\alpha_1+\alpha_3={\kappa_{12}}/{\tau}$. When $\kappa_{34}=-\kappa_{12}$, then $\alpha_{1}=-\alpha_3$ or $\alpha_1+\alpha_3={\kappa_{12}}/{\tau}$.

\begin{rem}
 Let us remark  that in the dynamics of the Lagrange top in absence of gravity  there exist a situation
 similar to the one mentioned before Proposition \ref{neogranicena} (see Figure \ref{fig:slika5}). This system can also be seen as a symmetric Euler top.
 There is a stationary motion about the axis of symmetry that is in a non-vertical position. In other  words, the  system of equations admits the following particular solution: the nutation angle $\theta=\theta_0\in (0,\pi/2)$ is a constant different from zero, the precession angle $\varphi$ is constant, and the angle of intrinsic rotation $\psi$ is a linear function of time. If in an initial moment of time one chooses $\theta$ close to $\theta_0$, then the nutation
  and precession will be periodic functions  of time, and the axis of symmetry will uniformly rotate about the vector of angular momentum, which is fixed in the space. See \cite{Ar} for more details.

 What is going on in with the Lagrange top with the presence of  gravity? Can the precession angle be a periodic function on the universal covering of $S^1$?

 It may look like  the mentioned stationary solution exists in the presence of gravity as well. The three first integrals (the energy integral, the projection of the angular momentum on the vertical axis, the projection of the angular momentum on the axis of symmetry) are constant functions on the solution. However, from the equations of motion one gets that the stationary motion about the axis of symmetry is possible only when $\theta=0$ or $\theta=\pi$. Based on that, one can speculate that a solution of the Lagrange top with the presence of  gravity having the precession angle as a bounded or periodic function of time does not exist. A rigorous proof of that observation was provided by Hadamard in 1895 \cite{H1895}. Although the Lagrange top was  widely studied    since then, with  dozens of volumes devoted to it, this Hadamard's result is very hard to find. A nice exception is a recent short note \cite{ZS2018}.
\end{rem}

\subsection*{Acknowledgements}
We are very grateful to the referees for valuable remarks that significantly
helped us to improve the exposition, and in particular for providing us an example which led to Remark \ref{rem:ref}. We thank Viswanath Ramakrishna for reading the manuscript and providing a feedback. This research has been supported by the Project no. 7744592 MEGIC ”Integrability
and Extremal Problems in Mechanics, Geometry and Combinatorics” of the Science Fund
of Serbia, Mathematical Institute of the Serbian Academy of Sciences and Arts and the
Ministry for Education, Science, and Technological Development of Serbia, and the Simons
Foundation grant no. 854861.


\end{document}